\documentclass[journal,onecolumn]{IEEEtran}

\usepackage{amsthm}
\usepackage{makecell}

\usepackage{cite}

\ifCLASSINFOpdf
 \usepackage[pdftex]{graphicx}
\else
\fi
\usepackage{url}
\usepackage{graphicx}
\usepackage{mathrsfs}
\usepackage{amsfonts}
\usepackage{amsmath}
\usepackage{physics}
\usepackage{graphicx}
\usepackage{color}
\usepackage{multicol}
\usepackage{setspace}
\usepackage[misc]{ifsym}
\usepackage{booktabs}
\usepackage{cite}
\usepackage{amsmath}
\usepackage{amssymb}
\usepackage{amsthm}  

\usepackage{mathtools}

\theoremstyle{plain}
\newtheorem{theorem}{Theorem}[section]
\newtheorem{corollary}[theorem]{Corollary}
\newtheorem{lemma}[theorem]{Lemma}
\theoremstyle{remark}
\newtheorem{remark}[theorem]{Remark}

\newtheorem{definition}{Definition}
\usepackage[utf8]{inputenc}
\usepackage{amsmath,amssymb,amsthm}
\DeclareMathOperator{\wt}{wt}
\newtheorem{proposition}[theorem]{Proposition}
\newtheorem{example}[theorem]{Example}
\theoremstyle{remark}
\usepackage{amsmath,amssymb}

\newtheorem{remark}[theorem]{Remark}
\DeclareMathOperator{\Rank}{Rank}
\DeclareMathOperator{\im}{Im}
\DeclareMathOperator{\Ker}{Ker}
\newcommand{\F}{\mathbb{F}} 
\newcommand{\herm}[1]{#1^\dagger} 

\usepackage{algorithmic}
\begin{document}

\title{Quantum Codes from $r$-Nearly Self-Orthogonal Linear Codes via Jordan Canonical Form over $\mathbb{F}_{q^2}$}
%
%
%

\author{Liangdong~Lu,
          Ruipan~Yang, Yang~Liu, Qiang~Fu
        and~Guanmin~Guo
\thanks{L. Lu,R. Yang, Y. Liu, Q.Fu and G.Guo was with the Department of Basic Science, 
Air Force Engineering University, Xi'an, Shaanxi,  P. R. China e-mail:
 (see $kelinglv@163.com,yangruipan@aliyun.com, liu\_yang10@163.com, 
 qiangfu2024@163.com, gmguo\_xjtukgd@yeah.net$).}
\thanks{Manuscript received March 19, 2026; revised  ~ ~, 2026.}}

%
%

\markboth{IEEE Transactions on information theory,~Vol.~, No.~, August~2026}%
{Lu \MakeLowercase{\textit{et al.}}: Extending Construction of Quantum Codes via Jordan Canonical Form over $\mathbb{F}_{q^2}$}
%



\maketitle

\begin{abstract}
We introduce a Jordan-canonical-form framework for constructing $q$-ary quantum stabilizer codes from arbitrary classical linear codes over $\F_{q^2}$. The framework does not require the classical linear code $\mathcal{C}$ to satisfy the dual-containing condition (i.e., self-orthogonality). Given a classical code $\mathcal{C}=[n,k,d]_{q^2}$ with parity-check matrix $H$, we measure the obstruction to Hermitian self-orthogonality by the rank $r=(n-k)-\dim_{\F_{q^2}}(\mathcal{C}^{\perp_h}\cap \mathcal{C})$. The ingredient code $\mathcal{C}$ is $r$-nearly dual containing, or, equivalently, $\mathcal{C}^{\perp_h}$ is $r$-nearly self-orthogonal, by which we mean that $r=\Rank(HH^{\dagger})=\dim_{\F_{q^2}}(\mathcal{C}^{\perp_h})-\dim_{\F_{q^2}}(\mathcal{C}^{\perp_h}\cap \mathcal{C})$.  By systematically reducing the rank of the Hermitian inner-product matrix $A=HH^{\dagger}$ through rank-one perturbations along the Jordan basis $W=P^{-1}$ of the decomposition $A=PJ_AP^{-1}$, we construct an explicit Hermitian self-orthogonal code $\mathcal{C}_{\mathrm{so}}=[n+r,n-k]_{q^2}$. A sufficient distance-preservation criterion guarantees that the resulting $q$-ary quantum code has parameters $[[n+r,2k-n+r,\geq d]]_q$. Applying this construction to classical codes produces several record quantum codes that improve or supplement the best-known parameters in Grassl's tables.

\end{abstract}

\begin{IEEEkeywords}
Quantum codes, Construction X, self-orthogonal codes, Jordan canonical form.
\end{IEEEkeywords}

%
\IEEEpeerreviewmaketitle

\section{Introduction}
%
%
%
%
\IEEEPARstart{Q}{uantum} error-correcting codes (QECCs) are essential for protecting quantum information from decoherence and operational errors, and therefore play a central role in the realization of large-scale quantum computation and reliable quantum communication. The theory of QECCs has developed rapidly since the seminal works of Shor \cite{shor1995scheme} and Steane \cite{Steane1996}. The stabilizer formalism, established by Calderbank, Rains, Shor, and Sloane \cite{Calderbank1998}, provides a powerful bridge between quantum codes and classical codes over finite fields. In particular, binary stabilizer codes can be obtained from classical quaternary linear codes that are self-orthogonal with respect to the Hermitian inner product. This connection was extended to the nonbinary setting in \cite{Rains1999,Ashikhmin2001,Ketkar2006}, leading to the Hermitian construction of $q$-ary quantum stabilizer codes from Hermitian self-orthogonal codes over $\F_{q^2}$.

The Hermitian construction requires the classical ingredient code $\mathcal{C}$ to satisfy the dual-containing condition $\mathcal{C}^{\perp_h} \subseteq \mathcal{C}$, or equivalently, to be Hermitian self-orthogonal. Many elegant families of quantum codes have been obtained from algebraic codes that meet this requirement, including cyclic, constacyclic, and quasi-cyclic codes \cite{Kai2014ConstacyclicCA,Li2019NewQC,Guardia2012OnTC,feng2004finite,galindo2018quasi,lv2019new,Guan2022QC,daskalov2003new,Sguin2004ACO,Ling2005OnTA,Barbier,Aydin2001TheSO,Grassl2020AlgebraicQC}. However, the dual-containing condition is rather restrictive and excludes many good classical codes from being used directly. Consequently, relaxing this condition has been an important research direction in quantum coding theory.

A significant step in this direction was taken by Lison\v{e}k and Singh \cite{Lisonek2014}, who proposed a quantum version of Construction X that produces binary stabilizer codes from arbitrary quaternary linear codes, provided that the code is ``nearly'' self-orthogonal, i.e., the dimension of the quotient $\mathcal{C}^{\perp_h}/(\mathcal{C} \cap \mathcal{C}^{\perp_h})$ is positive but small. Their idea was later extended to general finite fields and refined by Degwekar \textit{et al.} \cite{Degwekar}, Ezerman \textit{et al.} \cite{ezerman2019good,Ezerman}, and Hu and Liu \cite{Hu2023QuantumEC}. In a related direction, Wang \textit{et al.} \cite{Wang2020} introduced a dual-extension method that constructs a Hermitian dual-containing code from an arbitrary quaternary linear code by appending $r$ redundant coordinates, where $r=\Rank(HH^\dagger)$ is the rank of the Hermitian inner-product matrix of a parity-check matrix $H$. Explicit constructions from quasi-cyclic codes were further investigated by Lv \textit{et al.} \cite{lv2020explicit}, who derived stabilizer codes with good parameters by exploiting the symplectic dual structure of QC codes.

Despite these advances, most existing extension methods either rely on specific algebraic structures of the underlying classical codes or require ad hoc choices of extension vectors. A unified and explicit framework that works for arbitrary linear codes over $\F_{q^2}$ and systematically selects extension vectors is still desirable. In this paper, we propose such a framework by using the Jordan canonical form of the Hermitian inner-product matrix $A = HH^\dagger$. The Jordan decomposition $A = P J_A P^{-1}$ provides a natural set of extension vectors, namely the columns of $P^{-1}$, which form a Jordan basis of $A$. By successively reducing the rank of $A$ via rank-one perturbations along this basis, we construct a Hermitian self-orthogonal code $\mathcal{C}_{\mathrm{so}} = [n+r, n-k]_{q^2}$ from any $r$-nearly self-orthogonal code $\mathcal{C} = [n,k,d]_{q^2}$, where $r = \Rank(HH^\dagger) = (n-k) - \dim_{\F_{q^2}}(\mathcal{C}^{\perp_h}\cap \mathcal{C})$. Applying the Hermitian construction then yields a $q$-ary quantum code with parameters $[[n+r, 2k-n+r, \geq d]]_q$. This approach is systematic and does not require the classical code to possess any special algebraic structure beyond linearity.

The remainder of this paper is organized as follows. Section~II reviews the necessary notation and preliminary results on Hermitian matrices and linear codes over finite fields. Section~III presents the general extending construction based on the Jordan canonical form. Section~IV provides several examples of new quantum codes. Section~V concludes the paper.


 
\section{PRELIMINARIES AND NOTATION}

We first introduce notation and fundamental properties of Hermitian matrices, cyclic codes, quasi-cyclic codes, and linear codes over finite fields\cite{Horn2013,Macwilliams,Huffman,RLi}.

Let $q$ be a prime power, and $\F_q$ denote the finite field with $q$ elements. The degree-2 extension field $\F_{q^2}$ is equipped with a Hermitian conjugate operation induced by the Frobenius automorphism:
\begin{itemize}
    \item For any element $x \in \F_{q^2}$, its Hermitian conjugate is defined as $\herm{x} = x^q$. An element satisfies $\herm{x} = x$ if and only if $x \in \F_q$.
    \item For any matrix $M = (m_{ij}) \in \F_{q^2}^{m \times n}$, its Hermitian conjugate transpose is $\herm{M} = (\herm{m_{ji}}) \in \F_{q^2}^{n \times m}$. A square matrix $A \in \F_{q^2}^{m \times m}$ is \textit{Hermitian} if $\herm{A} = A$.
\end{itemize}
A core property of Hermitian matrices over $\F_{q^2}$ is that all eigenvalues lie in $\F_q$ (since any eigenvalue $\lambda$ satisfies $\lambda^q = \lambda$ from the Hermitian constraint). Thus, the Jordan canonical form $J_A$ of a Hermitian matrix always exists over $\F_{q^2}$, with all diagonal entries in $\F_q$.

The norm map $N: \F_{q^2} \to \F_q$, a critical tool for constructing rank-1 perturbations, is defined as:
\begin{equation}
    N(x) = x \cdot \herm{x} = x^{q+1}, \quad \forall x \in \F_{q^2}.
\end{equation}
This map is a surjective group homomorphism from the multiplicative group $\F_{q^2}^*$ to $\F_q^*$: for any target value $c \in \F_q^*$, there exists $t \in \F_{q^2}^*$ such that $N(t) = c$.

For  $M$ over $\F_{q^2}$, we denote $\Rank(M)$ as its rank, $\im(M)$ as its column space (range), and $\Ker(M)$ as its null space (kernel). We use two standard results throughout the proof:
\begin{enumerate}
    \item \textit{Sylvester's rank inequality for rank-1 perturbations}: For any matrix $M \in \F_{q^2}^{m \times m}$ and non-zero vector $\boldsymbol{u} \in \F_{q^2}^m$,
    \begin{equation}
        \Rank(M) - 1 \leq \Rank(M + \boldsymbol{u}\herm{\boldsymbol{u}}) \leq \Rank(M) + 1.
    \end{equation}
    \item \textit{Jordan basis property}: If $A = P J_A P^{-1}$ where $J_A$ is the Jordan canonical form of $A$, the columns of $P^{-1}$ form a complete Jordan basis of $A$, consisting of eigenvectors and generalized eigenvectors spanning $\F_{q^2}^m$.
\end{enumerate}

 A classical linear code $\mathcal{C}$ of length $n$ over $\F_{q^{2}}$ is a non-empty subset of $\F_{q^{2}}$, denoted as $[n,k,d]_{q^{2}}$.
Suppose that $\vec{u}=(u_{0},\ldots,u_{n-1})$ and $\vec{v}=(v_{0},\ldots,v_{n-1})$ are vectors in $\F_{q^{2}}$. We define the Hermitian inner product as
	$\langle\vec{u}, \vec{v}\rangle_{h}=\sum_{i=0}^{n-1} u_{i} v^q_{i}$.
The weight of $\vec{u}$ is the number of nonzero coordinates in $\vec{u}$, which is denoted by $wt(\vec{u})$.
The minimum non-zero Hamming weight of $\mathcal{C}$ is $d(\mathcal{C})=\min \left\{wt(\vec{u}) \mid \vec{u} \in \mathcal{C}, \vec{u} \neq 0\right\}.$
The minimum distance $d$ is equal to the minimum non-zero Hamming weight for linear codes.
Let $\mathcal{C}^{\perp_{h}}=\{\vec{v} \in \F_{q^{2}}^{n} \mid\langle\vec{u}, \vec{v}\rangle_{h}=0, \forall \vec{u} \in \mathcal{C}\}$ be the Hermitian dual code of $\mathcal{C}$. 
If $\mathcal{C} \subseteq \mathcal{C}^{\perp_{h}}$, then we say that $\mathcal{C}$ is a Hermitian self-orthogonal code and that $\mathcal{C}^{\perp_{h}}$ is a Hermitian dual-containing code.

For a matrix $B \in  \F_{q^2}^{m \times r}$ with full column rank, a matrix $D \in \F_{q^2}^{m \times r}$ is called a \emph{left inverse} of $B$ if $D^{T}B = I_r$, where $I_r$ denotes the $r \times r$ identity matrix. The columns of $D$ form a biorthogonal basis to the columns of $B$, i.e., $d_i^{T}b_j = \delta_{ij}$ for all $1 \leq i,j \leq r$, where $\delta_{ij}$ is the Kronecker delta.

\subsection{Quasi-cyclic code }
 We define the quotient ring  $\mathcal{R}=F_{q^{2}}[x] /\left\langle x^{n}-1\right\rangle$. 
 If $\mathcal{C}$ is generated by a monic divisor 
 $g(x)$  of  $x^{n}-1$, i.e., $\mathcal{C}=\langle g(x)\rangle$ and $g(x)\mid x^n-1$, 
 then $g(x)$ is called generator polynomial of $\mathcal{C}$.  For any $c=\left(c_{0}, c_{1}, \ldots, c_{n-1}\right)\in \mathcal{C}$, 
 we can say $\mathcal{C}$ is a cyclic code if $c^\prime =\left(c_{n-1}, c_{0}, \ldots, c_{n-2}\right)\in \mathcal{C}$.
 
Let $\Omega_{n}=\{0,1,\ldots,n-1\}$, and let $\gamma$ be a primitive $n$-th root of unity in some extension field of $\F_{q^{2}}$, where $n$ is odd.
 The defining set $T$ of $\mathcal{C}=\langle g(x)\rangle$  
 is denoted as $T=\left\{i \in  \Omega_{n} \mid g\left(\gamma  ^{i}\right)=0\right\}$.

Let $i$ be an integer with $0\leq i < n$, the set $C_{i}=\{i, q^{2}i,
(q^{2})^{2}i, \cdots , (q^{2})^{k-1}i \}$ (mod  $n$) is called the
$q^{2}$-cyclotomic  coset
modulo $n$ that contains $i$, where $k$
is the smallest positive integer such that $(q^{2})^{k}i$ $\equiv i$ (mod
$n$). For each $i \in \Omega_{n}$, a cyclotomic coset $C_{i}$ is {\it skew symmetric} if $n-qi
$(mod $n$) $\in C_{i}$, and is {\it skew asymmetric} otherwise.
Skew asymmetric cosets $C_{i}$ and $C_{n-qi}$ come in pair, we use
$(C_{i},C_{n-qi})$ to denote such a pair. Let $\mathcal {C}$ be a cyclic code with a defining set $T =
\bigcup_{i \in \Omega_{n}} C_{i}$. Denoting $T^{-q}=\{n-qx | x\in T \}$, then
we can deduce that the  defining set of $\mathcal {C}$$^{\bot _{h}}$
is $T^{\perp _{h}} =$$ \mathbb{Z}_{n} $$\backslash T^{-q}$. If  $T \cap T^{-q}=\emptyset$, then  $\mathcal{C}^{\perp_{h}} \subseteq \mathcal{C}$, i.e., $g(x)\mid g^{\perp_h}(x)$.

$\mathcal{C}$ is said to be quasi-cyclic of  index $l$ or $l$-quasi-cyclic  if a cyclic shift of any codeword by $l$ positions is also a codeword in $\mathcal{C}$. The length $n$ of a quasi-cyclic code $\mathcal{C}$ is a multiple of $l$, i.e., $n=ml$. The generator matrix of quasi-cyclic code is composed of circulant matrices.
It means that a 1-generator quasi-cyclic code can be transformed into an equivalent code with a generator matrix
$$G = (A_{0}, A_{1}, A_{2}, \ldots,  A_{l})$$
where $A_{i}, i = 0,1,\ldots,l$ is   defined as $m\times m$ circulant matrix
$$
A=\left(\begin{array}{ccccc}
	a_{0} & a_{1} & a_{2} & \ldots & a_{m-1} \\
	a_{m-1} & a_{0} & a_{1} & \ldots & a_{m-2} \\
	a_{m-2} & a_{m-1} & a_{0}  & \ldots & a_{m-3} \\
	\vdots & \vdots & \vdots & \vdots & \vdots \\
	a_{1} & a_{2} & a_{3} & \ldots & a_{0}
\end{array}\right).$$

 With a suitable permutation of coordinates, the generator matrix of a 2-generator quasi-cyclic code 
 with  index $l$ can be transformed into the following form.
$$
G=\left(\begin{array}{ccccccc}
	A_{1,1} & A_{1,2}& A_{1,3}& \cdots & A_{1,l} \\
	A_{2,1} & A_{2,2}& A_{2,3}& \cdots & A_{2,l}\\
\end{array}\right),$$
where  $A_{i, j}$  is circulant matrices determined by polynomial  $a_{i, j}(x)$, where  $1 \leq i \leq 2$  and  $1 \leq j \leq l$. 

Let $g(x)=g_{0}+g_{1} x+g_{2} x+\cdots+g_{n-1} x^{n-1} \in \mathcal{R}$ and $[g(x)]=[g_{0},g_{1},g_{2},\cdots,g_{n-1}]$  represents vectors in  $F_{4}^{n}$  determined by the coefficient of $g(x)$ in an ascending order.

Let $g(x),h(x),\nu(x)$ be monic polynomials in $F_{4}[x]$ whose degree is less than $n$ 
and such that both $g(x),h(x)$ divide $x^{n}-1$. Any cyclic codes of length $n$ and dimension $n-deg(g)$ can be generated by  $\langle g(x)\rangle$  . 
We consider the check polynomial $h(x)$ such that $g(x)\cdot h(x)=x^{n}-1$. Attached to a polymial $r(x)=r_{0}+r_{1}x+\dots+r_{m}x^{m}$ of degree 
$m<n$, we can define  $r^{[2]}(x)=r_{0}^{2}+r_{1}^{2}x+\dots+r_{m}^{2}x^{m}$.

It is well-known that $\langle g^{\perp _{h}}(x)\rangle$ generates the Hermitian dual code of cyclic code 
which is generated by $ \langle g(x)\rangle$.

\subsection{ Quantum codes}

A $q$-ary quantum error-correcting codes $\mathcal{Q}$ of length $n$ is a
 $K$-dimensional subspace of $q^n$-dimensional Hilbert space $(\mathbb{C}^{q})^{\otimes n}$, where $\mathbb{C}$ represents complex field and $(\mathbb{C}^{2})^{\otimes n}$ is the $n$-fold tensor power of $\mathbb{C}^q$. A binary quantum  codes $\mathcal{Q}$ can be denoted as $[[n,k,d]]_{q}$, where $k=log_{q}K$.

Hermitian construction is one of the famous construction  methods of quantum code, which established a relationship between quantum codes and classical self-orthogonal codes under the Hermitian inner product. 
Quantum  codes can be derived from other quantum codes by the following propagation rules.

\begin{corollary}\label{construction}
	( \cite{Ketkar2006}, Hermitian construction) 
	If there exists a  $[n, k, d]_{4}$ code  $\mathcal{C}$  such that $ \mathcal{C}^{\perp_{h}} \subset \mathcal{C}$, then there exists an  $[[n, 2 k-n, d]]_{2}$ quantum code that is pure to $d$. 
\end{corollary}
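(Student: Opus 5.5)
The plan is to reduce the Hermitian construction to the stabilizer (additive code) theorem of Calderbank, Rains, Shor and Sloane \cite{Calderbank1998}. That theorem says: if $S\subseteq\F_4^n$ is an additive code with $|S|=2^{n-k'}$ that is self-orthogonal under the trace-alternating form
\[
\langle \vec{u},\vec{v}\rangle_a=\operatorname{Tr}_{\F_4/\F_2}\Bigl(\sum_{i} \frac{u_i v_i^2-u_i^2 v_i}{\omega-\omega^2}\Bigr),
\]
then there is an $[[n,k',d']]_2$ code. Here $d'$ is the minimum weight of $S^{\perp_a}\setminus S$, and the code is pure to the minimum weight of $S^{\perp_a}$. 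Everything then comes down to translating the $\F_4$-linear Hermitian data into this additive language.

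I would take $S=\mathcal{C}^{\perp_h}$, which is $\F_4$-linear of dimension $n-k$. Since $\mathcal{C}^{\perp_h}\subseteq\mathcal{C}=(\mathcal{C}^{\perp_h})^{\perp_h}$, the code $S$ is Hermitian self-orthogonal.

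\textbf{Step 1: Hermitian orthogonality implies trace-alternating orthogonality.} I will check that $\langle \vec{u},\vec{v}\rangle_a=\operatorname{Tr}\bigl(\langle \vec{u},\vec{v}\rangle_h/(\omega-\omega^2)\bigr)$, using $v_i^q=v_i^2$ and the fact that $u_i^2v_i$ is the conjugate of $u_iv_i^2$. Hence $S\subseteq S^{\perp_h}$ gives $S\subseteq S^{\perp_a}$.

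\textbf{Step 2: for $\F_4$-linear $S$, $S^{\perp_a}=S^{\perp_h}$.} The inclusion $S^{\perp_h}\subseteq S^{\perp_a}$ follows from Step 1. For the reverse inclusion, take $\vec{v}\in S^{\perp_a}$ and $\vec{u}\in S$. Linearity of $S$ gives $\operatorname{Tr}\bigl(\lambda\langle \vec{u},\vec{v}\rangle_h/(\omega-\omega^2)\bigr)=0$ for all $\lambda\in\F_4$, and nondegeneracy of the trace then forces $\langle \vec{u},\vec{v}\rangle_h=0$. Alternatively, one can argue by cardinality: both codes have size $4^{k}$.

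\textbf{Step 3: parameters.} The code $S$ has $4^{n-k}=2^{2(n-k)}=2^{n-(2k-n)}$ elements, so $k'=2k-n$. Also $S^{\perp_a}=\mathcal{C}$, whose minimum weight is $d$. Therefore $\min\wt(\mathcal{C}\setminus\mathcal{C}^{\perp_h})\ge d$, and the resulting quantum code is pure to $d$. This yields an $[[n,2k-n,d]]_2$ code.

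The main obstacle is Step 2, the identification of the additive symplectic dual with the Hermitian dual. It requires correctly normalizing the trace form (the factor $\omega-\omega^2$) and using $\F_4$-linearity. The rest is bookkeeping of sizes and weights, together with the cited stabilizer theorem. The same argument works verbatim over $\F_{q^2}$, using $\operatorname{Tr}_{\F_{q^2}/\F_q}$ with a suitable normalizing scalar, and gives $[[n,2k-n,d]]_q$.
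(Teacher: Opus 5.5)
Your route is the standard proof behind the cited result. The paper gives no argument of its own and simply invokes \cite{Ketkar2006}, whose proof is exactly this: reduce to the stabilizer theorem, identify $S^{\perp_a}$ with $S^{\perp_h}$ for $\F_4$-linear $S$, count $|S|=2^{n-(2k-n)}$, and read off purity from $S^{\perp_a}=\mathcal{C}$. Steps 1--3 are the right steps and are correct for the right form. However, the trace-alternating form you actually display is wrong, and as written it breaks the argument at exactly the point you flagged as delicate.

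\textbf{Why the displayed form fails.} In characteristic $2$ one has $u_iv_i^2-u_i^2v_i=u_iv_i^2+(u_iv_i^2)^2=\operatorname{Tr}_{\F_4/\F_2}(u_iv_i^2)\in\F_2$, and $\omega-\omega^2=\omega+\omega^2=1$. So the argument of your outer trace already lies in $\F_2$, and $\operatorname{Tr}_{\F_4/\F_2}(a)=a+a^2=0$ for every $a\in\F_2$. Your $\langle\cdot,\cdot\rangle_a$ is therefore identically zero. For example, $u=(\omega)$, $v=(1)$ gives $\operatorname{Tr}(1)=0$, whereas the correct symplectic value is $\operatorname{Tr}(\omega)=1$. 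With this definition the identity claimed in Step 1 is false and the form is degenerate, so $S^{\perp_a}=\F_4^n$. Both versions of Step 2 then fail; note that the cardinality count $|S^{\perp_a}|=4^n/|S|$ also needs nondegeneracy.

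\textbf{The fix.} With $x=\langle u,v\rangle_h$ and $\delta=\beta^{2q}-\beta^2$, both $x-x^q$ and $\delta$ are negated by $y\mapsto y^q$. Hence the quotient $(x-x^q)/\delta=\operatorname{Tr}_{\F_{q^2}/\F_q}(x/\delta)$ already lies in $\F_q$. The outer trace must be $\operatorname{Tr}_{\F_q/\F_p}$, which is the identity when $q=2$. So take
\[
\langle u,v\rangle_a=\sum_i(u_iv_i^2+u_i^2v_i)=\operatorname{Tr}_{\F_4/\F_2}(\langle u,v\rangle_h),
\]
which is the trace inner product of \cite{Calderbank1998}. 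It is alternating and nondegenerate, your Step 1 identity then holds, and Steps 2--3 go through verbatim.

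\textbf{General $q$.} The same care is needed in your closing remark: the correct form is $\operatorname{Tr}_{\F_{q^2}/\F_p}\bigl(\langle u,v\rangle_h/\delta\bigr)$. Stacking a further $\operatorname{Tr}_{\F_{q^2}/\F_q}$ on the skew quotient multiplies it by $2$. That is harmless in odd characteristic but fatal over $\F_4$, which is precisely the case of this statement.
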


Quantum  codes can be derived by the following propagation rules. It will be useful later.

\begin{proposition}\label{DerivativeCodes}\label{propagation_rule}
	(\cite{Calderbank1998}, propagation rules) If there exists an  $[[n, k, d]]$ quantum  codes, then the following quantum  codes exist.\\
	(1)  $[[n, k-1, d]]$ for $k\ge 1$ (by subcode construction);\\
	(2)  $[[n+1, k, d]]$ for $k>0$ (by lengthening);\\
	(3)  $[[n-1, k, d-1]]$ for $k>0$ (by puncturing)\\
	(4)  $[[n-1, k+1, d-1]]$ for $n>2$ if the code is pure.  
\end{proposition}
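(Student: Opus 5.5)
The plan is to work entirely in the symplectic (additive-code) picture that underlies Corollary~\ref{construction}. An $[[n,k,d]]_q$ stabilizer code corresponds to an additive code $S\subseteq\F_q^{2n}$ with $|S|=q^{n-k}$ and $S\subseteq S^{\perp_s}$, where $\perp_s$ is the symplectic dual. Its distance is the minimum symplectic weight of $S^{\perp_s}\setminus S$, or of $S^{\perp_s}\setminus\{0\}$ when the code is pure. Each of the four rules is then a statement about modifying the pair $(S,S^{\perp_s})$, and I will prove them in the order (1), (2), (4), (3).

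For (1), choose any $v\in S^{\perp_s}\setminus S$; one exists because $k\ge1$. Put $S'=\langle S,v\rangle$. Then $S'$ is still symplectically self-orthogonal, since $v$ is orthogonal to $S$ and to itself, and $|S'|=q^{n-k+1}$. Moreover $S'^{\perp_s}\subseteq S^{\perp_s}$ and $S'^{\perp_s}\setminus S'\subseteq S^{\perp_s}\setminus S$, so the distance does not decrease. (Alternatively, any $q^{k-1}$-dimensional subspace of the code inherits the Knill--Laflamme conditions.)

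For (2), append a qudit prepared in $|0\rangle$. Symplectically, set $S'=\{(s,0)\}+\langle(0,e_Z)\rangle$, where $e_Z$ is the $Z$-type vector on the new coordinate. Then
\[
S'^{\perp_s}=\{(x,(0,b)) : x\in S^{\perp_s},\ b\in\F_q\}.
\]
Any element of $S'^{\perp_s}\setminus S'$ has $x\in S^{\perp_s}\setminus S$, so its weight is at least $d$.

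For (4), assume the code is pure, so every nonzero element of $S^{\perp_s}$ has weight at least $d$. Puncture on the last coordinate. Let $S_0=\{s\in S: s_n=0\}$ and delete that coordinate, obtaining $\bar S_0\subseteq\F_q^{2(n-1)}$. The key step is to show two things.
\begin{itemize}
\item $|\bar S_0|=q^{n-k-2}$. This needs the map $S\to\F_q^2$, $s\mapsto s_n$, to be surjective. Surjectivity follows from purity with $d\ge2$: otherwise some weight-one vector supported on coordinate $n$ would lie in $S^{\perp_s}$.
\item $\bar S_0^{\perp_s}$ equals the puncture of $S^{\perp_s}$ on coordinate $n$. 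Hence every element of $\bar S_0^{\perp_s}$ that is nonzero after puncturing has weight at least $d-1$; purity also rules out nonzero elements of $S^{\perp_s}$ that vanish after puncturing.
\end{itemize}
This yields $[[n-1,k+1,d-1]]$, pure, where the dimension is $(n-1)-(n-k-2)=k+1$.

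For (3), in the pure case apply (4) and then (1). The general impure case is where I expect the main obstacle. There one cannot guarantee surjectivity of $s\mapsto s_n$, and degenerate low-weight vectors in $S$ may collapse under puncturing. I would handle it following Rains' argument. Take $S_0$ as above, or, if the image of $s\mapsto s_n$ is a proper subspace $V\subsetneq\F_q^2$, take instead the elements of $S$ whose last coordinate lies in a suitable subgroup, together with an added element. The goal is a self-orthogonal $\bar S\subseteq\F_q^{2(n-1)}$ of size at most $q^{n-1-k}$, so that the dimension is $\ge k$ and can be trimmed to $k$ by (1). Then I would check that any $y\in\bar S^{\perp_s}\setminus\bar S$ lifts to some element of $S^{\perp_s}\setminus S$ by appending a suitable last coordinate. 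This lifting is the delicate point, and it is where the case split on $\dim V\in\{0,1,2\}$ is needed. Since the lift has weight at most $\mathrm{wt}(y)+1$, this gives $\mathrm{wt}(y)\ge d-1$.
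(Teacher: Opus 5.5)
The paper gives no proof of this proposition; it is quoted from Calderbank et al. Your symplectic arguments for (1), (2), (4), and for (3) in the pure case are correct.

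The genuine gap is (3) for impure codes, which the statement includes, since only $k>0$ is assumed. What you wrote there is a plan, and it puts the difficulty in the wrong case. Let $\pi$ delete coordinate $n$, and let $F$ be the set of vectors supported on coordinate $n$. The essential case is $\rho(S)=\F_q^2$. There your default is $\bar S_0=\pi(S_0)$, which has $k+1$ logical qudits, followed by trimming. In this case $S^{\perp_s}\cap F=\{0\}$, so each $y\in\bar S_0^{\perp_s}=\pi(S^{\perp_s})$ has exactly one lift in $S^{\perp_s}$. Take $s\in S$ with $s_n\neq 0$. Then $\pi(s)$ lies in $\bar S_0^{\perp_s}\setminus\bar S_0$, its only lift is $s\in S$, and its weight $\wt(s)-1$ can be far below $d-1$ when the code is impure. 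For instance, for Shor's $[[9,1,3]]_2$ code punctured at qubit $9$, $\pi(Z_8Z_9)=Z_8$ becomes a weight-one logical operator. So the lifting step you call delicate actually fails for $\pi(S_0)$. Trimming by an arbitrary vector via (1) need not remove these collapsed words.

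What works is to fix a line $L\subset\F_q^2$ and take $S_L=\{s\in S: s_n\in L\}$, which is $S_0$ plus one element of $S\setminus S_0$. Your ``elements of $S$ whose last coordinate lies in a suitable subgroup'' is this object. However, it is needed exactly when $\rho(S)=\F_q^2$, not when $\rho(S)$ is proper.

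The construction goes as follows.
\begin{itemize}
\item Because the symplectic form vanishes on $L$, $\pi(S_L)$ is self-orthogonal.
\item $\pi(S_L)$ has size $q^{n-k-1}$, and $\pi(S_L)^{\perp_s}=\pi(\{x\in S^{\perp_s}: x_n\in L\})$.
\item The unique lift $x$ of any $y\in\pi(S_L)^{\perp_s}\setminus\pi(S_L)$ cannot lie in $S$, since it would then lie in $S_L$.
\item Hence $x\in S^{\perp_s}\setminus S$ and $\wt(y)\ge \wt(x)-1\ge d-1$, with no trimming needed.
\end{itemize}

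The case of a proper $\rho(S)$ is the easy one. There $S^{\perp_s}\cap F\neq\{0\}$, and one of two things happens.
\begin{itemize}
\item Some weight-one vector of $S^{\perp_s}\cap F$ lies outside $S$, which forces $d\le 1$.
\item $S\supseteq S^{\perp_s}\cap F$. Then $\pi(S)=\pi(S_0)$ already has size $q^{n-k-1}$, and every $y\in\pi(S^{\perp_s})\setminus\pi(S)$ lifts to an element of $S^{\perp_s}\setminus S$ with last coordinate $0$. This even gives distance $d$.
\end{itemize}

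A smaller point concerns (1). Your convention makes the distance of a $k=0$ code vacuous, since $S^{\perp_s}\setminus S=\emptyset$. Under the usual convention, where the distance is the minimum weight of $S\setminus\{0\}$, your argument for the step $k=1\to k=0$ needs purity. The printed statement omits that hypothesis.
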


{\bf Notation.} For $s\geq 1$ and $1\leq i\leq s$, denote by $\mathbf{e}_i$ the $i$-th standard basis vector $(0,\ldots,0,1,0,\ldots,0)^{T}$ of $\F_{q^2}^{s}$. The rank of a matrix $M$ is denoted by $\Rank(M)$.

\section{General Extending Constructions}

The well-known construction of quantum codes requires classical linear codes to be self-orthogonal or dual-containing. In this section, we investigate a new method for constructing self-orthogonal codes by extending $r$-nearly self-orthogonal codes.

Consider the linear code $\mathcal{C}$, which is an $[n, k]_{q^2}$ 
code with parity-check matrix $H$. Define the matrix $A$ as $HH^{\dagger}$, 
where $A$ is an $(n-k) \times (n-k)$  Hermitian matrix, as shown by the fact that 
$A^{\dagger} = A$.
This connection serves as an important resource for our analysis.
For more details, refer to Section 0.4.6 of \cite{Horn2013}.

\begin{proposition}[Wedderburn rank-one reduction formula]\label{R-1}
Let $M$ be an $m\times n$ matrix over $\mathbb{F}_q$. If $a\in\mathbb{F}_q^n$, $b\in\mathbb{F}_q^m$
 and $c=\textbf{b}^{\mathrm{T}}M\textbf{a}\neq 0$, then
$\Rank(M-c^{-1}M\textbf{a}\textbf{b}^\mathrm{T}M)=\Rank(M)-1$.
\end{proposition}

The following lemma directly follows from the previous Wedderburn rank-one reduction formula.

\begin{lemma}[Rank reduction \cite{Horn2013,Wang2020}]\label{Rank reduction}
Suppose $A =HH^{\dagger}=(a_{ij})$ and $B=(b_{ij})$ are  $m\times m$
Hermitian matrices over $\mathbb{F}_{q^2}$ and their ranks are not zero.

(1) If $b_{ii}=1$, then $\Rank(B+\textbf{b}_i\textbf{b}_i^{\dagger})=\Rank(B)-1$, where
$\textbf{b}_i$ is the $i$-th column vector of $B$.

(2) If $\Rank(A)=r\geq 1$, then there is an $H_{1}=(PH \mid \textbf{u})$ and $A^{[1]}=H_{1}H^{\dagger}_{1}$ such that $\Rank(A^{[1]})=\Rank(H_{1}H^{\dagger}_{1})=r-1$, where $P$ is an invertible matrix and $\textbf{u}$ is a column vector.
\end{lemma}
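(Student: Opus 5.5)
The plan is to derive both parts from Proposition~\ref{R-1} (Wedderburn) applied with standard basis vectors, using the Hermitian symmetry of the matrices and the surjectivity of the norm map $N$.

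\textbf{Part (1).} Apply Proposition~\ref{R-1} with $M=B$ and $\mathbf{a}=\mathbf{b}=\mathbf{e}_i$. Then $c=\mathbf{e}_i^{T}B\mathbf{e}_i=b_{ii}=1\neq 0$. The matrix $B\mathbf{e}_i$ is the column $\mathbf{b}_i$, and $\mathbf{e}_i^{T}B$ is the $i$-th row of $B$. Since $B^{\dagger}=B$, this row equals $(b_{ij})_j=(b_{ji}^{q})_j=\mathbf{b}_i^{\dagger}$. Hence $\Rank(B-b_{ii}^{-1}\mathbf{b}_i\mathbf{b}_i^{\dagger})=\Rank(B)-1$.

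This is the step where I expect the main care to be needed: the sign. Wedderburn gives $B-\mathbf{b}_i\mathbf{b}_i^{\dagger}$, while the lemma states $B+\mathbf{b}_i\mathbf{b}_i^{\dagger}$. The two agree when $q$ is even, since then $-1=1$, and this covers the quaternary case emphasized in the paper. In general I would prove the correct form $\Rank(B+s\,\mathbf{b}_i\mathbf{b}_i^{\dagger})=\Rank(B)-1$ with $s=-b_{ii}^{-1}$. The $+$ version as stated holds exactly when $b_{ii}=-1$, and I would note that the normalization ``$b_{ii}=1$'' should be read with this sign convention.

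\textbf{Part (2).} First I need an invertible $P$ such that $B:=PAP^{\dagger}=(PH)(PH)^{\dagger}$ has a nonzero diagonal entry. Since $\Rank(A)=r\geq1$, $A\neq0$.
\begin{itemize}
\item If some $a_{ii}\neq0$, take $P=I$.
\item Otherwise all $a_{ii}=0$ and some $a_{ij}\neq0$ with $i\neq j$. Let $P$ be the elementary matrix that replaces row $i$ by row $i$ plus $\lambda$ times row $j$. The new $(i,i)$ entry is
\[
a_{ii}+\lambda a_{ji}+\lambda^{q}a_{ij}+\lambda^{q+1}a_{jj}=\lambda a_{ij}^{q}+(\lambda a_{ij}^{q})^{q}=\operatorname{Tr}_{\F_{q^2}/\F_q}(\lambda a_{ij}^{q}).
\]
The trace map is onto $\F_q$, so some $\lambda$ makes this entry nonzero.
\end{itemize}
In either case the entry $c:=b_{ii}$ is nonzero, and $c\in\F_q^{*}$ because $B$ is Hermitian.

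Next set $\mathbf{u}=t\,\mathbf{b}_i$, where $\mathbf{b}_i$ is the $i$-th column of $B$. Then $H_1=(PH\mid\mathbf{u})$ gives
\[
A^{[1]}=H_1H_1^{\dagger}=B+t^{q+1}\mathbf{b}_i\mathbf{b}_i^{\dagger}.
\]
Since $-c^{-1}\in\F_q^{*}$ and $N$ is surjective onto $\F_q^{*}$, choose $t$ with $N(t)=t^{q+1}=-c^{-1}$. Then $A^{[1]}=B-c^{-1}\mathbf{b}_i\mathbf{b}_i^{\dagger}$, and by the computation in part (1) this has rank $\Rank(B)-1=\Rank(A)-1=r-1$. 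The equality $\Rank(B)=\Rank(A)$ holds because $P$ is invertible.
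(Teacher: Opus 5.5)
Your proof is correct and follows the route the paper intends. The paper gives no separate argument: it only says the lemma follows directly from Wedderburn's formula (Proposition~\ref{R-1}), citing the quaternary result of Wang et al. Your specialization $\mathbf{a}=\mathbf{b}=\mathbf{e}_i$, the trace argument that produces a nonzero diagonal entry, and the norm scaling $t^{q+1}=-c^{-1}$ are exactly the details that claim requires over $\F_{q^2}$.

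Your sign observation is also right. Since $\Rank(B+s\,\mathbf{b}_i\mathbf{b}_i^{\dagger})$ drops iff $1+s\,b_{ii}=0$, part~(1) as printed is false for odd $q$; for example, $B=(1)$ over $\F_9$ gives $B+\mathbf{b}_1\mathbf{b}_1^{\dagger}=(2)$, which still has rank one. The statement is valid as written only in characteristic~$2$, which is the quaternary setting of Wang et al.
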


In the lemma, the choice of vector $\mathbf{u}$ is constrained, making it difficult to find a vector extension that maintains a large dual distance for the constructed code.
We introduce the theory of matrix spectral decomposition and employ a method that utilizes the matrix's eigen space to find the vector $\mathbf{u}$.

\begin{lemma}[Rank reduction by EigenSpH Vector space]\label{Rank reduction by EigenSpH Vector space}
Let $A_{m\times m}=HH^{\dagger}=(a_{ij})$ with $\Rank(A)\leq m$, and let $B_{m\times r}=(b_{ij})$ and $C_{m\times r}=(c_{ij})$ have independent columns over $\mathbb{F}_{q^2}$.

(1) If $A=BC^{T}$, then $\Rank(A+\textbf{\textit{b}}_i\textbf{\textit{b}}_i^{\dagger})=\Rank(A)-1$, where $\textbf{\textit{b}}_i$ is the $i$-th column vector of $B$.

(2) If $\Rank(A)=1$ and $A=\mathbf{u}\mathbf{u}^{T}$ for some column vector $\mathbf{u}$, then there is an $H_{1}=(H \mid \mathbf{u})$ and $A^{[1]}=H_{1}H^{\dagger}_{1}$ such that $\Rank(A^{[1]})=\Rank(H_{1}H^{\dagger}_{1})=0$, where $\mathbf{u}$ is a column vector.
\end{lemma}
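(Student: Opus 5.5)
The plan is to derive both parts from the Wedderburn rank-one reduction formula (Proposition~\ref{R-1}), with the vectors $\mathbf{a},\mathbf{b}$ chosen from the factorization $A=BC^{T}$. We use that $A$ is Hermitian throughout. I also read the statement as carrying the same normalization implicit in Lemma~\ref{Rank reduction}(1), where $b_{ii}=1$ is assumed; without such a normalization a plus-sign rank-one update need not lower the rank.

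\textbf{Step 1 (choosing $\mathbf{a}$).} Since $A=BC^{T}$ with $B$ and $C$ of full column rank $r$, we have $\Rank(A)=r$ and $\im(A)=\im(B)$. Because $C$ has independent columns, $C^{T}$ is surjective onto $\F_{q^2}^{r}$. Hence there is a vector $\mathbf{a}$, for instance the $i$-th column of a left inverse of $C$ in the sense of Section~II, with $C^{T}\mathbf{a}=\mathbf{e}_i$. Then $A\mathbf{a}=B\mathbf{e}_i=\mathbf{b}_i$.

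\textbf{Step 2 (choosing $\mathbf{b}$ and the scalar $c$).} Put $\mathbf{b}=\overline{\mathbf{a}}$, the entrywise $q$-th power, so that $\mathbf{b}^{T}=\herm{\mathbf{a}}$. Since $\herm{A}=A$, we get
\[
\mathbf{b}^{T}A=\herm{\mathbf{a}}A=\herm{(A\mathbf{a})}=\herm{\mathbf{b}_i}.
\]
The scalar in Proposition~\ref{R-1} is then
\[
c=\herm{\mathbf{a}}A\mathbf{a}=\herm{\mathbf{a}}\mathbf{b}_i .
\]
This is a Hermitian form value, so $c\in\F_q$. Assuming $c\neq 0$, Proposition~\ref{R-1} gives
\[
\Rank\bigl(A-c^{-1}\mathbf{b}_i\herm{\mathbf{b}_i}\bigr)=\Rank(A)-1 .
\]

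\textbf{Step 3 (fixing the sign).} To turn this into the plus-sign update $A+\mathbf{b}_i\herm{\mathbf{b}_i}$ we need $c=-1$. This is the analogue of the condition $b_{ii}=1$ in Lemma~\ref{Rank reduction}(1), and I will state it as the normalization $\herm{\mathbf{a}}\mathbf{b}_i=-1$. If we only know $c\neq 0$, we instead replace $\mathbf{b}_i$ by $t\mathbf{b}_i$. Since the norm $N:\F_{q^2}^*\to\F_q^*$ is surjective, we can choose $t$ with $N(t)=-c^{-1}$. Then $(t\mathbf{b}_i)\herm{(t\mathbf{b}_i)}=-c^{-1}\mathbf{b}_i\herm{\mathbf{b}_i}$, which gives the claim with the rescaled column.

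\textbf{Main obstacle.} The hard point is guaranteeing $c\neq 0$. The vector $\mathbf{a}$ is determined only modulo $\Ker(C^{T})$. Adding $\mathbf{k}\in\Ker(C^{T})$ changes $c$ by $\herm{\mathbf{k}}\mathbf{b}_i$, because $A\mathbf{k}=B C^{T}\mathbf{k}=0$. So I would try to use this freedom to make $c\neq 0$ whenever $\mathbf{b}_i\notin\Ker(C^{T})^{\perp_h}$.

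If instead every admissible choice gives $c=0$, the approach fails. In that case I would fall back on pairing Jordan chains, i.e. choosing $\mathbf{b}=\overline{\mathbf{a}'}$ from a different column $j\ne i$. This does give the rank drop, but for a non-symmetric perturbation rather than $\mathbf{b}_i\herm{\mathbf{b}_i}$, so it does not yield part (1) exactly. That case should therefore be stated as an explicit hypothesis of the lemma.

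\textbf{Part (2).} Here $r=1$, and $B=C$ consists of the single column $\mathbf{u}$. Appending $\mathbf{u}$ as an extra column of $H$ gives
\[
H_1\herm{H_1}=H\herm{H}+\mathbf{u}\herm{\mathbf{u}}=A+\mathbf{u}\herm{\mathbf{u}} .
\]
By part (1) with $i=1$, this matrix has rank $\Rank(A)-1=0$. This uses the same normalization, which for $r=1$ amounts to $A=-\mathbf{u}\herm{\mathbf{u}}$, or to a rescaling of $\mathbf{u}$ by a norm preimage as in Step 3.
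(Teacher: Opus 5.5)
\textbf{Part (1) as printed is false, so your extra normalization is required.} Over $\F_4$ take $H=A=I_2$, $B=\begin{pmatrix}1&0\\1&1\end{pmatrix}$ and $C=\begin{pmatrix}1&1\\0&1\end{pmatrix}$, so that $BC^{T}=I_2$. Then $A+\mathbf b_1\mathbf b_1^\dagger=\begin{pmatrix}0&1\\1&0\end{pmatrix}$ still has rank $2$. Since $N(t)=1$ for every $t\in\F_4^*$, no rescaling of $\mathbf b_1$ helps either.

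The paper's proof of (1) does not close this gap. It notes that $\im(\mathbf b_i\mathbf b_i^\dagger)\subseteq\im(A)$ and then asserts that Wedderburn's formula ``eliminates one rank-one component.'' Containment of column spaces only gives $\Rank(A+\mathbf b_i\mathbf b_i^\dagger)\in\{r-1,r\}$. The perturbation is never actually put into the form $-c^{-1}M\mathbf a\mathbf b^{T}M$.

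Your Steps~1--3 do exactly this matching: $C^{T}\mathbf a=\mathbf e_i$, hence $A\mathbf a=\mathbf b_i$; then $\mathbf b^{T}=\mathbf a^\dagger$ and $c=\mathbf a^\dagger A\mathbf a\in\F_q$; finally a norm rescaling when $c\neq0$. Under the hypothesis $c=-1$, or $c\neq0$ with $\mathbf b_i$ replaced by $t\mathbf b_i$ where $N(t)=-c^{-1}$, your argument is a correct and rigorous replacement for the paper's.

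\textbf{Two points in your ``main obstacle'' paragraph need correcting.}

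First, the freedom $\mathbf a\mapsto\mathbf a+\mathbf k$ with $\mathbf k\in\Ker(C^{T})$ never changes $c$. Because $B$ has independent columns, $\Ker(C^{T})=\Ker(A)$. Because $A$ is Hermitian, $\mathbf k^\dagger\mathbf b_i=\mathbf k^\dagger A\mathbf a=(A\mathbf k)^\dagger\mathbf a=0$. So $\mathbf b_i\in\Ker(C^{T})^{\perp_h}$ always, and $c=\mathbf a^\dagger\mathbf b_i$ depends only on $A$ and $\mathbf b_i$.

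Second, when $c=0$ it is the lemma that fails, not your method. For every scalar $\mu$ one has $A+\mu\,\mathbf b_i\mathbf b_i^\dagger=A\,(I+\mu\,\mathbf a\mathbf a^\dagger A)$, and $\det(I+\mu\,\mathbf a\mathbf a^\dagger A)=1+\mu c=1$. Hence the rank stays $\Rank(A)$. It follows that:
\begin{itemize}
\item $c=-1$ is necessary and sufficient for (1) as written;
\item $c\neq0$ is necessary and sufficient for some multiple of $\mathbf b_i$ to work;
\item the Jordan-chain fallback is unnecessary.
\end{itemize}
In the example above, $c=0$ for $\mathbf b_1$, while $c=1=-1$ for $\mathbf b_2$.

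\textbf{Part (2) follows the paper's route.} The paper writes $A=c\,\mathbf v\mathbf v^\dagger$ with $c\in\F_q^*$ and appends $\alpha\mathbf v$ with $\alpha^{q+1}=-c$. That is a rescaled vector, not the $\mathbf u$ of $A=\mathbf u\mathbf u^{T}$. Here your hypothesis holds automatically. If $A=\mathbf u\mathbf u^{T}$ is Hermitian, then $\mathbf u^{(q)}=\pm\mathbf u$ (entrywise $q$-th power), which gives $c=\pm1\neq0$. Appending $\mathbf u$ itself works exactly when $\mathbf u^{(q)}=-\mathbf u$, in particular always for even $q$.
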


\begin{proof}
(1)  Since the columns of $B$ and $C$ are linearly independent and $A=BC^{T}$, we have $\Rank(A)=r$. Thus the column space of $A$ is spanned by the columns of $B$, and we may write $A=\sum_{j=1}^{r}\mathbf{b}_{j}\mathbf{c}_{j}^{T}$, where $\mathbf{c}_{j}$ is the $j$-th column of $C$. The rank-one matrix $\mathbf{b}_{i}\mathbf{b}_{i}^{\dagger}$ has its column space contained in that of $A$. By the Wedderburn rank-one reduction formula (Proposition~\ref{R-1}), adding $\mathbf{b}_{i}\mathbf{b}_{i}^{\dagger}$ eliminates exactly one independent rank-one component, so $\Rank(A+\mathbf{b}_{i}\mathbf{b}_{i}^{\dagger})=r-1=\Rank(A)-1$.

(2)  If $\Rank(A)=1$, part~(1) gives a rank-one factorization $A=\mathbf{v}\mathbf{v}^{T}$ for some nonzero vector $\mathbf{v}$. Since $A$ is Hermitian, we may write $A=c\,\mathbf{v}\mathbf{v}^{\dagger}$ with $c\in\mathbb{F}_{q}^{*}$. Choose $\alpha\in\mathbb{F}_{q^2}^{*}$ such that $\alpha^{q+1}=-c$ and set $\mathbf{u}=\alpha\mathbf{v}$. Extending $H$ to $H_{1}=(H\mid \mathbf{u})$, we obtain
\[
A^{[1]}=H_{1}H_{1}^{\dagger}=HH^{\dagger}+\mathbf{u}\mathbf{u}^{\dagger}=A+\alpha^{q+1}\mathbf{v}\mathbf{v}^{\dagger}=A-c\,\mathbf{v}\mathbf{v}^{\dagger}=0.
\]
Hence $\Rank(A^{[1]})=0=\Rank(A)-1$.
\end{proof}

\begin{lemma}[Rank-1 Hermitian outer product representation]
Let $B \in \mathbb{F}_{q^2}^{m\times m}$ be a Hermitian matrix with $\Rank(B) = 1$.
Then there exist a scalar $c \in \mathbb{F}_q$ and a column vector $\boldsymbol{v} \in \mathbb{F}_{q^2}^m$ such that
$$
B = c \cdot \boldsymbol{v}\boldsymbol{v}^\dagger.
$$
Specifically:
\begin{itemize}
  \item If $B$ is semisimple with nonzero eigenvalue $\lambda$, then $c = \lambda$ and $\boldsymbol{v}$ is a non-isotropic eigenvector ($\boldsymbol{v}^\dagger\boldsymbol{v} \neq 0$).
  \item If $B$ is nilpotent, then $\boldsymbol{v}$ is an isotropic eigenvector satisfying $\boldsymbol{v}^\dagger\boldsymbol{v} = 0$.
\end{itemize}
\label{lem:rank1_outer}
\end{lemma}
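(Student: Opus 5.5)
Since $\Rank(B)=1$, every column of $B$ is a scalar multiple of a single nonzero column vector, so I will start from a rank-one factorization $B=\boldsymbol{w}\boldsymbol{z}^{\dagger}$ with $\boldsymbol{w},\boldsymbol{z}\in\mathbb{F}_{q^2}^m$ nonzero. The plan is to use $B^\dagger=B$ to force $\boldsymbol{z}$ to be proportional to $\boldsymbol{w}$, and then to check that the resulting scalar lies in $\mathbb{F}_q$. Concretely, $B^\dagger=\boldsymbol{z}\boldsymbol{w}^\dagger$, so $\boldsymbol{w}\boldsymbol{z}^\dagger=\boldsymbol{z}\boldsymbol{w}^\dagger$. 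Comparing column spaces gives $\im(B)=\operatorname{span}(\boldsymbol{w})=\operatorname{span}(\boldsymbol{z})$, hence $\boldsymbol{z}=\mu\boldsymbol{w}$ for some $\mu\in\mathbb{F}_{q^2}^*$. Then $B=\mu^{q}\boldsymbol{w}\boldsymbol{w}^\dagger$. Put $c=\mu^q$ and $\boldsymbol{v}=\boldsymbol{w}$. Hermiticity now reads $c\,\boldsymbol{v}\boldsymbol{v}^\dagger=c^q\boldsymbol{v}\boldsymbol{v}^\dagger$. Since $\boldsymbol{v}\boldsymbol{v}^\dagger\neq 0$ (it has nonzero diagonal entry $v_iv_i^q$ at any nonzero coordinate), we get $c^q=c$, i.e.\ $c\in\mathbb{F}_q^*$. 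This proves the main claim.

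For the two refinements I would compute $B\boldsymbol{v}=c\,\boldsymbol{v}(\boldsymbol{v}^\dagger\boldsymbol{v})=\lambda\boldsymbol{v}$ with $\lambda=c\,(\boldsymbol{v}^\dagger\boldsymbol{v})\in\mathbb{F}_q$. Thus $\boldsymbol{v}$ always spans $\im(B)$, and it is an eigenvector for the only possibly nonzero eigenvalue. Also $B^2=\lambda B$, so the minimal polynomial of $B$ divides $x(x-\lambda)$. There are two cases.
\begin{itemize}
\item If $\boldsymbol{v}^\dagger\boldsymbol{v}\neq0$, then $\lambda\neq0$. The minimal polynomial $x(x-\lambda)$ has distinct roots, so $B$ is semisimple with nonzero eigenvalue $\lambda$.
\item If $\boldsymbol{v}^\dagger\boldsymbol{v}=0$, then $B^2=0$, so $B$ is nilpotent and $\boldsymbol{v}$ is an isotropic eigenvector (for eigenvalue $0$).
\end{itemize}
Since these two cases are exhaustive and mutually exclusive, they give exactly the dichotomy of the statement: semisimple corresponds to non-isotropic $\boldsymbol{v}$, and nilpotent to isotropic $\boldsymbol{v}$.

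The step I expect to need care is the normalization in the semisimple bullet, which asserts $c=\lambda$ rather than $c=\lambda/(\boldsymbol{v}^\dagger\boldsymbol{v})$. I would rescale $\boldsymbol{v}\mapsto t\boldsymbol{v}$, which changes $c$ to $c/N(t)$ and $\boldsymbol{v}^\dagger\boldsymbol{v}$ to $N(t)\,\boldsymbol{v}^\dagger\boldsymbol{v}$. By surjectivity of the norm $N$ onto $\mathbb{F}_q^*$ (recalled in the preliminaries), and since $\boldsymbol{v}^\dagger\boldsymbol{v}\in\mathbb{F}_q^*$, I can choose $t$ with $N(t)=(\boldsymbol{v}^\dagger\boldsymbol{v})^{-1}$. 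After rescaling, $\boldsymbol{v}^\dagger\boldsymbol{v}=1$ and therefore $c=\lambda$. In the nilpotent case no such normalization is available, and none is claimed.
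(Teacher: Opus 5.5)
Your proof is correct and follows the same route as the paper: take a rank-one factorization, use Hermiticity to force the two factors to be proportional, and deduce $c^q=c$. Your treatment of the two bullets, via $B^2=\lambda B$ and the rescaling with $N(t)=(\boldsymbol{v}^\dagger\boldsymbol{v})^{-1}$, spells out what the paper dismisses as ``follows directly''; in particular, without that normalization one only gets $c=\lambda/(\boldsymbol{v}^\dagger\boldsymbol{v})$ rather than $c=\lambda$.
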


\begin{proof}
Since $B$ has rank 1, it can be written as an outer product $B = \boldsymbol{a}\boldsymbol{b}^T$ for some nonzero column vectors $\boldsymbol{a},\boldsymbol{b} \in \mathbb{F}_{q^2}^m$.
By the Hermitian property $B^\dagger = B$, we have
$$
(\boldsymbol{a}\boldsymbol{b}^T)^\dagger = \boldsymbol{b}^{(q)} (\boldsymbol{a}^{(q)})^T = \boldsymbol{a}\boldsymbol{b}^T,
$$
where $\boldsymbol{x}^{(q)}$ denotes the vector obtained by applying $\sigma$ entrywise to $\boldsymbol{x}$.
This implies that $\boldsymbol{b}^{(q)}$ is a scalar multiple of $\boldsymbol{a}$, i.e., there exists $c\in\mathbb{F}_{q^2}^*$ such that $\boldsymbol{b}^{(q)} = c\boldsymbol{a}$.
Applying $\sigma$ to both sides yields $\boldsymbol{b} = c^q \boldsymbol{a}^{(q)}$.
Substituting back into the Hermitian condition gives $c^q = c$, so $c \in \mathbb{F}_q$.

Therefore,
$$
B = \boldsymbol{a} (c^q \boldsymbol{a}^{(q)})^T = c^q \cdot \boldsymbol{a} (\boldsymbol{a}^{(q)})^T = c \cdot \boldsymbol{a}\boldsymbol{a}^\dagger.
$$
Setting $\boldsymbol{v} = \boldsymbol{a}$ completes the representation.
The classification into semisimple and nilpotent cases follows directly from whether $\boldsymbol{v}^\dagger\boldsymbol{v} = N(\boldsymbol{v})$ is zero.
\end{proof}

\begin{theorem}[General rank-one reduction]\label{thm:single_reduction}
Let $A \in \mathbb{F}_{q^2}^{m\times m}$ be a Hermitian matrix with $\Rank(A) = r \ge 1$.
Then there exists a column vector $\boldsymbol{u} \in \mathbb{F}_{q^2}^m$ such that
$$
\Rank(A + \boldsymbol{u}\boldsymbol{u}^\dagger) = r - 1.
$$
\end{theorem}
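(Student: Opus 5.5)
We reduce to the Wedderburn rank-one reduction formula (Proposition~\ref{R-1}) applied to $M=A$. For vectors $\boldsymbol{a},\boldsymbol{b}$ with $c=\boldsymbol{b}^{T}A\boldsymbol{a}\neq 0$ it gives $\Rank(A-c^{-1}A\boldsymbol{a}\boldsymbol{b}^{T}A)=r-1$. So we want a vector $\boldsymbol{x}\in\F_{q^2}^m$ with $\gamma:=\boldsymbol{x}^{\dagger}A\boldsymbol{x}\neq 0$, and we take $\boldsymbol{a}=\boldsymbol{x}$ and $\boldsymbol{b}^{T}=\boldsymbol{x}^{\dagger}$. Since $A^{\dagger}=A$, the scalar $\gamma$ satisfies $\gamma^{q}=\gamma$, so $\gamma\in\F_q^{*}$. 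The perturbation term is $-\gamma^{-1}(A\boldsymbol{x})(A\boldsymbol{x})^{\dagger}$. By surjectivity of the norm $N:\F_{q^2}^{*}\to\F_q^{*}$ we choose $t$ with $t^{q+1}=-\gamma^{-1}$ and set $\boldsymbol{u}=t\,A\boldsymbol{x}$. Then $\boldsymbol{u}\boldsymbol{u}^{\dagger}=t^{q+1}A\boldsymbol{x}\boldsymbol{x}^{\dagger}A=-\gamma^{-1}A\boldsymbol{x}\boldsymbol{x}^{\dagger}A$, so $\Rank(A+\boldsymbol{u}\boldsymbol{u}^{\dagger})=r-1$. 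If Proposition~\ref{R-1} is only recorded over a field $\F_q$, it applies verbatim over $\F_{q^2}$; its proof, via $\Ker(A-c^{-1}A\boldsymbol{a}\boldsymbol{b}^{T}A)=\Ker A\oplus\langle\boldsymbol{a}\rangle$, is field-independent.

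The main obstacle is producing $\boldsymbol{x}$ with $\boldsymbol{x}^{\dagger}A\boldsymbol{x}\neq 0$, i.e.\ showing that a nonzero Hermitian form over $\F_{q^2}$ is not identically zero on the diagonal. We argue by contradiction. Suppose $\boldsymbol{x}^{\dagger}A\boldsymbol{x}=0$ for all $\boldsymbol{x}$. Taking $\boldsymbol{x}=\mathbf{e}_i$ gives $a_{ii}=0$. Taking $\boldsymbol{x}=\mathbf{e}_i+\lambda\mathbf{e}_j$ gives
\[
\lambda a_{ij}+\lambda^{q}a_{ji}=\lambda a_{ij}+(\lambda a_{ij})^{q}=\operatorname{Tr}_{\F_{q^2}/\F_q}(\lambda a_{ij})=0\quad\text{for all }\lambda\in\F_{q^2}.
\]
Here $\boldsymbol{x}$ and $\lambda$ enter in whatever conjugation convention makes this expression come out; this is a routine check. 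The trace form is nondegenerate, since the trace is a nonzero $\F_q$-linear map because $\F_{q^2}/\F_q$ is separable. Hence $a_{ij}=0$ for all $i,j$, so $A=0$, contradicting $r\ge 1$.

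A constructive alternative is to take $\boldsymbol{x}=\mathbf{e}_i$ whenever some diagonal entry $a_{ii}\neq 0$; this recovers Lemma~\ref{Rank reduction}(1) up to scaling. Otherwise pick $a_{ij}\neq 0$ and $\boldsymbol{x}=\mathbf{e}_i+\lambda\mathbf{e}_j$ with $\operatorname{Tr}(\lambda a_{ij})\neq 0$. This avoids the Jordan-form considerations of Lemma~\ref{lem:rank1_outer}, and covers the nilpotent (isotropic) case uniformly.
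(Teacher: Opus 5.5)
Your proof is correct, but it takes a different route from the paper's. The paper starts from a minimal decomposition $A=\sum_{i=1}^{r}c_i\boldsymbol{v}_i\boldsymbol{v}_i^{\dagger}$ into $r$ rank-one Hermitian matrices (written via Lemma~\ref{lem:rank1_outer}). It then takes $\boldsymbol{u}=k\boldsymbol{v}_1$ with $k^{q+1}=-c_1$, so that one summand cancels and the remaining $r-1$ summands have rank exactly $r-1$. That argument is short and makes the cancellation transparent. However, the existence of such a decomposition is only asserted. It amounts to the diagonalizability of Hermitian forms over $\F_{q^2}$, and proving that requires exactly what you establish: a vector $\boldsymbol{x}$ with $\boldsymbol{x}^{\dagger}A\boldsymbol{x}\neq 0$. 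You instead combine the Wedderburn formula (Proposition~\ref{R-1}, whose proof is indeed field-independent) with such a non-isotropic vector. You obtain that vector from the nondegeneracy of the trace $\F_{q^2}\to\F_q$, which also holds in characteristic $2$ because the involution is nontrivial. Your argument is self-contained and constructive. It gives an explicit $\boldsymbol{u}=tA\boldsymbol{x}\in\im(A)$ and handles the semisimple and nilpotent cases uniformly, with no Jordan-form information. Iterated $r$ times, it even produces the decomposition $A=-\sum_{i=1}^{r}\boldsymbol{u}_i\boldsymbol{u}_i^{\dagger}$ that the paper's Step~1 takes for granted. One small point: your hedge about conjugation conventions is unnecessary. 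With the paper's convention $\boldsymbol{x}^{\dagger}=(\boldsymbol{x}^{(q)})^{T}$ and $\boldsymbol{x}=\mathbf{e}_i+\lambda\mathbf{e}_j$, a zero diagonal gives exactly $\lambda a_{ij}+\lambda^{q}a_{ji}=\lambda a_{ij}+(\lambda a_{ij})^{q}$.
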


\begin{proof}
We proceed in two steps.

\textbf{Step 1: Minimal rank-1 decomposition.}
Since $A$ is a rank-$r$ Hermitian matrix, it admits a decomposition as a sum of $r$ rank-1 Hermitian matrices:
$$
A = \sum_{i=1}^r B_i, \quad \Rank(B_i) = 1, \quad B_i^\dagger = B_i,
$$
where the decomposition is minimal in the sense that the rank of the sum equals the number of terms.
By Lemma~\ref{lem:rank1_outer}, each term can be written as
$$
B_i = c_i \boldsymbol{v}_i \boldsymbol{v}_i^\dagger, \quad c_i \in \mathbb{F}_q^*, \quad \boldsymbol{v}_i \in \mathbb{F}_{q^2}^m.
$$

\textbf{Step 2: Construction of the canceling vector.}
Pick any rank-1 component, say $B_1 = c_1 \boldsymbol{v}_1 \boldsymbol{v}_1^\dagger$.
Since the norm map $N: \mathbb{F}_{q^2}^* \to \mathbb{F}_q^*$ is surjective, there exists an element $k \in \mathbb{F}_{q^2}^*$ such that
$$
N(k) = k^{q+1} = -c_1.
$$
Define $\boldsymbol{u} = k \boldsymbol{v}_1$. Then its Hermitian outer product satisfies
$$
\boldsymbol{u}\boldsymbol{u}^\dagger
= (k\boldsymbol{v}_1)(k\boldsymbol{v}_1)^\dagger
= k \cdot k^q \cdot \boldsymbol{v}_1\boldsymbol{v}_1^\dagger
= N(k) \cdot \boldsymbol{v}_1\boldsymbol{v}_1^\dagger
= -c_1 \boldsymbol{v}_1\boldsymbol{v}_1^\dagger$$
$= -B_1.$

Adding this outer product to $A$ cancels the first component exactly:
$$
A + \boldsymbol{u}\boldsymbol{u}^\dagger = \sum_{i=2}^r B_i.
$$
The right-hand side is a sum of $r-1$ rank-1 Hermitian matrices, and by minimality of the original decomposition, its rank is exactly $r-1$.
Hence $\Rank(A + \boldsymbol{u}\boldsymbol{u}^\dagger) = r-1$.
\end{proof}

\begin{corollary}[Iterative rank reduction]
Let $H \in \mathbb{F}_{q^2}^{m\times n}$, and let $A = HH^\dagger$ with $\Rank(A) = r$.
Define the sequence of augmented matrices recursively by
$$
H_0 = H, \quad H_i = \begin{bmatrix} H_{i-1} & \boldsymbol{u}^{(i)} \end{bmatrix}, \quad 1 \le i \le r,
$$
where each $\boldsymbol{u}^{(i)}$ is chosen according to Theorem~\ref{thm:single_reduction} for the current Gram matrix $A^{[i-1]} = H_{i-1}H_{i-1}^\dagger$.
Then for every $1 \le i \le r$,
$$
\Rank(H_i H_i^\dagger) = r - i.
$$
In particular, $\Rank(H_r H_r^\dagger) = 0$.
\label{cor:iterative_reduction}
\end{corollary}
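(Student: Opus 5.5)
The plan is induction on $i$, with Theorem~\ref{thm:single_reduction} doing all the work at each step. The key algebraic identity is that appending a column to $H_{i-1}$ is a rank-one Hermitian update of the Gram matrix:
\[
H_iH_i^\dagger=\begin{bmatrix} H_{i-1} & \boldsymbol{u}^{(i)} \end{bmatrix}\begin{bmatrix} H_{i-1}^\dagger \\ (\boldsymbol{u}^{(i)})^\dagger \end{bmatrix}=H_{i-1}H_{i-1}^\dagger+\boldsymbol{u}^{(i)}(\boldsymbol{u}^{(i)})^\dagger .
\]
Thus $A^{[i]}=A^{[i-1]}+\boldsymbol{u}^{(i)}(\boldsymbol{u}^{(i)})^\dagger$. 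This is a direct block-multiplication check using $(XY)^\dagger=Y^\dagger X^\dagger$ entrywise under the Frobenius map $x\mapsto x^q$.

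For the induction, the base case $i=0$ is $\Rank(H_0H_0^\dagger)=\Rank(A)=r$. For the inductive step, suppose $\Rank(A^{[i-1]})=r-(i-1)$ with $1\le i\le r$, so that this rank is at least $1$. I will observe that $A^{[i-1]}$ is Hermitian, since $(H_{i-1}H_{i-1}^\dagger)^\dagger=H_{i-1}H_{i-1}^\dagger$. Its size is $m\times m$, independent of $i$, because appending columns changes $n$ but not $m$. Theorem~\ref{thm:single_reduction} therefore applies to $A^{[i-1]}$ and supplies $\boldsymbol{u}^{(i)}$ with $\Rank(A^{[i-1]}+\boldsymbol{u}^{(i)}(\boldsymbol{u}^{(i)})^\dagger)=r-i$. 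Combined with the identity above, this gives $\Rank(H_iH_i^\dagger)=r-i$. Taking $i=r$ yields $\Rank(H_rH_r^\dagger)=0$, that is, $H_rH_r^\dagger=0$.

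The main obstacle is not in the induction itself but in the hypothesis it relies on: it must be legitimate to invoke Theorem~\ref{thm:single_reduction} at every stage, including when $A^{[i-1]}$ is nilpotent or has isotropic directions. In the Corollary's proof I will simply take the Theorem as stated and check only its hypotheses: Hermitian, rank at least $1$, and entries in $\F_{q^2}$. I will also note that the recursion is well defined because the hypothesis $\Rank\ge 1$ holds exactly for $i\le r$.
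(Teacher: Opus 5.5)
Your proposal is correct and matches the paper's proof: induction on $i$, using $H_iH_i^\dagger = A^{[i-1]}+\boldsymbol{u}^{(i)}(\boldsymbol{u}^{(i)})^\dagger$ and Theorem~\ref{thm:single_reduction} applied to the Hermitian matrix $A^{[i-1]}$ of rank $r-i+1\ge 1$. Starting the induction at $i=0$ rather than $i=1$ makes no difference, and the nilpotent stages you flag need no separate treatment, since the theorem is stated for every Hermitian matrix of positive rank.
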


\begin{proof}
We proceed by induction on $i$.

\emph{Base case ($i=1$):}
By assumption, $\Rank(A^{[0]}) = \Rank(HH^\dagger) = r$.
Applying Theorem~\ref{thm:single_reduction} directly yields $\Rank(A^{[1]}) = r-1$.

\emph{Inductive step:}
Assume that for some $i-1$ with $1 \le i-1 < r$, we have $\Rank(A^{[i-1]}) = r-(i-1)$.
Note that $A^{[i-1]} = H_{i-1}H_{i-1}^\dagger$ remains a Hermitian matrix.
By Theorem~\ref{thm:single_reduction}, there exists a vector $\boldsymbol{u}^{(i)}$ such that
$$
\Rank(A^{[i-1]} + \boldsymbol{u}^{(i)}(\boldsymbol{u}^{(i)})^\dagger) = (r-i+1) - 1 = r-i.
$$
Since $A^{[i]} = H_i H_i^\dagger = A^{[i-1]} + \boldsymbol{u}^{(i)}(\boldsymbol{u}^{(i)})^\dagger$, the inductive step follows.

By induction, the statement holds for all $1 \le i \le r$.
\end{proof}

\begin{remark}[Explicit construction via Jordan canonical form]\label{rem:jordan_basis}
The general rank-reduction principle above can be instantiated concretely using the Jordan canonical decomposition of $A$.
Let $A = P J_A P^{-1}$ be the Jordan canonical form of $A$, and set $W = P^{-1}$.
For Hermitian matrices over $\mathbb{F}_{q^2}$, the Jordan structure is strictly constrained:
all nonzero eigenvalues lie in $\mathbb{F}_q^*$ and correspond to 1-by-1 Jordan blocks, while all nilpotent Jordan blocks have size at most 2.
Each Jordan block (whether a semisimple 1-block or a 2-step nilpotent block) contributes exactly one to the rank of $A$.

To perform the $i$-th rank reduction we select a column $\boldsymbol{w}_j$ of $W = P^{-1}$ and verify the \emph{active condition}
\begin{equation}\label{eq:active}
A \boldsymbol{w}_j \neq \mathbf{0}.
\end{equation}
Condition \eqref{eq:active} is equivalent to $\Rank(A + \boldsymbol{w}_j\herm{\boldsymbol{w}_j}) = \Rank(A) - 1$, so it guarantees a valid rank-1 reduction by Theorem~\ref{thm:single_reduction}.
When $J_A$ contains a 1-by-1 block with nonzero eigenvalue $\lambda \in \mathbb{F}_q^*$, the corresponding column $\boldsymbol{w}_j$ of $W$ satisfies $A\boldsymbol{w}_j = \lambda P \mathbf{e}_j \neq \mathbf{0}$; hence such columns always satisfy \eqref{eq:active}.
For a size-2 nilpotent block the active condition still selects a suitable column, but see Remark~\ref{rem:nilpotent} for the special algebraic structure in that case.

In our computational implementation the candidate columns of $W$ are examined in increasing Hamming-weight order; the first column satisfying \eqref{eq:active} and preserving the dual distance $d$ is appended to the current parity-check matrix. This weight-first heuristic is a search optimization and is not required by the theory.
\end{remark}

Now, using the previous iterative rank reduction method,  we can  present the generation method for self-orthogonal codes from arbitrary codes.

\begin{theorem}[Iterative Self-Orthogonal Construction]\label{Iterative Self-Orthogonal Construction}
Let $\mathcal{C} = [n, k, d]_{q^2}$ be a classical linear code over $\F_{q^2}$ with parity-check matrix $H \in \F_{q^2}^{(n-k) \times n}$. Define the Hermitian matrix $A = H\herm{H} \in \F_{q^2}^{(n-k) \times (n-k)}$ with $\Rank(A) = r$ ($1 \leq r \leq n-k$). Then there exists a Hermitian self-orthogonal code $\mathcal{C}' = [n+r, n-k]_{q^2}$ over $\F_{q^2}$.
\end{theorem}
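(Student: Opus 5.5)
The plan is to apply Corollary~\ref{cor:iterative_reduction} directly to the parity-check matrix $H$ and then read off the self-orthogonal code from the final augmented matrix. Set $m=n-k$ and $H_0=H$. Since $A=H\herm{H}$ is Hermitian of rank $r\geq 1$, Theorem~\ref{thm:single_reduction} gives a vector $\boldsymbol{u}^{(1)}\in\F_{q^2}^{m}$ with $\Rank(A+\boldsymbol{u}^{(1)}\herm{\boldsymbol{u}^{(1)}})=r-1$. Appending it as a new column yields $H_1=[H_0\mid \boldsymbol{u}^{(1)}]$, and the key identity is
\[
H_1\herm{H_1}=H_0\herm{H_0}+\boldsymbol{u}^{(1)}\herm{\boldsymbol{u}^{(1)}}.
\]
Iterating $r$ times, as in Corollary~\ref{cor:iterative_reduction}, gives an $m\times(n+r)$ matrix $H_r=[H\mid U]$ with $U=[\boldsymbol{u}^{(1)}\,\cdots\,\boldsymbol{u}^{(r)}]$ and $H_r\herm{H_r}=0$. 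In the explicit version, the vectors $\boldsymbol{u}^{(i)}$ are scalar multiples $k_i\boldsymbol{w}_j$ of active Jordan-basis columns as in Remark~\ref{rem:jordan_basis}, with $k_i^{q+1}$ chosen through the surjectivity of the norm.

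Let $\mathcal{C}'$ be the row space of $H_r$ in $\F_{q^2}^{n+r}$. Its length is $n+r$. For its dimension, $H$ is a parity-check matrix of an $[n,k]$ code, so $H$ has full row rank $n-k$. Appending columns cannot lower the row rank, so $H_r$ also has rank $n-k$ and $\dim\mathcal{C}'=n-k$. Self-orthogonality follows because the $(i,j)$ entry of $H_r\herm{H_r}$ is exactly $\langle \boldsymbol{h}_i,\boldsymbol{h}_j\rangle_h$ for the rows $\boldsymbol{h}_i$ of $H_r$. Since this matrix is zero, every pair of generators is Hermitian orthogonal, and by sesquilinearity $\mathcal{C}'\subseteq\mathcal{C}'^{\perp_h}$. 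This gives the claimed $[n+r,n-k]_{q^2}$ Hermitian self-orthogonal code.

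The main obstacle is justifying that each step drops the rank by exactly one, i.e.\ that Theorem~\ref{thm:single_reduction} really applies at every stage. The rank-one decomposition $A=\sum c_i\boldsymbol{v}_i\herm{\boldsymbol{v}_i}$ with $c_i\in\F_q^*$ needs care when $A$ has a nilpotent part, where a naive diagonalization fails. I would try to handle it by congruence: every Hermitian form over $\F_{q^2}$ is congruent to a diagonal form $\mathrm{diag}(1,\dots,1,0,\dots,0)$. Writing $A=S\,\mathrm{diag}(I_r,0)\herm{S}$ with $S$ invertible, the columns $\boldsymbol{s}_1,\dots,\boldsymbol{s}_r$ of $S$ are linearly independent. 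Then
\[
A=\sum_{i=1}^{r}\boldsymbol{s}_i\herm{\boldsymbol{s}_i},
\]
and taking $\boldsymbol{u}=k\boldsymbol{s}_1$ with $k^{q+1}=-1$ gives $A+\boldsymbol{u}\herm{\boldsymbol{u}}=S\,\mathrm{diag}(0,I_{r-1},0)\herm{S}$, which has rank exactly $r-1$. This makes the minimality claim in the proof of Theorem~\ref{thm:single_reduction} rigorous.

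The Jordan-basis choice is then only a concrete selection heuristic and is not needed for existence. In particular, the final theorem requires only that some sequence of $r$ columns reaches $H_r\herm{H_r}=0$, and the congruence argument above guarantees this.
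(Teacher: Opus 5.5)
Your proof is correct, and its skeleton matches the paper's proof. You append $r$ columns supplied by Theorem~\ref{thm:single_reduction}, note that appending columns keeps the row rank at $n-k$, and read off self-orthogonality of the row space from $H_r\herm{H_r}=0$.

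The genuine difference is in how the single rank-one reduction is justified. The paper's proof of Theorem~\ref{thm:single_reduction} simply asserts that a rank-$r$ Hermitian matrix is a sum of $r$ rank-one Hermitian matrices; once that is granted, the rank count is just subadditivity. This existence cannot be read off from the Jordan form, since $A$ need not be diagonalizable (the nilpotent blocks of Remark~\ref{rem:nilpotent}). Your congruence normal form $A=S\,\mathrm{diag}(I_r,0)\herm{S}$ is exactly the missing ingredient. It holds because a nonzero Hermitian form over $\F_{q^2}$ always has a vector $\boldsymbol{x}$ with $\herm{\boldsymbol{x}}A\boldsymbol{x}\neq 0$ (by surjectivity of the trace $\F_{q^2}\to\F_q$), and every element of $\F_q^*$ is a norm.

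Your route also gives more. You can take all $r$ columns at once, $H_r=[H\mid k\boldsymbol{s}_1\ \cdots\ k\boldsymbol{s}_r]$ with $k^{q+1}=-1$, so existence needs no iteration or recomputation of Jordan forms. The paper's iterative Jordan formulation only matters for the column search aimed at preserving the dual distance.

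One caution concerns your parenthetical on the ``explicit version''. It is not true in general that a rescaled active column $k\boldsymbol{w}_j$ of $W=P^{-1}$ lowers the rank. One has $\Rank(A+\boldsymbol{u}\herm{\boldsymbol{u}})=\Rank(A)-1$ iff $\boldsymbol{u}=A\boldsymbol{y}$ with $\herm{\boldsymbol{y}}A\boldsymbol{y}=-1$, and $A\boldsymbol{w}_j\neq\mathbf{0}$ does not force $\boldsymbol{w}_j\in\im(A)$.

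For example, over $\F_4$ let $A$ be the $3\times 3$ all-ones matrix. Take $P$ with columns $(1,\omega,\omega^2)^T,(1,1,0)^T,(1,1,1)^T$, so $J_A=\mathrm{diag}(0,0,1)$. The first column of $P^{-1}$ is $\boldsymbol{w}=(\omega,\omega,1)^T$, which satisfies $A\boldsymbol{w}\neq\mathbf{0}$. Yet $A+\boldsymbol{w}\herm{\boldsymbol{w}}$ has rank $2$, and since $k^{3}=1$ for every $k\in\F_4^*$, no rescaling helps. Because you explicitly do not rely on that remark for existence, this does not affect your proof.
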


\begin{proof}
We provide a constructive proof by iteratively applying the rank reduction theorem.

{\bf Step 1 Hermitian Self-Orthogonality}: First recall that a linear code $\mathcal{C}$ with generator matrix $G \in \F_{q^2}^{k \times N}$ is Hermitian self-orthogonal if and only if $G\herm{G} = 0_{k \times k}$. In other words,  for any two codewords $\boldsymbol{u}G, \boldsymbol{v}G \in \mathcal{C}$ ($\boldsymbol{u},\boldsymbol{v} \in \F_{q^2}^k$), their Hermitian inner product satisfies
\begin{equation}
    \langle \boldsymbol{u}G, \boldsymbol{v}G \rangle_h = \boldsymbol{u}G \herm{(\boldsymbol{v}G)} = \boldsymbol{u} \left(G\herm{G}\right) \left(\herm{\boldsymbol{v}}\right)^T = 0.
\end{equation}

The given parity-check matrix $H$ of $\mathcal{C}$ is full row-rank (by definition of parity-check matrices, $\Rank(H)=n-k$ to achieve code dimension $k$). We initialize our construction with $H_0 = H$, $N_0 = n$, and $A_0 = H_0\herm{H_0}$ with $\Rank(A_0)=r$.

{\bf Step 2 Iterative Rank Reduction}:  
We apply the rank reduction theorem recursively for $i = 0, 1, \dots, r-1$:
\begin{enumerate}
    \item At iteration $i$, we have a full row-rank matrix $H_i \in \F_{q^2}^{(n-k) \times (n+i)}$ with associated Hermitian matrix $A_i = H_i\herm{H_i}$ of rank $r-i$.
    \item By the rank reduction theorem, there exists a column vector $\boldsymbol{u}_i$ such that the augmented matrix $H_{i+1} = \begin{bmatrix} H_i & \boldsymbol{u}_i \end{bmatrix}$ satisfies
    \begin{equation}
        \Rank\left(H_{i+1}\herm{H_{i+1}}\right) = \Rank\left(A_i + \boldsymbol{u}_i\herm{\boldsymbol{u}_i}\right) =r-(i+1).
    \end{equation}
    \item The row rank of $H_{i+1}$ remains $n-k$: linearly independent row vectors remain linearly independent after appending an additional entry, so full row rank is preserved across all iterations.
\end{enumerate}
{\bf Step 3: Final Code Construction}
After $r$ iterations, we obtain the augmented matrix $H_r \in \F_{q^2}^{(n-k) \times (n+r)}$, with associated Hermitian matrix
\begin{equation}
    A_r = H_r \herm{H_r}, \quad \Rank(A_r) = r - r = 0 \implies A_r = 0_{(n-k) \times (n-k)}.
\end{equation}
Since $H_r$ is full row-rank of size $(n-k) \times (n+r)$, we take $G = H_r$ as the generator matrix of a new code $\mathcal{C}'$. The code parameters are:
1.  Length: $N = n+r$ (number of columns of $G$),
2.  Dimension: $n-k$ (row rank of $G$),
so $\mathcal{C}'$ is an $[n+r, n-k]_{q^2}$ linear code.
By construction, $G\herm{G} = H_r \herm{H_r} = 0_{(n-k) \times (n-k)}$, which satisfies the Hermitian self-orthogonality condition. This completes the proof.
\end{proof}
Next, we determine a dual-distance preservation criterion for these extended Hermitian self-orthogonal codes.

Let $\mathcal{C}=[n,k,d]_{q^2}$ be a classical linear code over $\F_{q^2}$ 
with parity-check matrix $H\in\F_{q^2}^{(n-k)\times n}$. Define the Hermitian matrix
\begin{equation}
A = HH^{\dagger} \in \F_{q^2}^{(n-k)\times (n-k)},
\end{equation}
where $H^{\dagger}$ denotes the Hermitian conjugate transpose. Let $A=PJP^{-1}$ be 
the Jordan decomposition with $W=P^{-1}$.


\begin{theorem}[Distance Preservation]\label{thm:distance_preservation}
Let $A=HH^{\dagger}=PJ_{A}P^{-1}$ be the Jordan decomposition and put $W=P^{-1}$. 
Assume $\Rank(A)=1$ and $A$ has a non-zero eigenvalue, and let $\mathbf{w}_j$ be 
a column of $W$ satisfying the active condition
\begin{enumerate}
\item[(C1)] \textbf{Active condition:} $A\mathbf{w}_j \neq \mathbf{0}$;
\end{enumerate}
so that $\mathcal{C}_{\mathrm{ext}}=\langle\mathcal{C}^{\perp_{h}},
\mathbf{w}_j\rangle$ is Hermitian self-orthogonal with parameters $[n+1,n-k]$.
Define the \emph{affine coset-leader weight}
\begin{equation}
\label{eq:daff}
d_{\mathrm{aff}}(\mathbf{w}_j) \;:=\; \min\bigl\{\,\wt(\mathbf{x}) : 
\mathbf{x}\in\F_{q^2}^{n},\ H\mathbf{x}^{T}=\mathbf{w}_j\,\bigr\}.
\end{equation}
Then
\begin{equation}
\label{eq:exactd}
d(\mathcal{C}_{\mathrm{ext}}^{\perp_{h}}) 
= \min\bigl\{\,d,\ d_{\mathrm{aff}}(\mathbf{w}_j)+1\,\bigr\},
\end{equation}
and consequently
\begin{equation}
\label{eq:exactcrit}
d(\mathcal{C}_{\mathrm{ext}}^{\perp_{h}}) = d
\quad\Longleftrightarrow\quad
d_{\mathrm{aff}}(\mathbf{w}_j) \geq d-1,
\end{equation}
i.e.\ the affine coset $\{\mathbf{x}:H\mathbf{x}^T=\mathbf{w}_j\}$ contains 
no vector of weight $\leq d-2$.
\end{theorem}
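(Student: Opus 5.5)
The plan is to describe the Hermitian dual of $\mathcal{C}_{\mathrm{ext}}$ explicitly as a kernel, and then split its nonzero codewords according to whether the new last coordinate vanishes. A generator matrix of $\mathcal{C}_{\mathrm{ext}}$ is the one-step augmented matrix $H_1=(H\mid \mathbf{u})$ from Theorem~\ref{thm:single_reduction} and Corollary~\ref{cor:iterative_reduction}, with $\mathbf{u}=\kappa\mathbf{w}_j$ for some $\kappa\in\F_{q^2}^{*}$. Here $\kappa$ is chosen via the norm map so that $H_1\herm{H_1}=0$, which is possible because $\Rank(A)=1$, $A$ has a nonzero eigenvalue, and (C1) holds. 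A vector $\mathbf{z}=(\mathbf{x},y)\in\F_{q^2}^{n+1}$ lies in $\mathcal{C}_{\mathrm{ext}}^{\perp_h}$ iff $H_1\mathbf{z}^{(q)T}=\mathbf{0}$, i.e.
\[
H(\mathbf{x}^{(q)})^{T}+y^{q}\kappa\,\mathbf{w}_j=\mathbf{0}.
\]
Since the Frobenius map $\mathbf{x}\mapsto\mathbf{x}^{(q)}$ is a weight-preserving bijection of $\F_{q^2}^{n}$, I will replace $\mathbf{x}^{(q)}$ by $\mathbf{x}$ and $y^q$ by $y$ throughout. The minimum weight is unaffected, since this only reparametrizes $\mathcal{C}_{\mathrm{ext}}^{\perp_h}$ by a weight-preserving bijection.

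\emph{Case $y=0$.} The condition becomes $H\mathbf{x}^T=\mathbf{0}$, i.e. $\mathbf{x}\in\mathcal{C}$. Hence these codewords are exactly $(\mathbf{c},0)$ with $\mathbf{c}\in\mathcal{C}$, and their minimum nonzero weight is exactly $d$, attained by padding a minimum-weight codeword of $\mathcal{C}$ with a zero.

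\emph{Case $y\neq 0$.} Put $\mathbf{x}'=-(y\kappa)^{-1}\mathbf{x}$. Then $H\mathbf{x}'^{T}=\mathbf{w}_j$, and $\wt(\mathbf{z})=\wt(\mathbf{x}')+1$. Conversely, every solution $\mathbf{x}'$ of $H\mathbf{x}'^{T}=\mathbf{w}_j$ produces such a codeword with $y=-\kappa^{-1}$, after undoing the Frobenius substitution. The affine coset $\{\mathbf{x}:H\mathbf{x}^T=\mathbf{w}_j\}$ is nonempty because $H$ has full row rank, so $d_{\mathrm{aff}}(\mathbf{w}_j)$ is finite. The minimum weight over this case is therefore exactly $d_{\mathrm{aff}}(\mathbf{w}_j)+1$.

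Taking the minimum over the two cases gives \eqref{eq:exactd}. The equivalence \eqref{eq:exactcrit} is then immediate: $\min\{d,\,d_{\mathrm{aff}}+1\}=d$ holds iff $d_{\mathrm{aff}}+1\ge d$. Equivalently, no vector of weight $\le d-2$ lies in the coset.

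I expect the only delicate point to be the bookkeeping: the Hermitian dual condition involves conjugates, and the appended column is a scalar multiple $\kappa\mathbf{w}_j$ rather than $\mathbf{w}_j$ itself. I will handle both by the two invariances above. Frobenius conjugation preserves Hamming weight. Scaling by a nonzero scalar maps the coset $\{H\mathbf{x}^T=\lambda\mathbf{w}_j\}$ bijectively and weight-preservingly onto $\{H\mathbf{x}^T=\mathbf{w}_j\}$. Together these show that $d_{\mathrm{aff}}$ does not depend on these normalizations. Self-orthogonality and the parameters $[n+1,n-k]$ are supplied by Theorem~\ref{Iterative Self-Orthogonal Construction}, so they need not be reproved.
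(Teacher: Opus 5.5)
Your proposal is correct and follows essentially the same route as the paper's proof. Both split dual words $(\mathbf{x},y)$ according to whether the appended coordinate vanishes, reduce the case $y=0$ to codewords of $\mathcal{C}$, and reduce the case $y\neq 0$ (via Frobenius and a nonzero rescaling) to the affine coset $\{\mathbf{x}:H\mathbf{x}^T=\mathbf{w}_j\}$. You are somewhat more careful than the paper: you check that both bounds are attained, you note the coset is nonempty because $H$ has full row rank, and you identify the $y=0$ words as exactly $(\mathbf{c},0)$ rather than relying on the paper's loosely stated inclusion $\mathcal{C}^{\perp_h}\subseteq\mathcal{C}_{\mathrm{ext}}^{\perp_h}$. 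Your case analysis also uses only $\kappa\neq 0$, so it applies verbatim to $[H\mid\mathbf{w}_j]$ itself.
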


\begin{remark}[Basis invariance of the criterion]\label{rem:basis_inv}
The quantity $d_{\mathrm{aff}}(\mathbf{w}_j)$ is intrinsic: under a change 
$H'=UH$ with $U$ invertible one has $A'=UAU^{\dagger}$ (rank preserved) and 
$\mathbf{w}'=U\mathbf{w}_j$ (up to scalar), and 
$\{\mathbf{x}:H'\mathbf{x}^T=\mathbf{w}'\}=\{\mathbf{x}:H\mathbf{x}^T=\mathbf{w}_j\}$ 
by left-multiplying $U^{-1}$. The active condition (C1) is likewise 
basis-free when phrased as the rank drop 
$\Rank(A+\mathbf{w}_j\mathbf{w}_j^{\dagger})=\Rank(A)-1$ 
(cf.~Remark~\ref{rem:jordan_basis}); the displayed form $A\mathbf{w}_j\neq0$ 
is its coordinate equivalent.
\end{remark}

\begin{remark}[Role of the nullspace condition \textup{(C2$'$)}]\label{rem:C2prime}
The former condition \textup{(C2$'$)}---$d(NullSpace(H_S^{(q)}))\geq d-1$ with 
$S=supp(\mathbf{w}_j)$ and $H_S$ the submatrix of $H$ restricted to rows indexed 
by $S$---is retained as a fast \emph{screening heuristic} only. 
It is \textbf{basis-dependent}: under $H\mapsto UH$ the support 
$S=supp(\mathbf{w}_j)$ moves and $d(NullSpace(H_S^{(q)}))$ can change 
arbitrarily, while $d(\mathcal{C}_{\mathrm{ext}}^{\perp_{h}})$ is fixed. 
It is \textbf{over-strong}: among the $2187$ cyclic codes of length $51$ over 
$\F_4$ with $\Rank(A)=1$, only $189$ ($8.6\%$) satisfy \textup{(C2$'$)} in the 
default basis, although the overwhelming majority preserve the distance. 
And it \textbf{admits false positives}: see Example~\ref{ex:falsepos} below. 
The structural reason is that \textup{(C2$'$)} inspects the null space 
$\{\mathbf{x}:H_S^{(q)}\mathbf{x}^T=\mathbf{0}\}$, whereas failure is always 
witnessed by a low-weight vector of the affine coset 
$\{\mathbf{x}:H\mathbf{x}^T=\mathbf{w}_j\}$, which no nullspace condition can see.
\end{remark}

\begin{example}[A false positive of \textup{(C2$'$)}]\label{ex:falsepos}
The $[51,38,6]_4$ cyclic code with defining-set representatives $\{0,1,2,5\}$ 
has $\Rank(A)=1$ and collapses: $d(\mathcal{C}_{\mathrm{ext}}^{\perp_{h}})=4<6$ 
(the affine coset $\{\mathbf{x}:H\mathbf{x}^T=\mathbf{w}_j\}$ contains a 
weight-$3$ vector). Nevertheless there exists a basis of $\mathcal{C}^{\perp E}$ 
in which \textup{(C2$'$)} holds ($d(NullSpace(H_S^{(2)}))=5\geq d-1$). Hence the 
nullspace test can neither certify preservation in general nor be necessary for 
it; the exact criterion is \eqref{eq:exactcrit}.
\end{example}

\begin{remark}[Choosing the extension vector from $W=P^{-1}$]
\label{rem:choose_W}
In Theorem~\ref{thm:distance_preservation} the Jordan basis matrix $W=P^{-1}$ is the source of all candidate extension vectors.  Its columns are computed from the Jordan decomposition $A=HH^{\dagger}=PJ_AP^{-1}$ of the current Hermitian matrix.  Only columns satisfying the active condition~(C1) can reduce the rank of $A$ by one; kernel columns (those with $A\mathbf{w}_j=\mathbf{0}$) leave the rank unchanged and must be discarded.  In practice we scan the columns of $W$ in increasing Hamming-weight order, keep the first column that satisfies~(C1), and certify it by the exact criterion~\eqref{eq:exactcrit} (equivalently, by a direct computation of $d(\mathcal{C}_{\mathrm{ext}}^{\perp_h})$); the former nullspace condition \textup{(C2$'$)} is used only as a fast pre-filter (Remark~\ref{rem:C2prime}).  This weight-first ordering is a search heuristic and is not required by the theory.
\end{remark}

\begin{IEEEproof}[Proof of Theorem~\ref{thm:distance_preservation}]
Self-orthogonality of $\mathcal{C}_{\mathrm{ext}}$ follows from (C1) exactly as 
before: appending $\mathbf{w}_j$ reduces the rank of $A$ to $0$ by 
Theorem~\ref{thm:single_reduction}.

Write a dual word as $\mathbf{c}=(\mathbf{x},t)\in
\mathcal{C}_{\mathrm{ext}}^{\perp_{h}}$, $\mathbf{x}\in\F_{q^2}^{n}$, 
$t\in\F_{q^2}$. Orthogonality against the rows of $[H\mid\mathbf{w}_j]$ reads
\begin{equation}
\label{eq:dualword}
H\,\mathbf{x}^{(q)T} = -\,t^{q}\,\mathbf{w}_j,
\end{equation}
where $\mathbf{x}^{(q)}$ applies the Frobenius map coordinatewise.

\emph{Case $t=0$:} then $H\mathbf{x}^{(q)T}=\mathbf{0}$, so 
$\mathbf{x}^{(q)}\in\mathcal{C}$ and 
$\wt(\mathbf{c})=\wt(\mathbf{x})\geq d(\mathcal{C})=d$.

\emph{Case $t\neq0$:} $\mathbf{x}^{(q)}$ lies in the affine coset 
$\{\mathbf{y}:H\mathbf{y}^T=-t^q\mathbf{w}_j\}$, whose minimum weight is 
$d_{\mathrm{aff}}(-t^q\mathbf{w}_j)=d_{\mathrm{aff}}(\mathbf{w}_j)$ since the 
coset scales with $t$. Hence 
$\wt(\mathbf{c})=\wt(\mathbf{x})+1\geq d_{\mathrm{aff}}(\mathbf{w}_j)+1$.

Taking the minimum over the two cases proves \eqref{eq:exactd}, and 
\eqref{eq:exactcrit} is immediate. Since 
$\mathcal{C}^{\perp_h}\subseteq\mathcal{C}_{\mathrm{ext}}^{\perp_h}$, the upper 
bound $d(\mathcal{C}_{\mathrm{ext}}^{\perp_h})\leq d$ in the preserving case 
is automatic.
\end{IEEEproof}

%

\begin{corollary}
\label{cor:sufficient_not_necessary}
The exact criterion of Theorem~\ref{thm:distance_preservation} is both necessary 
and sufficient. The heuristic condition \textup{(C2$'$)}, evaluated in any fixed 
basis, is neither: it is not sufficient (Example~\ref{ex:falsepos}) and not 
necessary (preserved codes may fail it in a chosen basis---indeed in 
$91.4\%$ of $r=1$ cyclic codes at $n=51$ in the default basis).
\end{corollary}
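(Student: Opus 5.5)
The corollary has two halves, and I would prove them separately.

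\emph{Exact criterion.} Necessity and sufficiency of the criterion is a restatement of the equivalence \eqref{eq:exactcrit} in Theorem~\ref{thm:distance_preservation}. I would simply re-derive it from the formula \eqref{eq:exactd}:
\[
d(\mathcal{C}_{\mathrm{ext}}^{\perp_h})=\min\{d,\ d_{\mathrm{aff}}(\mathbf{w}_j)+1\}.
\]
This minimum equals $d$ exactly when $d_{\mathrm{aff}}(\mathbf{w}_j)+1\ge d$, that is, when $d_{\mathrm{aff}}(\mathbf{w}_j)\ge d-1$. So the condition is sufficient (the "if" direction) and necessary (the "only if" direction).

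The one point needing care is the proof of \eqref{eq:exactd} itself. The case analysis there gives only lower bounds. For equality I need attainment in both cases:
\begin{itemize}
\item A weight-$d$ word of $\mathcal{C}$, after applying Frobenius conjugation and padding with $t=0$, lies in $\mathcal{C}_{\mathrm{ext}}^{\perp_h}$ and has weight $d$.
\item A minimum-weight $\mathbf{y}$ in the coset $\{\mathbf{y}:H\mathbf{y}^T=-t^q\mathbf{w}_j\}$, for a suitable $t\neq0$, yields a dual word $(\mathbf{y}^{(q^{-1})},t)$ of weight $d_{\mathrm{aff}}(\mathbf{w}_j)+1$.
\end{itemize}
I would add these two sentences so that the equality, and hence the "only if" direction, is airtight. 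I would also check that scaling by $-t^q$ does not change the minimum coset weight: multiplying by a nonzero scalar preserves supports.

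\emph{Failure of (C2$'$).} The claim that (C2$'$) is not sufficient is witnessed by Example~\ref{ex:falsepos}. That is the $[51,38,6]_4$ cyclic code for which, in some basis, (C2$'$) holds with value $5\ge d-1$, yet $d(\mathcal{C}_{\mathrm{ext}}^{\perp_h})=4<6$. I would double-check this against \eqref{eq:exactd}: a weight-$3$ coset vector gives $\min\{6,3+1\}=4$, which is consistent.

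The claim that (C2$'$) is not necessary needs a preserved code that fails (C2$'$) in the chosen basis. The computational census in Remark~\ref{rem:C2prime} supplies this: $189$ of $2187$ codes pass (C2$'$) while most codes preserve $d$. By counting, at least one code both preserves $d$ and fails (C2$'$). The $91.4\%$ figure is just $1-189/2187$.

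\emph{Main obstacle.} This last step rests on computation rather than a structural argument. To be rigorous I would exhibit one explicit such cyclic code, with $d_{\mathrm{aff}}(\mathbf{w}_j)\ge d-1$ verified but (C2$'$) failing. Alternatively, I would invoke basis dependence: starting from a passing code, apply a change $H\mapsto UH$ that moves $\mathrm{supp}(\mathbf{w}_j)$ so that (C2$'$) fails, while $d_{\mathrm{aff}}$ is unchanged by Remark~\ref{rem:basis_inv}.
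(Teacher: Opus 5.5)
Your proposal is correct and matches the paper's own justification: the exact criterion is read off from \eqref{eq:exactd}, non-sufficiency of (C2$'$) comes from Example~\ref{ex:falsepos}, and non-necessity comes from the $189/2187$ census in Remark~\ref{rem:C2prime}, where your pigeonhole count makes explicit what the paper leaves implicit. Your two attainment sentences are a genuine improvement. The paper's proof of \eqref{eq:exactd} only establishes the lower bound $\min\{d,\ d_{\mathrm{aff}}(\mathbf{w}_j)+1\}$ together with the bound $\le d$, so without your second bullet the ``only if'' direction would not actually follow.
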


\begin{remark}[Why the active condition alone is not enough]
Condition~(C1) alone is \textbf{not sufficient} for distance preservation. 
Consider the BKLC in magma 
$\mathcal{C}=[65,44,12]_4$ with $Rank(A)=1$. For the unique 
$\mathbf{w}_j\in \im(A)\setminus\Ker(A)$, we have 
$A\mathbf{w}_j\neq\mathbf{0}$ yet 
$d(\mathcal{C}_{\mathrm{ext}}^{\perp_{h}})=10<12$: the affine coset 
$\{\mathbf{x}:H\mathbf{x}^T=\mathbf{w}_j\}$ contains a vector of weight $9$, 
so $d_{\mathrm{aff}}(\mathbf{w}_j)=9<11=d-1$ and 
Theorem~\ref{thm:distance_preservation} yields 
$d(\mathcal{C}_{\mathrm{ext}}^{\perp_{h}})=9+1=10$, in exact agreement with 
direct computation. This demonstrates that the coset-leader 
condition~\eqref{eq:exactcrit} is the precise safeguard against distance collapse.
\end{remark}

\begin{corollary}
When $Rank(A)=1$ and $A$ has a non-zero eigenvalue, 
$\im(A)\cap\Ker(A)=\{\mathbf{0}\}$. Hence condition~(C1) is equivalent 
to $\mathbf{w}_j\neq\mathbf{0}$ with $A\mathbf{w}_j\neq\mathbf{0}$, and 
there are exactly $q^2-1$ valid choices for $\mathbf{w}_j$ in $\im(A)$.
\end{corollary}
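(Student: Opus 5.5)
The plan is to reduce everything to the rank-one Hermitian outer-product representation of Lemma~\ref{lem:rank1_outer}, and then do a short linear-algebra computation.

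First, since $\Rank(A)=1$ and $A$ is Hermitian, Lemma~\ref{lem:rank1_outer} gives $A=c\,\mathbf{v}\mathbf{v}^{\dagger}$ with $c\in\F_q^{*}$ and $\mathbf{v}\neq\mathbf{0}$. Then $\im(A)=\operatorname{span}_{\F_{q^2}}\{\mathbf{v}\}$, because $A\mathbf{x}=c(\mathbf{v}^{\dagger}\mathbf{x})\mathbf{v}$ and some $\mathbf{x}$ makes $\mathbf{v}^{\dagger}\mathbf{x}\neq 0$. Likewise $\Ker(A)=\{\mathbf{x}:\mathbf{v}^{\dagger}\mathbf{x}=0\}$.

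Next I use the eigenvalue hypothesis. From $A\mathbf{v}=c(\mathbf{v}^{\dagger}\mathbf{v})\mathbf{v}$, the only possible nonzero eigenvalue is $\lambda=c\,\mathbf{v}^{\dagger}\mathbf{v}$, which equals $\operatorname{tr}(A)$. Any vector outside $\Ker(A)$ is mapped into $\operatorname{span}\{\mathbf{v}\}$, so an eigenvector for a nonzero eigenvalue must be a multiple of $\mathbf{v}$. Hence the assumption that $A$ has a nonzero eigenvalue forces $\mathbf{v}^{\dagger}\mathbf{v}\neq 0$. This is the semisimple case of Lemma~\ref{lem:rank1_outer}, and equivalently $A^{2}=\lambda A$ with $\lambda\neq0$.

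Now suppose $\mathbf{y}\in\im(A)\cap\Ker(A)$. Write $\mathbf{y}=\alpha\mathbf{v}$. Then $\mathbf{0}=A\mathbf{y}=\alpha\lambda\mathbf{v}$. Since $\lambda\neq0$ and $\mathbf{v}\neq\mathbf{0}$, this gives $\alpha=0$, so $\im(A)\cap\Ker(A)=\{\mathbf{0}\}$.

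It follows that for $\mathbf{w}_j\in\im(A)$, condition (C1) $A\mathbf{w}_j\neq\mathbf{0}$ holds exactly when $\mathbf{w}_j\neq\mathbf{0}$. Since $\im(A)$ is a one-dimensional $\F_{q^2}$-space, it has $q^{2}$ elements, and removing $\mathbf{0}$ leaves exactly $q^{2}-1$ valid choices $\mathbf{w}_j=\alpha\mathbf{v}$, $\alpha\in\F_{q^2}^{*}$.

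The step needing most care is the meaning of ``valid''. Counted as vectors of $\im(A)$ satisfying (C1), the count $q^2-1$ is immediate. If instead one also demands the exact rank drop $A+\mathbf{w}_j\mathbf{w}_j^{\dagger}=0$, then $\alpha$ must satisfy $N(\alpha)=-c$. That is $q+1$ values, and the normalization is the one in the proof of Theorem~\ref{thm:single_reduction}. I would state explicitly that the corollary counts the line $\im(A)\setminus\{\mathbf{0}\}$, with the norm scaling applied afterwards as in Theorem~\ref{thm:single_reduction}. I would also note that $d_{\mathrm{aff}}$ is invariant under this scaling, as shown in the proof of Theorem~\ref{thm:distance_preservation}, so every such choice gives the same distance behaviour.
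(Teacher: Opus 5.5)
Your proof is correct, but it takes a different route from the paper. The paper gives no separate proof of this corollary. Its only supporting argument is Remark~\ref{rem:nilpotent}(i): from $J_A=\mathrm{diag}(0,\ldots,0,\lambda)$ one reads off that $A$ is diagonalizable, so $\F_{q^2}^{n-k}=\im(A)\oplus\Ker(A)$ (the $\lambda$-eigenspace plus the $0$-eigenspace), and the count follows from $\dim\im(A)=1$.

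You instead start from the outer-product form $A=c\,\mathbf{v}\mathbf{v}^{\dagger}$ of Lemma~\ref{lem:rank1_outer}. You identify the only possible nonzero eigenvalue as $\lambda=c\,\mathbf{v}^{\dagger}\mathbf{v}=\operatorname{tr}(A)$ and check $A(\alpha\mathbf{v})=\alpha\lambda\mathbf{v}$ directly. This buys several things:
\begin{itemize}
\item It needs no Jordan theory. In particular, it does not rely on the fact, taken for granted in the Jordan route, that a rank-one matrix with a nonzero eigenvalue cannot carry a nilpotent $2\times 2$ block.
\item It shows $\lambda\in\F_q$ directly, since $c\in\F_q$ and $\mathbf{v}^{\dagger}\mathbf{v}=\sum_i v_i^{q+1}\in\F_q$.
\item It gives the explicit parametrization $\mathbf{w}=\alpha\mathbf{v}$, $\alpha\in\F_{q^2}^{*}$, of the candidates.
\end{itemize}
The Jordan route, in exchange, fits the paper's framework of picking columns of $W=P^{-1}$.

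Your caveat about the word ``valid'' is well founded. Since $A+(\alpha\mathbf{v})(\alpha\mathbf{v})^{\dagger}=(c+N(\alpha))\,\mathbf{v}\mathbf{v}^{\dagger}$, only the $q+1$ scalars with $N(\alpha)=-c$ give the rank drop. The remaining $(q-2)(q+1)$ nonzero multiples satisfy (C1) but leave the rank equal to $1$. So $q^2-1$ is right only as a count of vectors of $\im(A)$ satisfying (C1). It coincides with the rank-drop count exactly when $q=2$, where $c=1$ and $N(\alpha)=1$ for every $\alpha\in\F_4^{*}$. For $q>2$, the equivalence between (C1) and the rank drop asserted in Remark~\ref{rem:jordan_basis} needs the norm normalization of Theorem~\ref{thm:single_reduction}.
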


\begin{corollary}
If $ NullSpace(H_S^{(q)})=\{\mathbf{0}\}$, then condition~(C2$'$) is 
vacuously satisfied.
\end{corollary}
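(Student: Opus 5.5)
The plan is to unwind the definition of condition~(C2$'$) and use the standard convention for the minimum distance of the zero code. Recall from Remark~\ref{rem:C2prime} that (C2$'$) asks for $d(NullSpace(H_S^{(q)}))\geq d-1$, with $S=supp(\mathbf{w}_j)$ and $H_S$ the restriction of $H$ described there. The minimum distance of a linear code $\mathcal{D}$ is defined in the Preliminaries as $\min\{\wt(\mathbf{u}):\mathbf{u}\in\mathcal{D},\ \mathbf{u}\neq\mathbf{0}\}$. So the only thing to examine is what this minimum is when $\mathcal{D}=NullSpace(H_S^{(q)})$ contains no nonzero vector.

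First, I will observe that the hypothesis $NullSpace(H_S^{(q)})=\{\mathbf{0}\}$ makes the index set $\{\mathbf{u}\in\mathcal{D}:\mathbf{u}\neq\mathbf{0}\}$ empty. Second, I will adopt the usual convention $\min\emptyset=+\infty$, so that $d(\{\mathbf{0}\})=+\infty$. Equivalently, the universally quantified statement ``every nonzero $\mathbf{x}$ with $H_S^{(q)}\mathbf{x}^T=\mathbf{0}$ has $\wt(\mathbf{x})\geq d-1$'' has no instances and is therefore true. Either reading gives $d(NullSpace(H_S^{(q)}))\geq d-1$, which is exactly (C2$'$).

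As a side remark, I will note when the hypothesis holds. The null space of $H_S^{(q)}$ is trivial exactly when $H_S^{(q)}$ has full column rank; since the Frobenius map is a field automorphism, this is equivalent to $H_S$ having full column rank. This makes the corollary a cheap, checkable trigger for the pre-filter of Remark~\ref{rem:choose_W}.

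The only real obstacle is conventional rather than mathematical: one must fix that the minimum distance of the zero code is $+\infty$ (not $0$), so that the inequality is meaningful. I will state this convention explicitly. I will also add a caution, in line with Corollary~\ref{cor:sufficient_not_necessary}: vacuous satisfaction of (C2$'$) does not certify distance preservation, which is decided only by the exact criterion~\eqref{eq:exactcrit}.
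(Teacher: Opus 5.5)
Your argument is correct and is exactly the reasoning the paper leaves implicit; it states this corollary without proof. With a trivial null space the minimum defining $d(\cdot)$ ranges over the empty set, so under the convention $d(\{\mathbf{0}\})=+\infty$ the inequality $d(NullSpace(H_S^{(q)}))\geq d-1$ holds trivially. Your side remark is also right, but under the paper's literal definition $H_S$ is $|S|\times n$ with $|S|\leq n-k$, so full column rank forces $k=0$ and the hypothesis is essentially never met in practice.
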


\begin{remark}[The nilpotent case and condition~(C1)]\label{rem:nilpotent}
The rank-one Hermitian matrix $A$ can appear in two Jordan forms:
\begin{itemize}
  \item[(i)] \textbf{Semisimple case:} $J_A=diag(0,\ldots,0,\lambda)$ with $\lambda\in\F_q^{*}$.  Then $A^2\neq 0$, and we have the direct-sum decomposition
  \begin{equation}
  \F_{q^2}^{n-k}=\im(A)\oplus\Ker(A).
  \end{equation}
  Condition~(C1) is equivalent to $\mathbf{w}_j\in\im(A)\setminus\{0\}$.  Among the columns of $W=P^{-1}$, exactly those spanning $\im(A)$ satisfy (C1); every kernel column is mapped to zero.
  \item[(ii)] \textbf{Nilpotent case:} $J_A$ is a single $2\times 2$ nilpotent Jordan block (with all other diagonal entries zero).  Then $A^2=0$, which implies $\im(A)\subseteq\Ker(A)$.  Consequently the set difference $\im(A)\setminus\Ker(A)$ is empty, and any characterization of condition~(C1) as ``$\mathbf{w}_j\in\im(A)\setminus\Ker(A)$'' becomes vacuous.  This is precisely why Theorem~\ref{thm:distance_preservation} assumes a non-zero eigenvalue.
\end{itemize}

In the nilpotent case one must therefore rely on the Jordan-basis formulation: select a column $\mathbf{w}_j$ of $W=P^{-1}$ and verify the active condition $A\mathbf{w}_j\neq\mathbf{0}$ directly.  Only such columns guarantee the rank drop $\Rank(A+\mathbf{w}_j\mathbf{w}_j^{\dagger})=\Rank(A)-1$.  Our implementation always performs this direct rank check, so it handles both semisimple and nilpotent Jordan blocks uniformly.
\end{remark}


\begin{theorem}[Iterative Distance Preservation]\label{thm:iterative}
Let $Rank(A)=r\geq 1$ and write $A=HH^{\dagger}=PJ_{A}P^{-1}$ with $W=P^{-1}$. For $i=1,\ldots,r$, let $H_i$ be the parity-check matrix after step $i-1$ (with $H_1=H$), let $A_i=H_iH_i^{\dagger}$ be the current Hermitian matrix, and let $A_i=P_iJ_{A_i}P_i^{-1}$ with $W_i=P_i^{-1}$.  If at each step there exists a column
$\mathbf{w}_{j_i}$ of $W_i=P_i^{-1}$ satisfying:
\begin{enumerate}
\item[(C1$_i$)] $A_i\,\mathbf{w}_{j_i}\neq\mathbf{0}$;
\item[(C2$''_i$)] \textbf{coset condition:} $d_{\mathrm{aff}}(\mathbf{w}_{j_i})\geq d-1$, 
i.e.\ the affine coset $\{\mathbf{x}:H_i\mathbf{x}^T=\mathbf{w}_{j_i}\}$ has 
minimum weight at least $d-1$,
\end{enumerate}
then after $r$ iterations the resulting self-orthogonal code has parameters 
$[n+r,n-k]$ and $d(\mathcal{C}_{\mathrm{ext}}^{\perp_{h}})=d$.
\end{theorem}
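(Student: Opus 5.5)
The plan is an induction on the step index $i$, using Theorem~\ref{thm:distance_preservation} (more precisely, the computation in its proof) once per step. The self-orthogonality part is already covered by Corollary~\ref{cor:iterative_reduction} and Theorem~\ref{Iterative Self-Orthogonal Construction}. Condition (C1$_i$) is exactly the active condition that makes the rank drop by one at step $i$. So after $r$ steps $H_{r+1}H_{r+1}^{\dagger}=0$, and the row space of $H_{r+1}$ is a Hermitian self-orthogonal $[n+r,n-k]$ code. What remains is the distance statement.

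\textbf{Inductive hypothesis.} Let $\mathcal{D}_i$ be the code with parity-check matrix $H_i$ (after Frobenius twisting, as in \eqref{eq:dualword}). The claim at stage $i$ is that $d(\mathcal{D}_i)=d$, i.e.\ every nonzero $\mathbf{y}\in\F_{q^2}^{n+i-1}$ with $H_i\mathbf{y}^{(q)T}=\mathbf{0}$ has weight at least $d$. The base case $i=1$ is the hypothesis $d(\mathcal{C})=d$.

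\textbf{Inductive step.} Consider $H_{i+1}=[H_i\mid\mathbf{w}_{j_i}]$ and a dual word $\mathbf{c}=(\mathbf{x},t)$. This is the same computation as in the proof of Theorem~\ref{thm:distance_preservation}; note that proof never used $\Rank(A)=1$ in the distance part, only the shape of the equation. The word must satisfy $H_i\mathbf{x}^{(q)T}=-t^q\mathbf{w}_{j_i}$.
\begin{itemize}
\item If $t=0$, then $\mathbf{x}^{(q)}$ is a nonzero word of $\mathcal{D}_i$, so the inductive hypothesis gives weight at least $d$.
\item If $t\neq0$, then $\mathbf{x}^{(q)}$ lies in a nonzero scalar multiple of the affine coset $\{\mathbf{y}:H_i\mathbf{y}^T=\mathbf{w}_{j_i}\}$. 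Scaling preserves weight, and Frobenius preserves support, so (C2$''_i$) gives $\wt(\mathbf{x})\ge d-1$ and hence $\wt(\mathbf{c})\ge d$.
\end{itemize}
Therefore $d(\mathcal{D}_{i+1})\ge d$.

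\textbf{Upper bound.} Embed $\mathcal{C}$ into $\mathcal{D}_{i+1}$ by zero padding: $(\mathbf{x},0,\dots,0)$ is orthogonal to the extended rows whenever $\mathbf{x}$ is orthogonal to the rows of $H$, because the new coordinates are zero. A minimum-weight word of $\mathcal{C}$ then gives $d(\mathcal{D}_{i+1})\le d$, so equality holds and the induction closes. At $i=r$ the hypothesis yields $d(\mathcal{C}_{\mathrm{ext}}^{\perp_h})=d$.

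\textbf{Expected obstacle: bookkeeping.} The intermediate matrices $A_i=H_iH_i^{\dagger}$ need not have rank one, and their Jordan structure can mix semisimple and nilpotent blocks. One must check that only (C1$_i$) is used for the rank drop, and that the distance argument depends only on $H_i$ and not on $\Rank(A_i)$. One must also make sure that the coset in (C2$''_i$) is taken with respect to the current matrix $H_i$, whose length is $n+i-1$, rather than the original $H$. The hypothesis is stated this way, so the induction goes through with $H_i$ in place of $H$ at each step. I would also remark that the conclusion should be read as $d(\mathcal{C}_{\mathrm{ext}}^{\perp_h})=d$ for the final extended code, which yields the $[[n+r,2k-n+r,\ge d]]_q$ quantum code via Corollary~\ref{construction}.
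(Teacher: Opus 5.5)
Your route is the paper's: induct on $i$ and, at each step, rerun the two-case computation from the proof of Theorem~\ref{thm:distance_preservation} on $H_{i+1}=[H_i\mid\mathbf{w}_{j_i}]$. Your distance argument is correct and tighter than the paper's short proof. The paper cites Theorem~\ref{thm:distance_preservation} verbatim, although that theorem assumes $\Rank(A)=1$ with a nonzero eigenvalue and defines $d_{\mathrm{aff}}$ through the original $H$. You instead observe that only the equation $H_i\mathbf{x}^{(q)T}=-t^q\mathbf{w}_{j_i}$ is used, take the coset with respect to $H_i$, and justify the upper bound by zero padding.

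The step that fails as written is self-orthogonality. You assert that (C1$_i$) ``is exactly the active condition that makes the rank drop by one'' and cite Corollary~\ref{cor:iterative_reduction} and the Iterative Self-Orthogonal Construction. Those results only guarantee that \emph{some} cancelling vector exists, namely the $k\mathbf{v}_1$ of Theorem~\ref{thm:single_reduction}. They say nothing about an arbitrary column of $W_i$ with $A_i\mathbf{w}\neq\mathbf{0}$, and that implication is false.

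For Hermitian $A$, $\Rank(A+\mathbf{w}\mathbf{w}^{\dagger})=\Rank(A)-1$ holds if and only if $\mathbf{w}=A\mathbf{v}$ for some $\mathbf{v}$ with $\mathbf{v}^{\dagger}A\mathbf{v}=-1$. If $\mathbf{w}\notin\im(A)$, the rank goes \emph{up} by one, since $\im(A)$ is the Hermitian orthogonal of $\Ker(A)$. Concretely, over $\F_4$ take $\mathbf{v}=(1,\omega,\omega^2)^T$ and $A=\mathbf{v}\mathbf{v}^{\dagger}$. With $P=[\mathbf{v},(\omega^2,1,0)^T,(1,1,1)^T]$ one has $A=P\,\mathrm{diag}(1,0,0)P^{-1}$, and the first column of $P^{-1}$ is $\mathbf{w}=(1,1,\omega^2)^T$. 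Then $A\mathbf{w}=\omega^2\mathbf{v}\neq\mathbf{0}$, yet $\Rank(A+\mathbf{w}\mathbf{w}^{\dagger})=2$.

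The paper's proof makes the same jump (``Condition (C1$_i$) reduces the rank by one''), relying on the equivalence claimed in Remark~\ref{rem:jordan_basis}. So the repair belongs in the hypothesis: read (C1$_i$) as the rank-drop condition $\Rank(A_i+\mathbf{w}_{j_i}\mathbf{w}_{j_i}^{\dagger})=\Rank(A_i)-1$, as Remark~\ref{rem:basis_inv} phrases it, or verify it directly as the examples do. With that reading your induction is complete, including the parameters $[n+r,n-k]$.
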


\begin{IEEEproof}
Apply Theorem~\ref{thm:distance_preservation} inductively. At step $i$, 
$Rank(A_i)=r-i+1\geq 1$. Condition~(C1$_i$) reduces the rank by one, and 
condition~(C2$''_i$) is, by Theorem~\ref{thm:distance_preservation}, exactly the 
necessary-and-sufficient condition for the $i$-th extension to preserve the 
distance. After $r$ steps, 
$Rank(A_{r+1})=0$ and the distance remains $d$.
\end{IEEEproof}

\begin{theorem}[Quantum code construction]\label{thm:quantum_code}
 Suppose $\mathcal{C}=[n,k,d]_{q^{2}}$ is a classical code with check matrix $H$
 such that $A=HH^{\dagger}=PJ_{A}P^{-1}$, where $J_{A}$ is the Jordan canonical form of $A$ and $P$ is a non-singular matrix. If Rank(A)=r and satisfies Theorem~\ref{thm:distance_preservation} and Theorem~\ref{thm:iterative},  then there exists a 
$\mathcal{Q}=[[n+r,2k+r-n,\geq d]]_{q}$ quantum code.
\end{theorem}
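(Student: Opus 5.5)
The plan is to combine the iterative self-orthogonal construction with the distance-preservation criterion, and then apply the Hermitian construction (Corollary~\ref{construction}, in its $q$-ary form from \cite{Ketkar2006}).

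\textbf{Step 1: build the self-orthogonal code.} Starting from $H_1=H$, I would follow Theorem~\ref{thm:iterative}. At step $i$, pick a column $\mathbf{w}_{j_i}$ of $W_i=P_i^{-1}$ satisfying (C1$_i$) and (C2$''_i$), and append it: $H_{i+1}=[H_i\mid \mathbf{w}_{j_i}]$. By Theorem~\ref{thm:single_reduction} and Corollary~\ref{cor:iterative_reduction}, using the active condition as in Remark~\ref{rem:jordan_basis}, the rank of the Gram matrix drops by exactly one at each step. After $r$ steps we get $H_{r+1}H_{r+1}^{\dagger}=0$.

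As in the proof of Theorem~\ref{Iterative Self-Orthogonal Construction}, appending columns keeps the rows independent. So $H_{r+1}$ generates a Hermitian self-orthogonal code $\mathcal{C}_{\mathrm{so}}=[n+r,n-k]_{q^2}$.

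\textbf{Step 2: the dual and its parameters.} Set $\mathcal{D}=\mathcal{C}_{\mathrm{so}}^{\perp_h}$. Then $\mathcal{D}$ has length $n+r$ and dimension $(n+r)-(n-k)=k+r$, and $\mathcal{D}^{\perp_h}=\mathcal{C}_{\mathrm{so}}\subseteq\mathcal{D}$. So $\mathcal{D}$ is dual-containing.

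The Hermitian construction then gives a quantum code of length $n+r$ and dimension $2(k+r)-(n+r)=2k+r-n$. Its minimum distance is at least $d(\mathcal{D})$, and in fact at least the weight of the smallest word of $\mathcal{D}\setminus\mathcal{D}^{\perp_h}$, which is bounded below by $d(\mathcal{D})$.

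\textbf{Step 3: distance, the main point.} We need $d(\mathcal{D})\ge d$. Theorem~\ref{thm:iterative} asserts $d(\mathcal{C}_{\mathrm{ext}}^{\perp_h})=d$ after the $r$ extensions. I would nonetheless re-derive it inductively from formula~\eqref{eq:exactd}, splitting each dual word by its last coordinate.

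Write a word of $\mathcal{C}_{i+1}^{\perp_h}$ as $(\mathbf{x},t)$:
\begin{itemize}
\item If $t=0$, then $\mathbf{x}$ lies in $\mathcal{C}_i^{\perp_h}$, which has distance $\ge d$ by the inductive hypothesis.
\item If $t\ne0$, then $\mathbf{x}^{(q)}$ lies in a scaled affine coset of $H_i$. By (C2$''_i$) it has weight $\ge d-1$, so the whole word has weight $\ge d$.
\end{itemize}

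The obstacle is that Theorem~\ref{thm:distance_preservation} was stated for $\Rank(A)=1$ with a nonzero eigenvalue. However, its proof of the weight bound uses only the coset structure, not the rank hypothesis. So I would note that the lower bound holds verbatim for any appended vector, and use (C1$_i$) solely for self-orthogonality. This also covers nilpotent blocks (Remark~\ref{rem:nilpotent}).

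Finally, only the inequality $\ge d$ is claimed. That is consistent with possible impurity and with the fact that the Hermitian construction gives distance at least $d(\mathcal{D})$. This yields $[[n+r,2k+r-n,\ge d]]_q$.
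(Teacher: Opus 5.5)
Your route is the paper's: build $H_{r+1}$ with $H_{r+1}H_{r+1}^{\dagger}=0$, pass to the Hermitian dual of dimension $k+r$, bound its distance, and apply the Hermitian construction. Your Step~3 is sound, and it is more careful than the paper's ``apply Theorem~\ref{thm:distance_preservation} inductively''. Splitting on the last coordinate gives $d(\langle H_{i+1}\rangle^{\perp_h})\ge\min\{d(\langle H_i\rangle^{\perp_h}),\,d_{\mathrm{aff}}(\mathbf{w}_{j_i})+1\}$ for an arbitrary appended vector, so the rank-one, nonzero-eigenvalue hypothesis really is irrelevant to the lower bound.

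The step that does not go through is self-orthogonality in Step~1, where you derive the rank drop from (C1$_i$). Theorem~\ref{thm:single_reduction} and Corollary~\ref{cor:iterative_reduction} only guarantee that one specific vector works, namely $\mathbf{u}=k\mathbf{v}_1$ with $k^{q+1}=-c_1$ from a minimal rank-one Hermitian decomposition of $A_i$. They say nothing about an arbitrary column of $W_i$ with $A_i\mathbf{w}\neq\mathbf{0}$, and the equivalence claimed in Remark~\ref{rem:jordan_basis} is false. For a counterexample over $\F_4$, take:
\begin{itemize}
\item $A=\begin{pmatrix}1&1\\1&1\end{pmatrix}$, which has rank one and $A^2=0$;
\item $P=\begin{pmatrix}1&1\\1&0\end{pmatrix}$ and $J_A=\begin{pmatrix}0&1\\0&0\end{pmatrix}$, so $W=P^{-1}=\begin{pmatrix}0&1\\1&1\end{pmatrix}$.
\end{itemize}
The column $\mathbf{w}=(0,1)^T$ has $A\mathbf{w}=(1,1)^T\neq\mathbf{0}$, yet $A+\mathbf{w}\mathbf{w}^{\dagger}=\begin{pmatrix}1&1\\1&0\end{pmatrix}$ has rank $2$. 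The column that does reduce the rank, $(1,1)^T$, is exactly the one (C1) discards.

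This is not an accident. If $A=c\mathbf{v}\mathbf{v}^{\dagger}$ with $\mathbf{v}^{\dagger}\mathbf{v}=0$, then any $\mathbf{u}$ with $A+\mathbf{u}\mathbf{u}^{\dagger}=0$ is a multiple of $\mathbf{v}$, hence $A\mathbf{u}=\mathbf{0}$. So in the nilpotent case (C1) is incompatible with the final rank drop, and your remark that the argument ``covers nilpotent blocks'' is backwards. For odd $q$, (C1) also ignores the norm normalization. Over $\F_9$, $A=J_A=\mathrm{diag}(1,0)$ with $P=I$ and $\mathbf{w}=\mathbf{e}_1$ gives $A+\mathbf{w}\mathbf{w}^{\dagger}=\mathrm{diag}(2,0)$, so the rank does not drop.

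The paper's proof never argues this point; it simply imports the conclusion of Theorem~\ref{thm:iterative}. You can make your proof sound in either of two ways:
\begin{itemize}
\item read the hypothesis as including $\Rank(A_i+\mathbf{w}_{j_i}\mathbf{w}_{j_i}^{\dagger})=\Rank(A_i)-1$ at every step; or
\item append the normalized vectors $\mathbf{u}_i$ of Theorem~\ref{thm:single_reduction}, and impose (C2$''_i$) on those vectors instead.
\end{itemize}
Because your Step~3 works for any appended vectors, either repair leaves the rest of your argument unchanged.
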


\begin{IEEEproof}
By Theorem~\ref{Iterative Self-Orthogonal Construction}, the augmented parity-check matrix $H_r \in \F_{q^2}^{(n-k)\times(n+r)}$ satisfies $H_r\herm{H_r}=0$, so it generates a Hermitian self-orthogonal code $\mathcal{C}_{\mathrm{so}}=[n+r,n-k]_{q^2}$. The Hermitian dual $\mathcal{C}_{\mathrm{so}}^{\perp_h}$ therefore has dimension $(n+r)-(n-k)=k+r$ and minimum distance at least $d$ whenever the conditions of Theorem~\ref{thm:iterative} are satisfied at each step. Applying Corollary~\ref{construction} to $\mathcal{C}_{\mathrm{so}}$ yields a quantum code with parameters $[[n+r,\, (n+r)-2(n-k),\, \geq d]]_q = [[n+r,\, 2k-n+r,\, \geq d]]_q$.
\end{IEEEproof}

\section{New Quantum Codes}

\subsection{Jordan Extension of Dual-Containing Codes from Cyclic Codes}

\begin{definition}[Hermitian Deficiency]
For a cyclic code $\mathcal{C} = [n,k,d]_{q^2}$, the \textbf{$r$-almost self-orthogonal parameter} is
\begin{equation}
r := (n-k) - \dim_{\F_{q^{2}}}(\mathcal{C}^{\perp_{h}} \cap \mathcal{C}) = \mathrm{rank}(HH^\dagger),
\end{equation}
where $H$ is a parity-check matrix of $\mathcal{C}$ (equivalently, a generator matrix of $\mathcal{C}^{\perp_{h}}$).
\end{definition}

\begin{lemma}[Defining-Set Formula for $r$]
\label{lem:r-formula}
For a cyclic code $\mathcal{C} = \langle g(x) \rangle$ over $\F_{q^2}$ with 
defining set $T$, $\gcd(n,q)=1$,
\begin{equation}
r = \bigl|\,T \cap (-qT) \pmod n\,\bigr|.
\end{equation}
Equivalently, the Hermitian hull has dimension 
$\dim(\mathcal{C}^{\perp_h}\cap\mathcal{C}) = (n-k)-r = |T|-|T\cap(-qT)|$.
\end{lemma}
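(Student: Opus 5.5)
Our approach is to compute the hull dimension by passing to the spectral (defining-set) description of cyclic codes, and then to read off $r$ from the definition of Hermitian deficiency. Throughout, $\gcd(n,q)=1$, so $x^n-1$ is separable over $\F_{q^2}$. Hence every cyclic code is determined by its defining set. Moreover, the intersection of two cyclic codes is the cyclic code whose defining set is the \emph{union} of their defining sets, since $\langle g_1\rangle\cap\langle g_2\rangle=\langle \mathrm{lcm}(g_1,g_2)\rangle$.

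\textbf{Step 1: defining set of the dual.} Recall from the preliminaries that $\mathcal{C}^{\perp_h}$ has defining set $T^{\perp_h}=\mathbb{Z}_n\setminus T^{-q}$, where $T^{-q}=\{-qx \bmod n : x\in T\}$. This is the set written $-qT$ in the statement. Since $q$ is invertible modulo $n$, the map $x\mapsto -qx$ is a bijection of $\mathbb{Z}_n$. It therefore sends the $q^2$-cyclotomic cosets in $T$ to $q^2$-cyclotomic cosets, and $|-qT|=|T|$.

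\textbf{Step 2: the hull.} The defining set of $\mathcal{C}\cap\mathcal{C}^{\perp_h}$ is
\[
T\cup(\mathbb{Z}_n\setminus(-qT)) = \mathbb{Z}_n\setminus\bigl((-qT)\setminus T\bigr).
\]
A cyclic code with defining set $S$ has dimension $n-|S|$. Hence
\[
\dim(\mathcal{C}^{\perp_h}\cap\mathcal{C})=|(-qT)\setminus T|=|-qT|-|T\cap(-qT)|=|T|-|T\cap(-qT)|.
\]
Since $n-k=|T|$, the definition of Hermitian deficiency gives $r=(n-k)-\dim(\mathcal{C}^{\perp_h}\cap\mathcal{C})=|T\cap(-qT)|$. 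This is the claimed formula together with its equivalent hull form.

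\textbf{Step 3: consistency with $\Rank(HH^\dagger)$.} The definition also identifies $r$ with $\Rank(HH^\dagger)$, so this equality should be justified rather than taken for granted. Let $H$ generate $D=\mathcal{C}^{\perp_h}$. Then $\Ker$ of the map $y\mapsto yHH^\dagger$ on row coefficient vectors consists of those $y$ with $yH\in D^{\perp_h}$. This is exactly the set of coefficient vectors of $D\cap D^{\perp_h}=\mathcal{C}^{\perp_h}\cap\mathcal{C}$, and $H$ has full row rank. Therefore $\Rank(HH^\dagger)=(n-k)-\dim(\mathcal{C}^{\perp_h}\cap\mathcal{C})$.

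The only delicate point is the bookkeeping in Step 2, namely the sign convention. The displayed formula uses $-qT$, whereas the dual formula uses $\mathbb{Z}_n\setminus T^{-q}$. We must check that $|T\cap(-qT)|$ is the same whichever of the two conventions for the map is applied. This holds because $-q$ applied twice is $q^2$, which fixes $T$. Consequently $T\cap(-qT)$ and $T\cap(-q)^{-1}T$ are images of each other under multiplication by $-q$, and so they have equal cardinality.
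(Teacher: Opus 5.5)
Your proof is correct, but it argues in the opposite direction from the paper's. The paper computes $\Rank(A)$ directly. Over an extension field containing a primitive $n$th root of unity $\alpha$, it replaces $H$ by the evaluation basis with rows $(\alpha^{\nu e})_{e}$. The Gram matrix then becomes $n$ times a partial permutation (``matching'') matrix, with a nonzero entry at $(\nu,\mu)$ exactly when $\nu+q\mu\equiv 0 \pmod n$. The rank is therefore the number of indices whose partner $-q^{-1}\nu$ also lies in the index set, and the hull formula is read off from the definition of $r$.

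You instead compute the hull dimension first, by defining-set algebra: the Hermitian dual has defining set $\mathbb{Z}_n\setminus(-qT)$, and intersections of cyclic codes correspond to unions of defining sets. You then pass to the rank via the identity $\Rank(HH^{\dagger})=\dim D-\dim(D\cap D^{\perp_h})$, which holds for any linear code $D$ with generator matrix $H$.

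Your route buys two things. It stays over $\F_{q^2}$, so you never need to justify that changing to the evaluation basis over the extension field leaves the rank of the Gram matrix unchanged; the paper covers that step only with the word ``basis-independent''. And your Step 3 is exactly the general fact behind the paper's definition, which equates $(n-k)-\dim(\mathcal{C}^{\perp_h}\cap\mathcal{C})$ with $\Rank(HH^{\dagger})$ without proof. The paper's route, in return, needs neither the dual-defining-set formula nor the intersection rule. It also exhibits $A$, up to a change of basis, in explicit matched-pair form, showing which index pairs $(\nu,-q^{-1}\nu)$ carry the obstruction to self-orthogonality.

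Two small remarks. First, you take $H$ to generate $\mathcal{C}^{\perp_h}$, whereas the paper's proof works with the Euclidean dual $\mathcal{C}^{\perp}$, i.e.\ a parity-check matrix in the usual sense. Under that reading your kernel argument gives $(n-k)-\dim(\mathcal{C}^{\perp}\cap\mathcal{C}^{(q)})$, where $\mathcal{C}^{(q)}=\{\mathbf{c}^{(q)}:\mathbf{c}\in\mathcal{C}\}$. Since $\mathcal{C}^{\perp}\cap\mathcal{C}^{(q)}$ is the coordinatewise $q$-th power image of $\mathcal{C}^{\perp_h}\cap\mathcal{C}$, the value of $r$ is the same; this is worth one sentence. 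Second, since $q^2T=T$, one has $(-q)^{-1}T=-qT$ as sets, so your closing sign-convention check holds trivially.
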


\begin{IEEEproof}
Work in the evaluation basis $G_{\rm ev}$ of $\mathcal{C}^{\perp E}$: rows 
indexed by $\nu\in-T$, entries $G_{\rm ev}[\nu,e]=\alpha^{\nu e}$, where $\alpha$ 
is a primitive $n$th root of unity. Then
\[
A_{\rm ev}[\nu,\mu]=\sum_{e\in\mathbb{Z}_n}\alpha^{(\nu+q\mu)e}
=n\cdot[\nu+q\mu\equiv0\pmod n],
\]
a matching matrix (scaled by $n\neq0$): row $\nu$ has a nonzero entry exactly at 
$\mu\equiv-q^{-1}\nu$ when $\mu\in-T$. Hence 
$\Rank(A)=\#\{\nu\in-T:\,-q^{-1}\nu\in-T\}=|T\cap(-qT)|$, which is 
basis-independent.
\end{IEEEproof}

\begin{remark}
An earlier version of this lemma printed 
$r=\deg g-|T\cap(-qT)|$; that right-hand side is the hull dimension 
$(n-k)-r=|T|-|T\cap(-qT)|$, not the deficiency. The computational tables of this 
paper are unaffected, as they were produced by direct rank computation.
\end{remark}

\textbf{Example 1 $[[53,25,8]]_{2}$ quantum code.} Let $n = 51$, $q = 2$. Factor $x^{51}-1$ over $\mathbb{F}_4 = \mathbb{F}_2[\omega]/\langle \omega^2 + \omega + 1 \rangle$. The 4-cyclotomic cosets modulo 51 include
\begin{align*}
 C_{0} &= \{0\}, \\
\quad
C_{1} = \{1, 4, 16, 13\},\\
\quad 
C_{2} = \{2, 8, 32,26\}, \\
\quad
C_{3} = \{3, 12, 48,39\}, \\
\quad
C_{5} = \{5, 20, 29,14\}, \\
\quad\cdots\\  \quad
C_{34} = \{34\}.
\end{align*}
Take the generator polynomial
\begin{multline}
g(x)=M_1(x)M_3(x)M_5(x)M_{34}(x)M_0(x)\\
=x^{14} + \omega^2 x^{13} + \omega x^{11} + x^{10} + \omega x^{9} + x^{8} + \omega^2 x^{7} + \omega x^{6} + \\x^{5} + \omega x^{4} + \omega x^{3} + x^{2} + \omega^2,
\end{multline}
where $\omega^2+\omega+1=0$.  The defining set is
$$T = C_1\cup C_3\cup C_5\cup C_{34}\cup C_0,$$
and $\deg g(x)=14$, so $\mathcal{C}=\langle g(x)\rangle$ is a $[51,37,8]_4$ cyclic code.  Its Hermitian dual is $\mathcal{C}^{\perp}=[51,14,22]_4$; equivalently, a parity-check matrix $H$ of $\mathcal{C}$ is a generator matrix of $\mathcal{C}^{\perp}$, and it is this $14\times 51$ matrix that we extend.

\textbf{Step 1: Hermitian deficiency.} For a parity-check matrix $H$ of $\mathcal{C}$ (equivalently a generator matrix of $\mathcal{C}^{\perp}$), form $A=HH^{\dagger}\in\F_4^{14\times 14}$. Direct computation gives
$$
r=\Rank(A)=\Rank(HH^{\dagger})=2.$$
Thus two extension columns are needed to obtain a Hermitian self-orthogonal code.

\textbf{Step 2: Jordan decomposition.} 
For each extension step we decompose the relevant matrix $A_i$ as $A_i = P_i J_i P_i^{-1}$ and set $W_i = P_i^{-1}$ ($i=1,2$). The Jordan forms are
$$
J_1 = \operatorname{diag}\bigl(\underbrace{0,\ldots,0}_{11\text{ zeros}},\, J_2(0),\, 1\bigr),
\qquad
J_2(0) = \begin{pmatrix} 0 & 1 \\ 0 & 0 \end{pmatrix},
$$
for the first extension step, and
\begin{equation}
J_2 = \operatorname{diag}\bigl(\underbrace{0,\ldots,0}_{13\text{ zeros}},\, 1\bigr),
\end{equation}
for the second extension step. In both cases the only nonzero eigenvalue is $1 \in \F_2^{*} = \{1\}$. The columns of $W_i = P_i^{-1}$ form the Jordan basis from which the extension vectors are selected.

\textbf{Step 3: Iterative extension preserving the dual distance.} By Theorem~\ref{thm:iterative}, at each step we recompute the Jordan decomposition $A_i=P_iJ_{A_i}P_i^{-1}$ of the current Hermitian matrix, form $W_i=P_i^{-1}$, and choose a column $\mathbf{w}_j$ of $W_i$ satisfying (C1) $A_i\mathbf{w}_j\neq\mathbf{0}$ and (C2$'$).  We append it to $H$ and verify that the rank drops by one while the dual distance stays equal to $8$.
\begin{itemize}
\item \textbf{Extension 1:} Select column $12$ of $W_1=P_1^{-1}$, which has Hamming weight $12$:
\begin{equation}
\mathbf{w}^{(1)}=(\omega^2,\omega^2,\omega,0,\omega,\omega^2,1,\omega,\omega^2,1,0,\omega,\omega,1)^{T}.
\end{equation}
Appending it gives $H_1=[H\mid\mathbf{w}^{(1)}]$.  Direct computation yields
\begin{equation}
\Rank(H_1H_1^{\dagger})=2\to 1,
\qquad d\bigl(\langle H_1\rangle^{\perp_h}\bigr)=8.
\end{equation}
Thus the first extension satisfies both~(C1) and~(C2$'$) for $d=8$.
\item \textbf{Extension 2:} Recompute $W_2=P_2^{-1}$ from $A_2=H_1H_1^{\dagger}$ and select column $14$, which has Hamming weight $9$:
\begin{equation}
\mathbf{w}^{(2)}=(1,\omega,1,1,1,0,\omega^2,0,\omega,0,\omega,0,0,1)^{T}.
\end{equation}
Appending it gives $H_2=[H_1\mid\mathbf{w}^{(2)}]$.  Direct computation yields
\begin{equation}
\Rank(H_2H_2^{\dagger})=1\to 0,
\qquad d\bigl(\langle H_2\rangle^{\perp_h}\bigr)=8.
\end{equation}
Hence the second extension also preserves the dual distance.
\end{itemize}

\textbf{Step 4: Quantum code parameters.} The resulting code is Hermitian self-orthogonal with parameters $\mathcal{C}_{\mathrm{so}}=[53,14,24]_4$; its Hermitian dual has parameters $\mathcal{C}_{\mathrm{so}}^{\perp_h}=[53,39,8]_4$  with weight enumerator polynomial $W_{39,53}(z)=$$1+29682z^{8}+290088z^{9}+4440672z^{10}+50044464z^{11}+526129260z^{12}+\dots+72208218802902408z^{53}$.  
Applying the Hermitian construction yields
\begin{equation}
\mathcal{Q}=[[53,\, 2\cdot 37 - 51 + 2,\, 8]]_2 = \mathbf{[[53, 25, 8]]_2}.
\end{equation}
This improves the previous best-known quantum code $[[53,25,7]]_2$ listed in Grassl's tables \cite{Grassl:codetables} to distance $8$.

By the propagation rules in Proposition~\ref{propagation_rule}, $[[53, 25, 8]]_2$ also yields the additional quantum code $[[54,25,8]]_2$ (by lengthening).

Table~\ref{tab:table1} lists further record quantum codes obtained by the same Jordan-extension procedure from cyclic, quasi-cyclic, and BCH codes over $\F_{q^2}$. In each case the extension column(s) are selected from $W=P^{-1}$ of the current Hermitian matrix $A=HH^{\dagger}$; the active condition (C1) is verified at every step, and the dual distance of every reported code is certified directly by the exact criterion~\eqref{eq:exactcrit} (equivalently, by direct computation of $d(\mathcal{C}_{\mathrm{ext}}^{\perp_h})$), with the nullspace condition \textup{(C2$'$)} used only as a search sieve (Remark~\ref{rem:C2prime}).

\begin{table}[htbp]
\caption{Quantum codes obtained by Jordan extension from linear codes}
\label{tab:table1}
\centering
\tiny
\begin{tabular}{|c|c|c|c|p{6.9cm}|}
\hline
No. & Quantum code & Starting classical code & $r$ & Extension column vector(s)  \\ \hline
1 & $[[53,23,8]]_2$ & Cyclic Code $[51,37,8]_4$ & 2 & {$\mathbf{w^{(1)}}=\bigl(\omega^2,\omega^2,\omega,0,\omega,\omega^2,1,\omega,\omega^2,1,0,\omega,\omega,1\bigr)^{T}$};  
{$\mathbf{w^{(2)}}=\bigl(1,\omega,1,1,1,0,\omega^2,0,\omega,0,\omega,0,0,1\bigr)^{T}$}  \\
2 & $[[59,29,8]]_2$ & $(1,2)$-quasi-cyclic $[58,43,8]_4$ & 1 & {$\mathbf{w}=\bigl(1,1,1,1,1,1,1,1,1,1,1,1,1,1,1\bigr)^{T}$}  \\
3 & $[[79,41,9]]_2$ & $(2,2)$-quasi-cyclic $[76,57,9]_4$ & 3 & {$\mathbf{w}=\bigl(1,0,\omega,\omega,\omega^6,\omega^2,\omega^3,2\bigr)^{T}$}  \\
4 & $[[92,78,5]]_3$ & BCH code $[91,84,5]_9$ & 1 &{$\mathbf{w}=\bigl(1,\omega^5,0,2,\omega,\omega^6,1\bigr)^{T}$}  \\
5 & $[[80,64,6]]_3$ & $[79,71,6]_9$ & 1 & {$\mathbf{w}=\bigl(1,0,\omega,\omega,\omega^6,\omega^2,\omega^3,2\bigr)^{T}$}\\
6 & $[[131,93,8]]_2$ & degree-2 quasi-cyclic $[130,111,8]_4$ & 1 & {$\mathbf{w}=\bigl (1,1,\omega,1,\omega^2,\omega^2,\omega,\omega^2,0,\omega^2,0,\omega^2,
\omega,\omega^2,\omega^2,1,\omega,1,1 \bigr)^{T}$} \\ 
7 & $[[171,149,5]]_2$ & $[170,159,5]_4$ quasi-cyclic & 1 & {$\mathbf{w}=\bigl (1,1,1,0,0,0,1,\omega,\omega,\omega^2,1\bigr)^{T}$}\\
\hline
\end{tabular}
\end{table}

In this section we describe the six record codes of Table~\ref{tab:table1} individually. Each example follows the same Jordan-extension pipeline as the $[[53,25,8]]_2$ code above: compute $A=HH^{\dagger}$, form its Jordan decomposition $A=PJ_AP^{-1}$, select suitable columns of $W=P^{-1}$ satisfying (C1), append them to $H$, and verify that the rank drops by one and the dual distance is preserved at every step (certified by the exact criterion~\eqref{eq:exactcrit}; the nullspace sieve \textup{(C2$'$)} was used to guide the search).

\textbf{Example 2 $[[59,29,8]]_2$.} 
Consider the following polynomials in $R$:
$$f(x) = x^{14} + \omega x^{13} + \omega x^{11} + \omega^2 x^{10} + x^9 + \omega^2 x^8 + \omega x^7 + \omega^2 x^6 +x^5$$ $$+ \omega^2 x^4 + \omega x^3 + \omega x + 1,$$
$$v_2(x)   = \omega^2 x^{26} + x^{25} + x^{23} + \omega^2 x^{22} + \omega x^{21} + \omega x^{20} + \omega x^{18} + \omega x^{16} + x^{15}+$$
$$ \omega x^{14} + \omega x^{13} + \omega^2 x^{12} + \omega^2 x^{10} + \omega x^8 + \omega^2 x^6 + \omega^2 x^5 + x^3+ \omega^2 x^2 + x + \omega,$$ 

and $$v_2(x) = \omega x^{28} + \omega^2 x^{27} + x^{24} + x^{23} + x^{21} + \omega x^{20} + x^{19} + \omega x^{18} + \omega^2 x^{17} +$$ $$\omega x^{16} 
+ \omega^2 x^{15} + \omega^2 x^{14} + \omega^2 x^{12} + x^{10} + \omega^2 x^8 + \omega^2 x^6 + \omega^2 x^5 + x^4 + \omega x^3 + \omega.$$

 We define a 1-quasi-cyclic
code of length $2n$ $\mathcal{C}=[58,43,8]_4$ code  $$   \mathcal{C}
    = \big\langle (v_1 f,\; v_2 f_1)\big.\rangle
$$
Its parity-check matrix $H$ has size $15\times 58$, and
$$
A=HH^{\dagger}=\begin{pmatrix}
1 & 1 & \cdots & 1 \\
1 & 1 & \cdots & 1 \\
\vdots & \vdots & \ddots & \vdots \\
1 & 1 & \cdots & 1
\end{pmatrix}_{15\times 15}, r=Rank(A)=1.
$$
Compute $A=PJ_AP^{-1}$ and put $W=P^{-1}$.  The selected column of $W$ satisfying~(C1) $A\mathbf{w}^{(1)}\neq\mathbf{0}$ and~(C2$'$) $d(NullSpace(H_S^{(2)}))\ge 7$ with $S=supp(\mathbf{w}^{(1)})$ is
\begin{equation}
\mathbf{w}=(1,1,1,1,1,1,1,1,1,1,1,1,1,1,1)^{T}\in\F_4^{15}.
\end{equation}
Appending it to $H$ gives $H_1=[H\mid\mathbf{w}^{(1)}]$ with
\begin{equation}
\Rank(H_1H_1^{\dagger})=1\to 0,\qquad d\bigl(\langle H_1\rangle^{\perp_h}\bigr)=8.
\end{equation}
Thus we obtain a self-orthogonal code $\mathcal{C}_{\mathrm{so}}=[59,15,26]_4$ whose Hermitian dual is $\mathcal{C}_{\mathrm{so}}^{\perp_h}=[59,44,8]_4$with weight enumerator polynomial $W_{44,59}(z)=$$1+13746z^{8}+229767z^{9}+3451638z^{10}+46197174z^{11}+554418984z^{12}+6008492136z^{13}+\dots+13159947574445658066z^{59}$. 
Hence
\begin{equation}
\mathcal{Q}=[[59,\,2\cdot 43-58+1,\,8]]_2=\mathbf{[[59,29,8]]_2},
\end{equation}
improving the previously known parameters in Grassl's tables.
By the propagation rules in Proposition~\ref{propagation_rule}, $[[59,29,8]]_2$ also yields the additional quantum codes $[[60,29,8]]_2$ (by lengthening) and $[[61,29,8]]_2$ (by subcode construction).

\textbf{Example 3 $[[79,41,9]]_2$.} Let $\mathcal{C}$ be the $(2,2)$-quasi-cyclic $[76,57,9]_4$ code with generator polynomials $f_{1}(x) = x + \omega^{2}$, 
    $f_{2}(x) = x^{18} + \omega x^{17} + \omega x^{16} + x^{14} + x^{13} + \omega^{2}x^{11} + \omega^{2}x^{9} + x^{7} + \omega^{2}x^{5} + 
    \omega x^{4} + x^{2} + \omega^{2} x + 1$, and a fixed $v(x) = x^{38} +\omega x^{37} +\omega  x^{34} + \omega^2 x^{32} + \omega^2 x^{31} + x^{29} + \omega x^{28} + \omega^{2}x^{26} + \omega  x^{25}
    + \omega x^{24} + \omega^{2} x^{23} + \omega x^{22} + x^{21} + \omega^{2} x^{20} + \omega^{2} x^{19} + \omega^{2} x^{18} + 
    \omega^{2} x^{17} + \omega x^{16} + \omega x^{13} + x^{12} + x^{5} + x^{4} + \omega x^{3} + x + 1$.

Thus $\mathcal{C}$ is the 2-quasi-cyclic code of length $2n=76$ generated as an $R$-submodule by $(v f_1,\; f_1)$ and $(f_2,\; v f_2)$, or equivalently by the $2\times 2$ polynomial generator matrix
\[
    G(x) =
    \begin{pmatrix}
        v f_1 & f_1 \\[2mm]
        f_2   & v f_2
    \end{pmatrix}
    \in \mathcal{M}_{2 \times 2}(R).
\]
 Its Hermitian dual $\mathcal{C}^{\perp_h}$ has dimension $19$; taking a parity-check matrix $H$ of $\mathcal{C}$ (equivalently a generator matrix of $\mathcal{C}^{\perp_h}$) we compute
$A=HH^{\dagger}\in\F_4^{19\times 19},\qquad r=\Rank(A)=3$.

Compute $A=PJ_AP^{-1}$ and put $W=P^{-1}$. The three extension columns are selected iteratively from the respective Jordan bases; one of the selected columns satisfying~(C1) $A\mathbf{w}^{(1)}\neq\mathbf{0}$ and~(C2$'$) $d(NullSpace(H_S^{(2)}))\ge 9$ with $S=supp(\mathbf{w}^{(1)})$ is
$$\mathbf{w}=(1,1,\omega,\omega,0,1,0,\omega^{2},1,0,\omega^{2},1,1,0
,\omega,\omega^{2},1,1,\omega)^{T}.$$
After the three iterative extensions the rank of the Hermitian product matrix is reduced from $3$ to $0$, and the extended dual code has minimum distance $9$.
The resulting self-orthogonal code has parameters $\mathcal{C}_{\mathrm{so}}=[79,19,\cdot]_4$ and Hermitian dual $\mathcal{C}_{\mathrm{so}}^{\perp_h}=[79,60,9]_4$ with weight enumerator polynomial $W_{60,79}(z)=$$1+12909z^{9}+309231z^{10}+5804838z^{11}+99095841z^{12}+1530748683z^{13}+\dots+179241796303546028802673698z^{79}$. 
Therefore
$$
\mathcal{Q}=[[79,\,2\cdot 57-76+3,\,9]]_2=\mathbf{[[79,41,9]]_2},$$
improving the previously known parameters in Grassl's tables.
By the propagation rules in Proposition~\ref{propagation_rule}, $[[79,41,9]]_2$ also yields the additional quantum code $[[80,41,9]]_2$ (by lengthening).

\textbf{Example 4 $[[92,78,5]]_3$.} Let $\mathcal{C}$ be the BCH code $[91,84,5]_9$ with defining set constructed in the usual way over $\F_9=\F_3[\omega]$. Its parity-check matrix $H$ has size $7\times 91$, and
\begin{equation}
A=HH^{\dagger}\in\F_9^{7\times 7},\qquad r=\Rank(A)=1.
\end{equation}
The Jordan form has one non-zero eigenvalue in $\F_3^{*}$.  Forming $W=P^{-1}$, the corresponding column satisfying (C1) and (C2$'$) for $d=5$ is
\begin{equation}
\mathbf{w}^{(1)}=(1,\omega^5,0,2,\omega,\omega^6,1)^{T}\in\F_9^{7}.
\end{equation}
Appending it to $H$ gives $H_1=[H\mid\mathbf{w}^{(1)}]$ with
\begin{equation}
\Rank(H_1H_1^{\dagger})=1\to 0,\qquad d\bigl(\langle H_1\rangle^{\perp_h}\bigr)=5.
\end{equation}
The extended self-orthogonal code has parameters $\mathcal{C}_{\mathrm{so}}=[92,7,70]_9$; its Hermitian dual is $\mathcal{C}_{\mathrm{so}}^{\perp_h}=[92,85,5]_9$ with weight enumerator polynomial $W_{85,92}(z)=1+74970z^{5}+5542680z^{6}+\dots$. Consequently
\begin{equation}
\mathcal{Q}=[[92,\,2\cdot 84-91+1,\,5]]_3=\mathbf{[[92,78,5]]_3},
\end{equation}
improving the previously known parameters in Grassl's tables.
By the propagation rules in Proposition~\ref{propagation_rule}, $[[92,78,5]]_3$ also yields the additional quantum codes $[[93,78,5]]_3$ (by lengthening).

\textbf{Example 5 $[[80,64,6]]_3$.} Let $\mathcal{C}=[79,71,6]_9$ be the best-known linear code described  in Grassl's tables. We have
\begin{equation}
A = H H^{\dagger} \in \F_9^{8\times 8}, \qquad r = \operatorname{Rank}(A) = 1,
\end{equation}
where
\begin{equation}
A =
\begin{pmatrix}
2 & 0 & \omega^7 & \omega^7 & \omega^6 & \omega^2 & \omega^5 & 1 \\
0 & 0 & 0 & 0 & 0 & 0 & 0 & 0 \\
\omega^5 & 0 & 1 & 1 & \omega^7 & \omega^3 & \omega^6 & \omega \\
\omega^5 & 0 & 1 & 1 & \omega^7 & \omega^3 & \omega^6 & \omega \\
\omega^2 & 0 & \omega^5 & \omega^5 & 2 & 1 & \omega^3 & \omega^6 \\
\omega^6 & 0 & \omega & \omega & 1 & 2 & \omega^7 & \omega^2 \\
\omega^7 & 0 & \omega^2 & \omega^2 & \omega & \omega^5 & 1 & \omega^3 \\
1 & 0 & \omega^3 & \omega^3 & \omega^2 & \omega^6 & \omega & 2
\end{pmatrix}.
\end{equation}
with $H$ of size $8\times 79$. Computing $A=PJ_AP^{-1}$ and $W=P^{-1}$, the selected column satisfying (C1) and (C2$'$) for $d=6$ is
\begin{equation}
\mathbf{w}=(1,0,\omega,\omega,\omega^6,\omega^2,\omega^3,2)^{T}\in\F_9^{8}.
\end{equation}
Appending it to $H$ gives $H_1=[H\mid\mathbf{w}^{(1)}]$ with
$$
\Rank(H_1H_1^{\dagger})=1\to 0,\qquad d\bigl(\langle H_1\rangle^{\perp_h}\bigr)=6,
$$
and we obtain a self-orthogonal code $\mathcal{C}_{\mathrm{so}}=[80,8,56]_9$ whose Hermitian dual has parameters $\mathcal{C}_{\mathrm{so}}^{\perp_h}=[80,72,6]_9$ with weight enumerator polynomial $W_{72,80}(z)=$$1+1929192z^{6}+139704824z^{7}+10180184456z^{8}+641595165672z^{9}+35936588753792 z^{10}+\dots$. Thus
$$
\mathcal{Q}=[[80,\,2\cdot 71-79+1,\,6]]_3=\mathbf{[[80,64,6]]_3},
$$
improving the previously known parameters in Grassl's tables.
By the propagation rules in Proposition~\ref{propagation_rule}, $[[80,64,6]]_3$ also yields the additional quantum codes $[[81,64,6]]_3$ (by lengthening) and $[[80,63,6]]_3$ (by subcode construction).

\textbf{Example 6 $[[131,93,8]]_2$.} Let $\mathcal{C}$ be the degree-2 quasi-cyclic $[130,111,8]_4$ code with generator polynomials
    $f_1(x) =  x^{46} + \omega^{2} x^{44} +\omega^{2} x^{43} + \omega x^{40} + x^{38} +\omega x^{37} +\omega x^{36} + x^{35} + x^{34} + 
    \omega x^{33} + \omega x^{32} + x^{28} + \omega x^{25} + \omega x^{23} + \omega x^{21} + x^{18} + \omega x^{14} +\omega x^{13} +
    x^{12} + x^{11} + \omega x^{10} + \omega x^{9} + x^{8} + \omega x^{6} + \omega^{2} x^{3} + \omega^{2} x^{2} + 1$, 
 
     $f_2(x) =  \omega x^{18} + x^{16} + x^{13} + x^{12} + \omega^{2}  x^{9} + \omega x^{7} + \omega x^{6} + \omega^{2}  x^{5} + \omega x^{3} + x$,

    $f_{3}(x) = x^{65} + 1$, and a fixed $v(x) =  
x^{46} + \omega^2 x^{44} + \omega^2 x^{43} + \omega x^{40} + x^{38} +\omega x^{37} + \omega x^{36} + x^{35} + x^{34} + 
    \omega x^{33} + \omega x^{32} + x^{28} + \omega x^{25} + \omega x^{23} + \omega x^{21} + x^{18} +\omega x^{14} + \omega x^{13} +
    x^{12} + x^{11} +\omega x^{10} + \omega x^{9} + x^{8} + \omega x^{6} + \omega^2 x^{3} + \omega^2 x^{2} + 1$.

Therefore, a 2-quasi-cyclic
code of length $2n$ is an $R$-submodule of $R^2$. Let $n=65$. We define
$$
    \mathcal{C}
    = \big\langle ( f_1,\; v f_2),\; (0,\;  f_3) \big\rangle_R
    \subseteq R^2,
    \label{eq:C_R_submodule_ex7}
$$
Equivalently, $\mathcal{C}$ is the $2 \times 2$ polynomial generator matrix
$$
    G(x) =
    \begin{pmatrix}
        f_1 & f_2 \\[2mm]
        0   &  f_3
    \end{pmatrix}
    \in \mathcal{M}_{2 \times 2}(R).
    \label{eq:poly_G_ex7}
$$
 Its Hermitian dual $\mathcal{C}^{\perp_h}$ has dimension $19$; taking a parity-check matrix $H$ of $\mathcal{C}$ (equivalently a generator matrix of $\mathcal{C}^{\perp_h}$) we compute
$A=HH^{\dagger}\in\F_4^{19\times 19},\qquad r=\Rank(A)=1$.
The Jordan form has one non-zero eigenvalue in $\F_2^{*}$. Forming $W=P^{-1}$, the selected column satisfying (C1) and (C2$'$) for $d=8$ is
$$
\mathbf{w}=(1,1,\omega,1,\omega^2,\omega^2,\omega,\omega^2,0,\omega^2,0,\omega^2,
\omega,\omega^2,\omega^2,1,\omega,1,1)^{T}\in\F_4^{19}.
$$
Appending it to $H$ gives $H_1=[H\mid\mathbf{w}]$ with
$$
\Rank(H_1H_1^{\dagger})=1\to 0,\qquad d\bigl(\langle H_1\rangle^{\perp_h}\bigr)=8.
$$
The extended self-orthogonal code has parameters $\mathcal{C}_{\mathrm{so}}=[131,19,66]_4$; its Hermitian dual is $\mathcal{C}_{\mathrm{so}}^{\perp_h}=[131,112,8]_4$ with weight enumerator polynomial $W_{112,131}(z)=$$1+38610z^{8}+1700400z^{9}+61806030z^{10}+2044336710z^{11}+61342419840 z^{12}+1684397370060 z^{13}+42592353523800 z^{14}+\dots$. 
Consequently
$$
\mathcal{Q}=[[131,\,2\cdot 111-130+1,\,8]]_2=\mathbf{[[131,93,8]]_2},$$
improving the previously known parameters in Grassl's tables.

By the propagation rules in Proposition~\ref{propagation_rule}, Example~7 also yields the additional quantum codes $[[132,93,8]]_2$, $[[133,93,8]]_2$, $[[134,93,8]]_2$, $[[135,93,8]]_2$ (by lengthening) and $[[131,92,8]]_2$, $[[132,92,8]]_2$, $[[133,92,8]]_2$ (by subcode construction).

\textbf{Example 7 $[[171,149,5]]_2$.} Let $\mathcal{C}$ be the $[170,159,5]_4$ quasi-cyclic code with generator polynomials
    $f_{1}(x) = x^{10} + \omega^{2} x^{9} + \omega x^{8} + \omega x^{7} + x^{6} + x^{2} + x + 1$, 
    $f_{2}(x) = x + 1$, and a fixed $v(x) = \omega x^{84} + \omega^2 x^{80} + \omega x^{79} + \omega^2 x^{78} + x^{77} + \omega^2 x^{76} + x^{75} + \omega x^{74} \\
             + \omega^2 x^{73} + \omega x^{72} + \omega^2 x^{71} + x^{70} + x^{69} + x^{66} + \omega x^{65} + x^{64} + \omega^2 x^{63} 
             + x^{62} + \omega^2 x^{58} + \omega x^{57} + \omega x^{56} + x^{54} + x^{53} + \omega^2 x^{52} + \omega^2 x^{51} \\
             + \omega x^{50} + \omega^2 x^{49} + \omega x^{48} + \omega^2 x^{47} + x^{46} + \omega x^{45} + x^{44} + \omega x^{43} 
             + \omega^2 x^{42} + x^{41} + \omega^2 x^{40} + x^{39} + \omega x^{38} + x^{37} + \omega^2 x^{36} + \omega^2 x^{34} \\
             + \omega x^{32} + \omega^2 x^{31} + x^{30} + \omega^2 x^{29} + \omega x^{28} + \omega x^{27} + \omega x^{25} + x^{23} 
             + \omega^2 x^{22} + \omega^2 x^{21} + \omega^2 x^{20} + \omega x^{19} + x^{17} + x^{16} + \omega x^{15} + \omega x^{14} \\
             + \omega x^{13} + \omega^2 x^{12} + x^{11} + x^{10} + \omega^2 x^9 + \omega^2 x^8 + \omega^2 x^7 + \omega^2 x^5 
             + \omega x^4 + x^2 + x + 1$.

Therefore, a 2-quasi-cyclic
code of length $2n$ is an $R$-submodule of $R^2$. Let $n=85$. We define
\begin{equation}
    \mathcal{C}
    = \big\langle (v f_1,\; f_1),\; (f_2,\; v f_2) \big\rangle_R
    \subseteq R^2,
    \label{eq:C_R_submodule}
\end{equation}
i.e., $\mathcal{C}$ is generated as an $R$-module by the two rows
\[
    \mathbf{g}_1 = (v f_1,\; f_1), \qquad
    \mathbf{g}_2 = (f_2,\; v f_2).
\]
Equivalently, $\mathcal{C}$ is the image of the $2 \times 2$ polynomial generator matrix
\begin{equation}
    G(x) =
    \begin{pmatrix}
        v f_1 & f_1 \\[2mm]
        f_2   & v f_2
    \end{pmatrix}
    \in \mathcal{M}_{2 \times 2}(R).
    \label{eq:poly_G}
\end{equation}
 Its Hermitian dual $\mathcal{C}^{\perp_h}$ has dimension $11$; taking a parity-check matrix $H$ of $\mathcal{C}$ (equivalently a generator matrix of $\mathcal{C}^{\perp_h}$) we compute
$A=HH^{\dagger}\in\F_4^{11\times 11},\qquad r=\Rank(A)=1$.

The Jordan form $J_A$ is nilpotent (all eigenvalues are $0$), so this is the nilpotent rank-one case discussed in Remark~\ref{rem:nilpotent}.  We therefore form $W=P^{-1}$ and scan its columns; the active condition~(C1) must be checked directly because $\im(A)\subseteq\Ker(A)$.  The selected column of $W=P^{-1}$ is
$
\mathbf{w}=(1,1,1,0,0,0,1,\omega,\omega,\omega^2,1)^{T}\in\F_4^{11}.$
It satisfies $A\mathbf{w}^{(1)}\neq\mathbf{0}$ and the nullspace condition~(C2$'$) for $d=5$.  Appending it to $H$ gives $H_1=[H\mid\mathbf{w}^{(1)}]$ and
$
\Rank(H_1H_1^{\dagger})=1\to 0,\qquad d\bigl(\langle H_1\rangle^{\perp_h}\bigr)=5.$
The resulting Hermitian self-orthogonal code has parameters $\mathcal{C}_{\mathrm{so}}=[171,11,86]_4$; its Hermitian dual is $\mathcal{C}_{\mathrm{so}}^{\perp_h}=[171,160,5]_4$ with weight enumerator polynomial $W_{160,171}(z)=$$1+74970z^{5}+5542680z^{6}+390228285z^{7}+24029742165z^{8}+1305147006270z^{9}+63429558626310z^{10}+\dots$.  

The Hermitian construction therefore yields
$$
\mathcal{Q}=[[171,\,2\cdot 159-170+1,\,5]]_2=\mathbf{[[171,149,5]]_2}.
$$
This improves the previously known parameters in Grassl's tables for $\mathcal{Q}=\mathbf{[[171,149,4]]_2}$.
By the propagation rules in Proposition~\ref{propagation_rule}, $[[171,149,5]]_2$ also yields the additional quantum codes $[[172,149,5]]_2$ (by lengthening).

To assess the robustness of the construction, we implemented the iterative Jordan-extension procedure in Magma~\cite{bosma1997magma} and ran it over a large family of cyclic codes. In every computed instance the dual distance of the extended code was at least the dual distance of the original code; Table~\ref{tab:distance_preservation} lists the six records obtained for length $n=51$. In particular, the cyclic code $[51,36,8]_4$ with defining set $T=C_1\cup C_3\cup C_5\cup C_{17}\cup C_{34}\cup C_0$ has Hermitian deficiency $r=3$ and yields the quantum code $[[54,24,8]]_2$; its three extension steps selected columns 14, 15, 15 of $W=P^{-1}$ (weights 13, 13, 13), reducing the rank $3\to 2\to 1\to 0$ while keeping the dual distance equal to 8 at each step.

\begin{remark}
A direct application of Theorem~\ref{thm:quantum_code} cannot produce a code with parameters $[[53,24,8]]_2$, because the formula $k_Q=2k-n+r$ implies $k_Q\equiv n+r\pmod 2$, whereas $53$ is odd and $24$ is even. Hence the record $[[53,24,8]]_2$ is not attainable by the present construction.
\end{remark}

\begin{table}[htbp]
\caption{Computed $n=51$ cyclic-code records via Jordan extension}
\label{tab:distance_preservation}
\centering
\tiny
\begin{tabular}{|c|c|c|c|p{6.9cm}|c|}
\hline
No. & Original cyclic code & $r$ & Quantum code & Cosets used & Distance preserved \\  \hline
1 & $[51,46,3]_4$ & 1 & $[[52,42,3]]_2$ & $C_1,C_{17}$ & yes \\
2 &$[51,38,7]_4$ & 1 & $[[52,26,7]]_2$ & $C_1,C_3,C_5,C_{17}$ & yes \\
3 &$[51,34,9]_4$ & 1 & $[[52,18,9]]_2$ & $C_1,C_2,C_3,C_5,C_{17}$ & yes \\
4 &$[51,44,4]_4$ & 3 & $[[54,40,4]]_2$ & $C_3,C_{17},C_{34},C_0$ & yes \\
5 &$[51,40,6]_4$ & 3 & $[[54,32,6]]_2$ & $C_1,C_3,C_{17},C_{34},C_0$ & yes \\
6 &$[51,36,8]_4$ & 3 & $[[54,24,8]]_2$ & $C_1,C_3,C_5,C_{17},C_{34},C_0$ & yes \\
\hline
\end{tabular}
\end{table}

All 2193 constructed quantum codes in our search satisfy the lower bound $d_Q \geq d_C$ predicted by Theorem~\ref{thm:quantum_code}; for the records in Table~\ref{tab:distance_preservation} equality holds. This provides strong empirical support for the sufficiency of the rank-reduction and distance-preservation criteria in Theorem~\ref{thm:distance_preservation}.

According to Corollary~\ref{construction}, we obtain new binary quantum codes from the Hermitian dual-containing codes constructed above. By Proposition~\ref{propagation_rule}, applying lengthening and subcode constructions to these codes produces additional new binary quantum codes, all of which improve the lower bounds on the minimum distance in Grassl's table \cite{Grassl:codetables}. The parameters of the new quantum codes obtained by direct Jordan extension are shown in Table~\ref{tab:distance_preservation}, and further codes derived by propagation rules are listed in Table~\ref{tab:new_quantum_codes}.

	\begin{table}[h]
	\caption{New quantum codes obtained by Jordan extension}
	\label{tab:new_quantum_codes}
	\begin{center}

		\begin{tabular}{ccc}
			\hline
			NO. &    Our Codes      & Codes in Grassl's table \cite{Grassl:codetables} \\ \hline

	  1  & $[[53,25,8]]_{2}$  &     $  [[53,25,7]]_{2}$      \\
      2  & $[[54,25,8]]_{2}$  &      $  [[54,25,7]]_{2}$      \\ 
      3  & $[[79,41,9]]_{2}$  &     $  [[79,41,8]]_{2}$      \\      
      4  & $[[80,41,9]]_{2}$  &     $  [[80,41,8]]_{2}$      \\ 
	  5  & $[[171,149,5]]_{2}$  &     $  [[171,149,4]]_{2}$      \\

	 6  & $[[172,149,5]]_{2}$  &      $  [[172,149,4]]_{2}$      \\
	 7  & $[[131,93,8]]_{2}$  &      $  [[131,93,7]]_{2}$      \\
	 8  &$[[132,93,8]]_{2}$  &      $  [[132,93,7]]_{2}$      \\

	 9 & $[[133,93,8]]_{2}$  &      $  [[133,93,7]]_{2}$      \\
	 10  & $[[134,93,8]]_{2}$  &      $  [[134,93,7]]_{2}$      \\
	 11  & $[[135,93,8]]_{2}$  &      $  [[135,93,7]]_{2}$      \\
	 12  & $[[131,92,8]]_{2}$  &      $  [[131,92,7]]_{2}$      \\
	 13  & $[[132,92,8]]_{2}$  &      $  [[132,92,7]]_{2}$      \\
	 14  & $[[133,92,8]]_{2}$  &      $  [[133,92,7]]_{2}$      \\
	 15  & $[[134,92,8]]_{2}$  &      $  [[134,92,7]]_{2}$       \\
	 16  & $[[80,64,6]]_{3}$ &      $[[80,64,5]]_{3}$       \\
     17  & $[[81,64,6]]_{3}$ &      $[[81,64,5]]_{3}$       \\
     18  & $[[80,63,6]]_{3}$ &      $[[80,63,5]]_{3}$       \\ 
     19  & $[[92,78,5]]_{3}$ &      $[[92,78,4]]_{3}$       \\
     20  & $[[93,78,5]]_{3}$ &      $[[93,78,4]]_{3}$\\       
	 
	\hline
		\end{tabular}\end{center}
	\end{table}

\begin{table}[htbp]
\caption{Quantum codes obtained by Jordan extension from linear codes}
\label{tab:table4}
\centering
\tiny
\begin{tabular}{|c|c|c|c|p{7.1cm}|c|c|}
\hline
NO. & Quantum code & BKLCs in Magma  & Extended self-orthogonal code & Extension column vector(s) & $r$ & Previous best known \cite{Grassl:codetables} \\
\hline

1 &$[[27,3,9]]_{2}$ & $[26,14,9]_{4}$ & $[27,12,12]_{4}$ & \makecell[l]{$\mathbf{w}=\bigl(1,\omega,\omega^{2},\omega,1,\omega,\omega^{2}, \omega,1,0,1,\omega^{2},\omega^{2}\bigr)^{T}$} & 1& $[[27,3,9]]_{2}$ \\
2 &$[[51,19,9]]_{2}$ & $[50,34,9]_{4}$ & $[51,16,20]_{4}$ & \makecell[l]{$\mathbf{w}=\bigl(1,0,\omega^{2},0,0,\omega,\omega,0,1, 1,1,\omega,\omega^{2},1,\omega,1,1\bigr)^{T}$} & 1& $[[51,19,9]]_{2}$ \\
3 &$[[52,18,9]]_{2}$ & $[50,33,9]_{4}$ & $[52,17,20]_{4}$ & \makecell[l]{$\mathbf{w}^{(1)}=\bigl(0,\omega,0,1,\omega,0,0,1,0,1,\omega,\omega^{2},\omega^{2},0,\omega,1,\omega^{2},\omega^{2}\bigr)^{T}$ \\ $\mathbf{w}^{(2)}=\bigl(1,0,\omega,1,1,\omega^{2},\omega^{2},0,1,\omega,\omega,\omega,\omega^{2},0,1,0,\omega^{2},\omega^{2}\bigr)^{T}$} & 2  & $[[52,18,9]]_{2}$\\
4 &$[[53,17,10]]_{2}$ & $[52,34,10]_{4}$ & $[53,18,20]_{4}$ & \makecell[l]{$\mathbf{w}=\bigl(\omega^{2},\omega,0,\omega^{2},\omega,\omega,1,1,0,\omega,\omega^{2},\omega^{2},\omega^{2},1,0,\omega,0,1,1\bigr)^{T}$} & 1 & $[[53,17,10]]_{2}$ \\
5 &$[[64,38,7]]_{2}$ & $[63,50,7]_{4}$ & $[64,13,32]_{4}$ & \makecell[l]{$\mathbf{w}=\bigl(1,\omega,1,1,0,\omega,0,1,\omega^2,\omega^2,\omega,\omega^2,\omega,\omega\bigr)^{T}$} & 1 & $[[64,38,7]]_{2}$ \\
6 &$[[64,44,6]]_{2}$ & $[63,53,6]_{4}$ & $[64,10,32]_{4}$ & \makecell[l]{$\mathbf{w}=\bigl(1,1,\omega,\omega,0,\omega,\omega,1,\omega,1,1\bigr)^{T}$} & 1 & $[[64,38,7]]_{2}$\\
7 &$[[64,44,6]]_{2}$ & $[62,52,6]_{4}$ & $[64,10,32]_{4}$ & \makecell[l]{$\mathbf{w}^{(1)}=\bigl(0,\omega^{2},0,\omega,\omega,0, \omega^{2},\omega^{2},\omega^{2},1,1 \bigr)^{T}$ \\ $\mathbf{w}^{(2)}=\bigl(1,1,\omega,\omega,0,\omega,\omega,1,\omega,1,1\bigr)^{T}$} & 2  & $[[64,44,6]]_{2}$\\

8 &$[[65,31,9]]_{2}$ & $[64,47,9]_{4}$ & $[65,17,28]_{4}$ & \makecell[l]{$\mathbf{w}=\bigl(1,1,1,\omega^{2},0,\omega^{2},\omega^{2},1,1, \omega^{2},1,0,0,0,0,0,1,1\bigr)^{T}$} & 1 & $[[65,31,9]]_{2}$\\
9 &$[[65,37,8]]_{2}$ & $[64,50,8]_{4}$ & $[65,14,32]_{4}$ & \makecell[l]{$\mathbf{w}=\bigl(1,0,0,\omega,\omega^{2},\omega^{2},\omega^{2},\omega,0,\omega^2,0,\omega^2,0,\omega,\omega\bigr)^{T}$} & 1 & $[[65,37,8]]_{2}$\\
10 &$[[65,49,5]]_{2}$ & $[64,56,5]_{4}$ & $[65,8,44]_{4}$ & \makecell[l]{$\mathbf{w}=\bigl(1,1,\omega^{2},1,0,\omega,1,1,1\bigr)^{T}$} & 1 & $[[65,49,5]]_{2}$\\
11 &$[[86,56,8]]_{2}$ & $[85,70,8]_{4}$ & $[86,15,44]_{4}$ & \makecell[l]{$\mathbf{w}=\bigl(1,\omega^{2},\omega^{2},0,\omega,\omega^{2},0,\omega,0,\omega^{2},\omega,0,\omega^{2},\omega^{2},1,1\bigr)^{T}$} & 1 & $[[86,56,8]]_{2}$\\
12 &$[[86,64,6]]_{2}$ & $[85,74,6]_{4}$ & $[86,11,48]_{4}$ & \makecell[l]{$\mathbf{w}=\bigl(1,0,0,\omega,\omega,\omega^{2},\omega,\omega,1,\omega^{2},1,1\bigr)^{T}$} & 1 & $[[86,64,6]]_{2}$\\
13 &$[[96,60,8]]_2$ & $[95,77,8]_4$ & $[96,18,16]_4$ & \makecell[l]{$\mathbf{w}=\bigl(1,1,0,0,0,\omega,0,0,\omega^{2},\omega^{2}, 1,\omega^{2},1,\omega^{2},\omega,0,0,\omega,\omega\bigr)^{T}$} & 1 & $[[96,60,8]]_2$\\
14 &$[[96,60,8]]_{2}$ & $[94,76,8]_{4}$ & $[96,18,16]_{4}$ & \makecell[l]{$\mathbf{w}^{(1)}=\bigl(\omega^{2},\omega^{2},0,0,0,1,0,0,\omega,\omega, \omega^{2},\omega,\omega^{2},\omega,1,0,0,1,1\bigr)^{T}$ \\ $\mathbf{w}^{(2)}=\bigl(1,\omega^{2},\omega,0,1,0,1,0,\omega^{2},1, \omega^{2},1,\omega^{2},\omega,1,0,0,1,1\bigr)^{T}$} & 2 & $[[96,60,8]]_{2}$\\
15 &$[[96,60,8]]_2$ & $[93,75,8]_4$ & $[96,18,16]_4$ & \makecell[l]{$\mathbf{w}^{(1)}=\bigl(0,\omega^{2},\omega^{2},\omega^{2},0,0,1,\omega^{2},\omega^{2},\omega,0,0,\omega,1,\omega^{2},0,0,\omega^{2},\omega^2\bigr)^{T}$\\
$\mathbf{w}^{(2)}=\bigl(\omega^{2},\omega^{2},0,0,0,1,0,0,\omega,\omega,\omega^{2},\omega,\omega^{2},\omega,1,0,0,1,1\bigr)^{T} $\\ $\mathbf{w}^{(3)}=\bigl(1,\omega^{2},\omega,0,1,0,1,0,\omega^{2},1,\omega^{2},1,\omega^{2},\omega,1,0,0,1,1\bigr)^{T}$} & 3 & $[[96,60,8]]_2$ \\
16 &$[[113,85,7]]_{2}$ & $[112,98,7]_{4}$ & $[113,14,68]_{4}$ & \makecell[l]{$\mathbf{w}=\bigl(0,\omega^2,1,\omega,\omega,\omega^2,1,\omega^2,\omega,\omega,1,\omega^2,0,1,1\bigr)^{T}$} & 1 & $[[113,85,7]]_{2}$\\
17 &$[[128,96,7]]_{2}$ & $[127,111,7]_{4}$ & $[128,16,48]_{4}$ & \makecell[l]{$\mathbf{w}=\bigl(\omega^{2},\omega^{2},0,1,0,\omega^{2},\omega,0,\omega, 1,1,1,\omega,\omega,0,1,1\bigr)^{T}$} & 1 & $[[128,96,7]]_{2}$\\
18 &$[[128,104,6]]_{2}$ & $[127,115,6]_{4}$ & $[128,12,64]_{4}$ & \makecell[l]{$\mathbf{w}=\bigl(\omega,\omega^{2},0,0,1,1,\omega^{2}, 1,\omega,\omega^{2},\omega,1,1\bigr)^{T}$} & 1 & $[[128,104,6]]_{2}$ \\

19 &$[[113,85,7]]_{2}$ & $[111,97,7]_{4}$ & $[113,14,68]_{4}$ & \makecell[l]{$\mathbf{w}^{(1)}=\bigl(\omega^{2},1,\omega,\omega,\omega^{2},1,\omega^{2},\omega, \omega,1,\omega^{2},0,1,0,0\bigr)^{T}$ \\ $\mathbf{w}^{(2)}=\bigl(0,\omega^{2},1,\omega,\omega,\omega^{2},1,\omega^{2}, \omega,\omega,1,\omega^2,0,1,1\bigr)^{T}$} & 2 & $[[113,85,7]]_{2}$\\
20 &$[[128,96,7]]_{2}$ & $[126,110,7]_{4}$ & $[128,16,48]_{4}$ & \makecell[l]{$\mathbf{w}^{(1)}=\bigl(\omega^{2},\omega^{2},0,1,0,\omega^{2},\omega,0,\omega, 1,1,1,\omega,\omega,0,1,1\bigr)^{T}$ \\ $\mathbf{w}^{(2)}=\bigl(\omega^{2},\omega,\omega^{2},0,0,\omega^{2},1,\omega^{2},\omega^{2},1,\omega^{2},1,\omega^{2},\omega^{2},1,1,1\bigr)^{T}$} & 2 & $[[128,96,7]]_{2}$\\
21 &$[[144,104,8]]_{2}$ & $[142,122,8]_{4}$ & $[144,20,56]_{4}$ & \makecell[l]{$\mathbf{w}^{(1)}=\bigl(0,0,0,\omega,1,\omega^{2},\omega^{2},1,1,0,\omega,1,\omega^{2},\omega^{2},\omega,\omega,\omega^{2},0,\omega^{2},1\bigr)^{T}$ \\ $\mathbf{w}^{(2)}=\bigl(1,0,0,\omega^{2},1,\omega,0,1,1,\omega, 0,1,1,0,1,\omega^{2},\omega^{2},\omega^{2},1,\omega^{2}\bigr)^{T}$} & 2 & $[[144,104,8]]_{2}$\\
22 &$[[186,164,5]]_{2}$ & $[184,173,5]_{4}$ & $[186,11,116]_{4}$ & \makecell[l]{$\mathbf{w}^{(1)}=\bigl(\omega^{2},1,0,1,\omega,\omega^{2}, \omega,0,1,0,0\bigr)^{T}$ \\ $\mathbf{w}^{(2)}=\bigl(0,1,\omega,0,\omega,\omega^{2}, 1,\omega^{2},0,\omega,0\bigr)^{T}$} & 2 & $[[186,164,5]]_{2}$\\
23 &$[[80,50,9]]_{3}$ & $[79,64,9]_{9}$ & $[80,15,42]_{9}$ & \makecell[l]{$\mathbf{w}=\bigl(0, 1, \omega, \omega^{5}, \omega^{5}, 0, \omega^{6}, \omega^{2}, \omega^{2}, 0, 1, \omega, 1, \omega^{2}, \omega^{2}\bigr)^{T}$} & 1& $[[80,50,9]]_{3}$ \\
24 &$[[80,62,6]]_{3}$ & $[79,70,6]_{9}$ & $[80, 9, 48]_{9}$ & \makecell[l]{$\mathbf{w}=\bigl(1, 0, 1, \omega^{5}, 1, \omega, 0, \omega^{5}, \omega^{5}\bigr)^{T}$} & 1& $[[80,62,6]]_{3}$ \\
\hline
\end{tabular}
\end{table}

\begin{table}[htbp]
\caption{Quantum codes obtained by Jordan extension from cyclic codes}
\label{tab:table5}
\centering
\tiny
\begin{tabular}{|c|c|c|c|p{7.1cm}|c|c|}
\hline
NO. & Quantum code & Cyclic code & Defining set & Extension column vector(s) & $r$ & Previous best known \cite{Grassl:codetables} \\
\hline

1 &$[[56,34,3]]_{2}$ & $[45,34,3]_{4}$ & $C6*C33*C9*C18*C10$ & \makecell[l]{$\mathbf{w}^{(1)}=\bigl(1, \omega^{2}, 1, 0, 1, 0, 0, 0, \omega^{2}, \omega^{2}, 1\bigr)^{T}$ \\ $\mathbf{w}^{(2)}=\bigl(\omega, \omega^{2}, 1, \omega^{2}, 1, \omega^{2}, \omega, \omega^{2}, \omega, \omega^{2}, 1\bigr)^{T}$ \\ $\mathbf{w}^{(3)}=\bigl(\omega, \omega^{2}, \omega^{2}, \omega^{2}, \omega^{2}, 1, \omega, \omega, \omega^{2}, \omega^{2}, \omega\bigr)^{T}$ \\ $\mathbf{w}^{(4)}=\bigl(0, \omega, \omega, 1, \omega, \omega^{2}, \omega^{2}, 0, 1, \omega, \omega\bigr)^{T}$ \\ $\mathbf{w}^{(5)}=\bigl(0, \omega, \omega, 0, 0, 0, 1, \omega, 1, \omega, 1\bigr)^{T}$ \\ $\mathbf{w}^{(6)}=\bigl(1, \omega^{2}, 1, 1, \omega, 1, 0, \omega, 1, 0, 1\bigr)^{T}$ \\ $\mathbf{w}^{(7)}=\bigl(0, 0, 0, 0, 1, 0, \omega, 1, \omega^{2}, \omega, \omega^{2}\bigr)^{T}$ \\ $\mathbf{w}^{(8)}=\bigl(1, 1, \omega^{2}, \omega, \omega, 0, 0, 1, 0, \omega^{2}, 0\bigr)^{T}$ \\ $\mathbf{w}^{(9)}=\bigl(0, \omega, \omega^{2}, \omega^{2}, \omega, \omega, \omega, \omega^{2}, \omega^{2}, \omega, 1\bigr)^{T}$ \\ $\mathbf{w}^{(10)}=\bigl(0, \omega^{2}, 1, 0, \omega, 1, \omega, \omega, \omega^{2}, 1, \omega\bigr)^{T}$ \\ $\mathbf{w}^{(11)}=\bigl(1, 1, \omega^{2}, 1, \omega^{2}, \omega, \omega^{2}, \omega, \omega, 1, \omega\bigr)^{T}$} & 11& $[[56,34,3]]_{2}$ \\
2 &$[[52,34,4]]_{2}$ & $[45,36,4]_{4}$ & $C5*C33*C10*C0$ & \makecell[l]{$\mathbf{w}^{(1)}=\bigl(\omega^{2}, 1, \omega^{2}, \omega^{2}, 0, 1, \omega, 1, 1\bigr)^{T}$ \\ $\mathbf{w}^{(2)}=\bigl(0, 1, 0, 0, \omega^{2}, 1, 1, \omega^{2}, 0\bigr)^{T}$ \\ $\mathbf{w}^{(3)}=\bigl(\omega, \omega, 0, \omega^{2}, \omega, \omega^{2}, 0, 1, \omega\bigr)^{T}$ \\ $\mathbf{w}^{(4)}=\bigl(\omega, 1, \omega^{2}, 0, \omega, \omega^{2}, 1, 0, \omega^{2}\bigr)^{T}$ \\ $\mathbf{w}^{(5)}=\bigl(0, \omega, \omega^{2}, 0, 0, \omega, \omega, 1, 1\bigr)^{T}$ \\ $\mathbf{w}^{(6)}=\bigl(\omega^{2}, \omega^{2}, \omega, \omega, 0, 0, \omega, \omega^{2}, 0\bigr)^{T}$ \\ $\mathbf{w}^{(7)}=\bigl(1, 1, 1, \omega^{2}, \omega^{2}, \omega^{2}, 1, 1, 1\bigr)^{T}$} & 7& $[[52,34,4]]_{2}$ \\
3 &$[[52,18,9]]_{2}$ & $[51,34,9]_{4}$ & $C1*C2*C3*C5*C17$ & \makecell[l]{$\mathbf{w}=\bigl(1, 0, \omega^{2}, 1, 1, 1, \omega^{2}, \omega^{2}, 0, \omega, \omega^{2}, 0, 0, \omega^{2}, 1, \omega, \omega^{2}\bigr)^{T}$} & 1& $[[52,18,9]]_{2}$ \\
4 &$[[64,38,7]]_{2}$ & $[63,50,7]_{4}$ & $C1*C2*C3*C5*C42$ & \makecell[l]{$\mathbf{w}=\bigl(1, \omega, \omega^{2}, \omega^{2}, 0, \omega, 0, \omega^{2}, 1, 1, \omega, 1, \omega\bigr)^{T}$} & 1& $[[64,38,7]]_{2}$ \\
5 &$[[64,38,7]]_{2}$ & $[63,50,7]_{4}$ & $C2*C13*C15*C21*C26$ & \makecell[l]{$\mathbf{w}=\bigl(1, 1, 1, 1, 0, \omega, 0, \omega^{2}, 0, \omega^{2}, 1, 0, \omega\bigr)^{T}$} & 1& $[[64,38,7]]_{2}$ \\
6 &$[[66,42,6]]_{2}$ & $[63,51,6]_{4}$ & $C1*C3*C5*C7$ & \makecell[l]{$\mathbf{w}^{(1)}=\bigl(1, 0, 1, \omega^{2}, \omega^{2}, 0, 0, \omega, \omega, 1, 0, 0\bigr)^{T}$ \\ $\mathbf{w}^{(2)}=\bigl(0, 1, 0, 1, \omega^{2}, \omega^{2}, 0, 0, \omega, \omega, 1, 0\bigr)^{T}$ \\ $\mathbf{w}^{(3)}=\bigl(0, 0, 1, 0, 1, \omega^{2}, \omega^{2}, 0, 0, \omega, \omega, 1\bigr)^{T}$} & 3& $[[66,42,6]]_{2}$ \\
7 &$[[66,28,10]]_{2}$ & $[65,46,10]_{4}$ & $C0*C1*C3*C5$ & \makecell[l]{$\mathbf{w}=\bigl(1, 0, \omega^{2}, 1, \omega^{2}, 0, \omega, \omega^{2}, \omega, \omega^{2}, \omega, \omega^{2}, \omega, 0, \omega^{2}, 1, \omega^{2}, 0, 1\bigr)^{T}$} & 1& $[[66,28,10]]_{2}$ \\
8 &$[[106,68,5]]_{2}$ & $[105,86,5]_{4}$ & $C1*C2*C3*C35$ & \makecell[l]{$\mathbf{w}=\bigl(1, 0, \omega^{2}, \omega^{2}, 0, \omega, \omega^{2}, \omega, \omega, \omega^{2}, \omega, 0, \omega, \omega^{2}, 0, 0, 0, \omega, 1\bigr)^{T}$} & 1& $[[106,68,5]]_{2}$ \\
9 &$[[106,70,6]]_{2}$ & $[105,87,6]_{4}$ & $C1*C3*C5*C7*C35$ & \makecell[l]{$\mathbf{w}=\bigl(1, \omega, \omega, \omega^{2}, 0, 1, \omega, \omega, \omega, \omega, 1, \omega, \omega, 1, \omega, \omega, \omega^{2}, \omega^{2}\bigr)^{T}$} & 1& $[[106,70,6]]_{2}$ \\

10 &$[[49,21,6]]_{2}$ & $[45,31,6]_{4}$ & $C1*C3*C5*C6*C15$ & \makecell[l]{$\mathbf{w}^{(1)}=\bigl(\omega, \omega, \omega^{2}, \omega^{2}, \omega, \omega, \omega^{2}, \omega, \omega, \omega^{2}, 1, 0, \omega, \omega^{2}\bigr)^{T}$ \\ $\mathbf{w}^{(2)}=\bigl(\omega^{2}, \omega^{2}, \omega^{2}, 1, 1, \omega, \omega^{2}, 0, \omega, 0, \omega, 0, \omega, 0\bigr)^{T}$ \\ $\mathbf{w}^{(3)}=\bigl(1, 1, 0, 1, 0, 0, \omega, \omega^{2}, 0, 1, \omega^{2}, 0, 0, 1\bigr)^{T}$ \\ $\mathbf{w}^{(4)}=\bigl(1, \omega^{2}, 1, 0, \omega^{2}, 1, \omega, 1, \omega, 0, \omega^{2}, 1, \omega, 0\bigr)^{T}$} & 4& $[[49,21,6]]_{2}$ \\
11 &$[[49,21,6]]_{2}$ & $[45,31,6]_{4}$ & $C1*C21*C5*C18*C15$ & \makecell[l]{$\mathbf{w}^{(1)}=\bigl(1, \omega^{2}, 0, 1, 0, 0, \omega, \omega, 0, 1, \omega, 0, 0, \omega^{2}\bigr)^{T}$ \\ $\mathbf{w}^{(2)}=\bigl(1, \omega^{2}, 1, \omega^{2}, \omega^{2}, \omega^{2}, 0, \omega^{2}, \omega^{2}, \omega, \omega, 0, \omega, 1\bigr)^{T}$ \\ $\mathbf{w}^{(3)}=\bigl(\omega, \omega^{2}, \omega^{2}, 0, \omega^{2}, \omega^{2}, \omega^{2}, 1, 1, 0, \omega^{2}, \omega^{2}, \omega^{2}, 0\bigr)^{T}$ \\ $\mathbf{w}^{(4)}=\bigl(0, 0, 1, 1, 1, \omega^{2}, 0, \omega, \omega^{2}, 1, 0, 0, \omega, \omega^{2}\bigr)^{T}$} & 4& $[[49,21,6]]_{2}$ \\
12 &$[[49,21,6]]_{2}$ & $[45,31,6]_{4}$ & $C7*C3*C5*C6*C15$ & \makecell[l]{$\mathbf{w}^{(1)}=\bigl(\omega, \omega, \omega^{2}, 0, 1, \omega^{2}, 0, \omega^{2}, \omega, \omega, \omega^{2}, 0, \omega, \omega^{2}\bigr)^{T}$ \\ $\mathbf{w}^{(2)}=\bigl(\omega^{2}, \omega^{2}, \omega^{2}, 0, 0, \omega^{2}, \omega, 1, \omega, 1, \omega, 0, \omega, 0\bigr)^{T}$ \\ $\mathbf{w}^{(3)}=\bigl(1, 1, 0, \omega^{2}, \omega, 0, 1, \omega, 0, 1, 0, 0, 0, 1\bigr)^{T}$ \\ $\mathbf{w}^{(4)}=\bigl(1, \omega^{2}, 1, \omega, \omega, \omega^{2}, \omega^{2}, \omega^{2}, 1, 1, \omega^{2}, 1, \omega, 0\bigr)^{T}$} & 4& $[[49,21,6]]_{2}$ \\
13 &$[[49,21,6]]_{2}$ & $[45,31,6]_{4}$ & $C7*C21*C5*C18*C15$ & \makecell[l]{$\mathbf{w}^{(1)}=\bigl(\omega^{2}, \omega, 0, \omega, \omega^{2}, 0, \omega^{2}, \omega^{2}, 0, \omega^{2}, 0, 0, 0, \omega\bigr)^{T}$ \\ $\mathbf{w}^{(2)}=\bigl(1, \omega^{2}, 1, 1, \omega, 1, 0, 1, \omega^{2}, \omega^{2}, 1, 0, \omega, 1\bigr)^{T}$ \\ $\mathbf{w}^{(3)}=\bigl(\omega, \omega^{2}, \omega^{2}, \omega^{2}, \omega, \omega, 1, \omega^{2}, \omega^{2}, \omega, \omega^{2}, \omega^{2}, \omega^{2}, 0\bigr)^{T}$ \\ $\mathbf{w}^{(4)}=\bigl(0, 0, 1, 1, 1, 1, \omega, 0, \omega^{2}, 0, \omega, 0, \omega, \omega^{2}\bigr)^{T}$} & 4& $[[49,21,6]]_{2}$ \\
14 &$[[46,20,4]]_{2}$ & $[45,32,4]_{4}$ & $C7*C2*C0$ & \makecell[l]{$\mathbf{w}=\bigl(1, 0, 0, \omega^{2}, 0, 0, \omega^{2}, 0, 0, \omega^{2}, 0, 0, 1\bigr)^{T}$} & 1& $[[46,20,4]]_{2}$ \\
26 &$[[50,24,4]]_{2}$ & $[45,32,4]_{4}$ & $C7*C3*C5*C15*C30$ & \makecell[l]{$\mathbf{w}^{(1)}=\bigl(1, \omega^{2}, \omega^{2}, \omega^{2}, \omega, \omega, 1, \omega^{2}, \omega^{2}, 1, \omega^{2}, \omega^{2}, 0\bigr)^{T}$ \\ $\mathbf{w}^{(2)}=\bigl(\omega, \omega^{2}, \omega^{2}, \omega, \omega^{2}, 1, 1, \omega, 0, 0, 0, 1, 0\bigr)^{T}$ \\ $\mathbf{w}^{(3)}=\bigl(1, \omega, \omega^{2}, \omega^{2}, 0, 0, 0, \omega^{2}, 1, 0, 1, \omega, \omega^{2}\bigr)^{T}$ \\ $\mathbf{w}^{(4)}=\bigl(\omega^{2}, \omega^{2}, \omega^{2}, \omega, 0, 1, \omega^{2}, 1, 0, \omega^{2}, \omega, 1, 0\bigr)^{T}$ \\ $\mathbf{w}^{(5)}=\bigl(0, 0, \omega^{2}, 1, \omega^{2}, \omega^{2}, \omega^{2}, \omega^{2}, \omega, 1, \omega, 0, 1\bigr)^{T}$} & 5& $[[50,24,4]]_{2}$ \\
15 &$[[48,26,4]]_{2}$ & $[45,34,4]_{4}$ & $C1*C3*C5$ & \makecell[l]{$\mathbf{w}^{(1)}=\bigl(1, \omega^{2}, 0, 1, 0, 0, \omega, \omega^{2}, 0, 1, 1\bigr)^{T}$ \\ $\mathbf{w}^{(2)}=\bigl(0, 1, \omega, 0, 0, 1, 0, 1, 1, 0, \omega\bigr)^{T}$ \\ $\mathbf{w}^{(3)}=\bigl(\omega^{2}, 1, 0, 0, \omega^{2}, 0, \omega^{2}, \omega^{2}, 0, 1, 0\bigr)^{T}$} & 3& $[[48,26,4]]_{2}$ \\
16 &$[[54,32,3]]_{2}$ & $[45,34,3]_{4}$ & $C6*C9*C18*C10*C15*C30$ & \makecell[l]{$\mathbf{w}^{(1)}=\bigl(0, 1, 1, \omega^{2}, 1, 0, \omega, \omega, 1, \omega, 0\bigr)^{T}$ \\ $\mathbf{w}^{(2)}=\bigl(1, 1, \omega, \omega^{2}, 0, \omega^{2}, 0, 0, 0, \omega^{2}, 1\bigr)^{T}$ \\ $\mathbf{w}^{(3)}=\bigl(\omega^{2}, \omega, 1, 0, 0, \omega^{2}, 0, \omega^{2}, 1, 0, \omega^{2}\bigr)^{T}$ \\ $\mathbf{w}^{(4)}=\bigl(\omega^{2}, \omega^{2}, 1, \omega, \omega^{2}, \omega^{2}, \omega^{2}, \omega, \omega^{2}, \omega^{2}, 0\bigr)^{T}$ \\ $\mathbf{w}^{(5)}=\bigl(\omega^{2}, 0, \omega, \omega, \omega^{2}, \omega, \omega^{2}, \omega^{2}, \omega, 0, \omega\bigr)^{T}$ \\ $\mathbf{w}^{(6)}=\bigl(\omega, \omega^{2}, 1, \omega^{2}, \omega, \omega^{2}, \omega^{2}, 0, 0, 1, 0\bigr)^{T}$ \\ $\mathbf{w}^{(7)}=\bigl(0, \omega, 1, 0, \omega^{2}, 1, \omega, \omega, \omega^{2}, 1, 1\bigr)^{T}$ \\ $\mathbf{w}^{(8)}=\bigl(1, \omega^{2}, 0, \omega^{2}, \omega, 1, 0, 0, 0, 0, 1\bigr)^{T}$ \\ $\mathbf{w}^{(9)}=\bigl(1, \omega, \omega^{2}, \omega^{2}, \omega, 0, \omega, \omega^{2}, 1, 0, 0\bigr)^{T}$} & 9& $[[54,32,3]]_{2}$ \\

\hline
\end{tabular}
\end{table}

\begin{table}[htbp]
\caption{Quantum codes obtained by Jordan extension from cyclic codes (continued)}
\label{tab:table6}
\centering
\tiny
\begin{tabular}{|c|c|c|c|p{7.1cm}|c|c|}
\hline
NO. & Quantum code & Cyclic code & Defining set & Extension column vector(s) & $r$ & Previous best known \cite{Grassl:codetables} \\
\hline

17 &$[[106,72,5]]_{2}$ & $[105,88,5]_{4}$ & $C2*C3*C7*C14*C35$ & \makecell[l]{$\mathbf{w}=\bigl(1, 0, 0, \omega, \omega^{2}, 0, \omega, 0, \omega^{2}, 1, \omega^{2}, 1, 1, \omega, \omega^{2}, \omega, 1\bigr)^{T}$} & 1& $[[106,72,5]]_{2}$ \\
18 &$[[106,76,4]]_{2}$ & $[105,90,4]_{4}$ & $C1*C7*C10*C15*C35$ & \makecell[l]{$\mathbf{w}=\bigl(\omega, 1, \omega^{2}, \omega^{2}, \omega, \omega^{2}, \omega, \omega, \omega, 1, \omega, 0, \omega^{2}, \omega, 1\bigr)^{T}$} & 1& $[[106,76,4]]_{2}$ \\
19 &$[[106,76,5]]_{2}$ & $[105,90,5]_{4}$ & $C3*C7*C11*C35$ & \makecell[l]{$\mathbf{w}=\bigl(1, 1, \omega^{2}, \omega, \omega^{2}, 0, \omega^{2}, \omega, 1, 1, 0, 0, 0, \omega, \omega^{2}\bigr)^{T}$} & 1& $[[106,76,5]]_{2}$ \\
20 &$[[110,84,4]]_{2}$ & $[105,92,4]_{4}$ & $C1*C7*C21*C35*C42$ & \makecell[l]{$\mathbf{w}^{(1)}=\bigl(\omega, \omega^{2}, 1, \omega, 0, 1, 0, \omega, \omega^{2}, 1, 1, 1, \omega^{2}\bigr)^{T}$ \\ $\mathbf{w}^{(2)}=\bigl(1, \omega^{2}, \omega^{2}, 1, 1, \omega^{2}, 1, 1, 0, \omega^{2}, 0, 1, 0\bigr)^{T}$ \\ $\mathbf{w}^{(3)}=\bigl(1, 0, \omega^{2}, 0, \omega, \omega^{2}, 0, 0, \omega, 1, 0, \omega, 0\bigr)^{T}$ \\ $\mathbf{w}^{(4)}=\bigl(0, 0, 0, \omega^{2}, \omega^{2}, 1, \omega^{2}, 1, \omega, \omega^{2}, \omega^{2}, \omega^{2}, 1\bigr)^{T}$ \\ $\mathbf{w}^{(5)}=\bigl(1, 0, 1, \omega^{2}, \omega^{2}, \omega, 0, 1, \omega^{2}, 0, 1, \omega, 0\bigr)^{T}$} & 5& $[[110,84,4]]_{2}$ \\
21 &$[[106,80,4]]_{2}$ & $[105,92,4]_{4}$ & $C1*C9*C35$ & \makecell[l]{$\mathbf{w}=\bigl(1, \omega, 1, \omega^{2}, 1, \omega^{2}, 1, 0, \omega^{2}, 0, 1, 0, 1\bigr)^{T}$} & 1& $[[106,80,4]]_{2}$ \\
22 &$[[110,84,4]]_{2}$ & $[105,92,4]_{4}$ & $C7*C13*C21*C35*C42$ & \makecell[l]{$\mathbf{w}^{(1)}=\bigl(\omega, 1, \omega, 0, 1, \omega^{2}, 1, 1, 1, 0, \omega^{2}, \omega, \omega^{2}\bigr)^{T}$ \\ $\mathbf{w}^{(2)}=\bigl(1, \omega, 0, 0, 0, \omega, 0, \omega, 1, \omega, 1, 1, 0\bigr)^{T}$ \\ $\mathbf{w}^{(3)}=\bigl(0, 0, \omega, \omega, \omega, 1, 1, \omega^{2}, \omega, 1, 0, \omega, 1\bigr)^{T}$ \\ $\mathbf{w}^{(4)}=\bigl(1, 1, \omega, 1, \omega, \omega, 0, 1, 0, \omega, 0, 0, 1\bigr)^{T}$ \\ $\mathbf{w}^{(5)}=\bigl(1, 1, 0, 0, 0, 1, \omega, \omega, 1, \omega, 1, 0, 1\bigr)^{T}$} & 5& $[[110,84,4]]_{2}$ \\
23 &$[[112,86,3]]_{2}$ & $[105,92,3]_{4}$ & $C10*C13*C25*C35$ & \makecell[l]{$\mathbf{w}^{(1)}=\bigl(\omega^{2}, \omega^{2}, \omega^{2}, \omega, \omega^{2}, \omega^{2}, \omega^{2}, \omega, 0, 0, \omega, 0, \omega\bigr)^{T}$ \\ $\mathbf{w}^{(2)}=\bigl(0, \omega^{2}, 0, \omega, 0, 0, 0, \omega^{2}, \omega^{2}, 0, 1, \omega, \omega\bigr)^{T}$ \\ $\mathbf{w}^{(3)}=\bigl(0, \omega, \omega^{2}, \omega, 1, 1, 1, \omega^{2}, 0, 1, \omega, 1, \omega^{2}\bigr)^{T}$ \\ $\mathbf{w}^{(4)}=\bigl(1, 0, 0, \omega^{2}, 1, 1, 0, 0, \omega, \omega^{2}, \omega, \omega^{2}, \omega\bigr)^{T}$ \\ $\mathbf{w}^{(5)}=\bigl(1, 1, 1, 1, \omega^{2}, \omega, 1, 1, \omega, 1, \omega, \omega, 1\bigr)^{T}$ \\ $\mathbf{w}^{(6)}=\bigl(\omega^{2}, 0, \omega^{2}, \omega^{2}, 0, \omega, \omega, \omega, \omega, 1, \omega^{2}, 0, \omega\bigr)^{T}$ \\ $\mathbf{w}^{(7)}=\bigl(1, \omega, \omega^{2}, \omega^{2}, 0, 0, \omega^{2}, 1, \omega^{2}, \omega^{2}, \omega, \omega, \omega\bigr)^{T}$} & 7& $[[112,86,3]]_{2}$ \\
24 &$[[106,82,4]]_{2}$ & $[105,93,4]_{4}$ & $C5*C10*C14*C15*C35$ & \makecell[l]{$\mathbf{w}=\bigl(1, \omega^{2}, 1, 0, \omega^{2}, \omega^{2}, 0, \omega, 1, \omega^{2}, 0, \omega\bigr)^{T}$} & 1& $[[106,82,4]]_{2}$ \\
25&$[[110,86,4]]_{2}$ & $[105,93,4]_{4}$ & $C7*C10*C14*C21*C35*C42$ & \makecell[l]{$\mathbf{w}^{(1)}=\bigl(0, 0, \omega, \omega^{2}, \omega^{2}, \omega^{2}, 0, 1, 1, 1, \omega^{2}, \omega^{2}\bigr)^{T}$ \\ $\mathbf{w}^{(2)}=\bigl(1, \omega^{2}, 1, 0, 1, \omega^{2}, 1, \omega, \omega, \omega, 0, 1\bigr)^{T}$ \\ $\mathbf{w}^{(3)}=\bigl(\omega, 0, \omega^{2}, \omega^{2}, \omega^{2}, \omega, 0, 1, \omega, \omega^{2}, \omega, 1\bigr)^{T}$ \\ $\mathbf{w}^{(4)}=\bigl(0, 1, \omega, \omega, \omega, 0, \omega^{2}, \omega^{2}, \omega^{2}, \omega, \omega, 0\bigr)^{T}$ \\ $\mathbf{w}^{(5)}=\bigl(1, \omega, \omega, \omega, 0, \omega^{2}, \omega^{2}, \omega^{2}, \omega, \omega, 0, 0\bigr)^{T}$} & 5& $[[110,86,4]]_{2}$ \\
26 &$[[27,1,8]]_{3}$ & $[26,13,8]_{9}$ & $C0 * C1 * C2 * C5$ & \makecell[l]{$\mathbf{w}=\bigl(1, 2, 2, 1, 0, 0, 2, 0, 0, 0, 0, 1, 1\bigr)^{T}$} & 1& $[[27,1,8]]_{3}$ \\

27 &$[[42,16,8]]_{3}$ & $[41,28,8]_{9}$ & $C0 * C1 * C2 * C4$ & \makecell[l]{$\mathbf{w}=\bigl(1, \omega^{7}, 0, 0, 0, \omega^{5}, \omega, \omega^{5}, 0, 0, 0, \omega^{7}, 1\bigr)^{T}$} & 1& $[[42,16,8]]_{3}$ \\
28 &$[[42,24,6]]_{3}$ & $[41,32,6]_{9}$ & $C0 * C1 * C2$ & \makecell[l]{$\mathbf{w}=\bigl(1, \omega^{5}, 1, \omega, \omega^{3}, \omega, 1, \omega^{5}, 1\bigr)^{T}$} & 1& $[[42,24,6]]_{3}$ \\
29 &$[[74,48,6]]_{3}$ & $[73,60,6]_{9}$ & $C0 * C13 * C21$ & \makecell[l]{$\mathbf{w}=\bigl(1, \omega, 2, \omega^{2}, \omega^{7}, \omega^{7}, \omega^{5}, \omega^{7}, \omega^{7}, \omega^{2}, 2, \omega, 1\bigr)^{T}$} & 1& $[[74,48,6]]_{3}$ \\

30 &$[[92,78,5]]_{3}$ & $[91,84,5]_{9}$ & $C0 * C1 * C2* C46 * C47$ & \makecell[l]{$\mathbf{w}=\bigl(1, \omega^{5}, 0, 2, \omega, \omega^{6}, 1\bigr)^{T}$} & 1& $[[92,78,4]]_{3}$ \\
31 &$[[122,100,5]]_{3}$ & $[121,110,5]_{9}$ & $BCH$ & \makecell[l]{$\mathbf{w}=\bigl(1, 1, 2, 1, 2, 2, 2, 1, 1, 2, 1\bigr)^{T}$} & 1& $--$ \\

32 &$[[45,25,6]]_{4}$ & $[39,29,6]_{16}$ & $C6*C8*C12*C13$ & \makecell[l]{$\mathbf{w}^{(1)}=\bigl(1, \omega^{10}, 1, \omega^{13}, 0, \omega^{13},\omega^{5},0,\omega,\omega^{12} \bigr)^{T}$\\$\mathbf{w}^{(2)}=\bigl(0, \omega^{3}, 1, \omega^{5},  \omega^{13},\omega^{14},\omega^{11},0,\omega^{2},\omega^{10} \bigr)^{T}$\\$\mathbf{w}^{(3)}=\bigl(0,  1,0, \omega^{6},  \omega^{13},\omega^{4},\omega^{5},\omega^{8},\omega^{4},\omega^{14} \bigr)^{T}$  \\$\mathbf{w}^{(4)}=\bigl(0,  0,0, 1,\omega^{7},  \omega^{9},\omega^{12},\omega^{12},\omega^{13},\omega^{12},\omega^{14} \bigr)^{T}$
\\$\mathbf{w}^{(5)}=\bigl(0,  0,0, 0,1,  \omega,\omega^{11},\omega^{11},\omega^{13},1 \bigr)^{T}$\\$\mathbf{w}^{(6)}=\bigl(0,0,0, 0,0,  1,1,\omega^{6},\omega^{2},\omega^{8} \bigr)^{T}$
} 
& 1& $[[45,25,6]]_{4}$ \\
33 &$[[42,18,5]]_{4}$ & $[39,27,5]_{16}$ & $C8*C14*C19*C21$   & \makecell[l]{$\mathbf{w}^{(1)}=\bigl(1, 1, \omega^{11}, \omega^{14}, \omega^{4}, 0, \omega^{2},\omega,\omega^{14},\omega^{7}, \omega^{9},\omega^{8}\bigr)^{T}$\\$\mathbf{w}^{(2)}=\bigl(0, \omega^{12}, 1, \omega^{14},  \omega^{6},\omega^{9},\omega^{14},\omega,\omega^{8},\omega,\omega^{9},\omega \bigr)^{T}$\\$\mathbf{w}^{(3)}=\bigl(0,  1,0, \omega^{7},  \omega^{7},\omega,\omega^{14},\omega^{2},\omega^{7},1,\omega^{8},\omega^{10} \bigr)^{T}$} & 1& $[[42,18,5]]_{4}$ \\

\hline
\end{tabular}
\end{table}

\section{Conclusion}

A recurring obstacle in the construction of stabilizer quantum codes from classical linear codes is the Hermitian self-orthogonality requirement $\mathcal{C}^{\perp_h}\subseteq\mathcal{C}$. In this paper we have introduced a Jordan-canonical-form framework that removes this obstruction for arbitrary $[n,k,d]_{q^2}$ linear codes by explicitly extending the parity-check matrix $H$ with $r=\Rank(HH^\dagger)$ additional columns. The decomposition $A=HH^\dagger=PJ_AP^{-1}$ turns the rank-reduction task into a linear-algebraic problem on the Jordan basis $W=P^{-1}$: each extension step selects a column of $W$ that satisfies the active condition $A\mathbf{w}_j\neq\mathbf{0}$, thereby decreasing the rank of the Hermitian product matrix by exactly one. After $r$ steps the augmented matrix $H_r$ generates a Hermitian self-orthogonal code with parameters $[n+r,n-k]_{q^2}$, and the Hermitian construction yields a $q$-ary quantum stabilizer code with parameters $[[n+r,2k-n+r,\ge d]]_q$.

Theoretically, the proposed method differs from earlier extension constructions in two important respects. First, the Jordan basis $W=P^{-1}$ provides an explicit, deterministic supply of candidate extension vectors; it does not rely on special algebraic structures of the underlying classical code, nor does it require ad hoc choices of extension vectors. Second, Theorem~\ref{thm:distance_preservation} gives an exact, basis-free criterion---the active condition~(C1) together with the affine coset-leader condition~\eqref{eq:exactcrit}---that decides precisely when the dual distance is preserved at each step, with the former nullspace condition \textup{(C2$'$)} retained as a fast search sieve (Remark~\ref{rem:C2prime}). We also clarified the nilpotent case, where the naive characterization $\mathbf{w}_j\in\im(A)\setminus\Ker(A)$ collapses because $\im(A)\subseteq\Ker(A)$; in this case condition~(C1) must be checked directly on the Jordan basis.

The practical value of the framework is demonstrated by the record quantum codes obtained from cyclic, quasi-cyclic, and BCH codes over $\F_{q^{2}}$, several of which improve the best-known parameters in Grassl's tables. These examples show that the Jordan extension not only enlarges the set of classical codes eligible for quantum Construction X, but also preserves the minimum distance without relying on case-by-case algebraic analysis.

Several directions remain open. The Jordan decomposition must be recomputed at each extension step, so a more efficient update rule for the Jordan basis as the matrix grows would be valuable. It would also be natural to extend the same rank-reduction principle to other inner products relevant for quantum codes, such as the Euclidean and symplectic inner products, and to investigate whether the distance-preservation criterion can be sharpened into a necessary and sufficient condition. Finally, combining the Jordan extension with propagation rules and with families of classical codes whose Hermitian hull is well understood may lead to further systematic improvements of quantum code tables.


%



\section*{Acknowledgment}

The authors would like to thank Ruihu Li for the suggestions on our manuscript, which improve the manuscript significantly.
This work is supported by the National Natural Science Foundation of China under Grant No.U21A20428, Natural Science Foundation of Shaanxi under Grant No.2025-JC-YBQN-070.

\ifCLASSOPTIONcaptionsoff
  \newpage
\fi



%
%
%
\bibliographystyle{IEEEtran}
\bibliography{IEEEexample}

@article{shor1995scheme,
  author = {P. M. Shor},
  title = {Scheme for Reducing Decoherence in Quantum Computer Memory},
  journal = {Phys. Rev. A},
  volume = {52},
  year = {1995},
  pages = {2493--2496}
}

@article{Steane1996,
  author = {A. M. Steane},
  title = {Multiple-Particle Interference and Quantum Error Correction},
  journal = {Proc. R. Soc. Lond. A},
  volume = {452},
  year = {1996},
  pages = {2551--2577}
}

@article{Calderbank1998,
  author = {A. R. Calderbank and E. M. Rains and P. W. Shor and N. J. A. Sloane},
  title = {Quantum Error Correction Via Codes Over {GF}(4)},
  journal = {IEEE Trans. Inf. Theory},
  volume = {44},
  year = {1998},
  pages = {1369--1387}
}

@article{Rains1999,
  author = {E. M. Rains},
  title = {Nonbinary Quantum Codes},
  journal = {IEEE Trans. Inf. Theory},
  volume = {45},
  year = {1999},
  pages = {1827--1832}
}

@article{Ashikhmin2001,
  author = {A. Ashikhmin and E. Knill},
  title = {Nonbinary Quantum Stabilizer Codes},
  journal = {IEEE Trans. Inf. Theory},
  volume = {47},
  year = {2001},
  pages = {3065--3072}
}

@article{Ketkar2006,
  author = {A. Ketkar and A. Klappenecker and S. Kumar and P. K. Sarvepalli},
  title = {Nonbinary Stabilizer Codes Over Finite Fields},
  journal = {IEEE Trans. Inf. Theory},
  volume = {52},
  year = {2006},
  pages = {4892--4914}
}

@article{Kai2014ConstacyclicCA,
  author = {X. Kai and S. Zhu and P. Li},
  title = {Constacyclic Codes and Some New Quantum {MDS} Codes},
  journal = {IEEE Trans. Inf. Theory},
  volume = {60},
  number = {4},
  year = {2014},
  pages = {2080--2086}
}

@article{feng2004finite,
  author = {K. Feng and Z. Ma},
  title = {A Finite {Gilbert-Varshamov} Bound for Pure Stabilizer Quantum Codes},
  journal = {IEEE Trans. Inf. Theory},
  volume = {50},
  number = {12},
  year = {2004},
  pages = {3323--3325}
}

@article{Li2019NewQC,
  author = {R. Li and J. Wang and Y. Liu and G. Guo},
  title = {New Quantum Constacyclic Codes},
  journal = {Quantum Inf. Process.},
  volume = {18},
  year = {2019},
  pages = {1--23}
}

@article{Guardia2012OnTC,
  author = {G. Guardia},
  title = {On the Construction of Nonbinary Quantum {BCH} Codes},
  journal = {IEEE Trans. Inf. Theory},
  volume = {60},
  number = {3},
  year = {2014},
  pages = {1528--1535}
}

@article{galindo2018quasi,
  author = {C. Galindo and F. Hernando and R. Matsumoto},
  title = {Quasi-Cyclic Constructions of Quantum Codes},
  journal = {Finite Fields Appl.},
  volume = {52},
  year = {2018},
  pages = {261--280}
}

@article{lv2019new,
  author = {J. Lv and R. Li and J. Wang},
  title = {New Binary Quantum Codes Derived From One-Generator Quasi-Cyclic Codes},
  journal = {IEEE Access},
  volume = {7},
  year = {2019},
  pages = {85782--85785}
}

@article{Guan2022QC,
  author = {C. Guan and R. Li and L. Lu and Y. Liu and H. Song},
  title = {On Construction of Quantum Codes With Dual-Containing Quasi-Cyclic Codes},
  journal = {Quantum Inf. Process.},
  volume = {21},
  year = {2022},
  pages = {263}
}

@article{daskalov2003new,
  author = {R. Daskalov and P. Hristov},
  title = {New Binary One-Generator Quasi-Cyclic Codes},
  journal = {IEEE Trans. Inf. Theory},
  volume = {49},
  number = {7},
  year = {2003},
  pages = {3001--3005}
}

@article{Sguin2004ACO,
  author = {G. S{\'e}guin},
  title = {A Class of 1-Generator Quasi-Cyclic Codes},
  journal = {IEEE Trans. Inf. Theory},
  volume = {50},
  number = {7},
  year = {2004},
  pages = {1745--1753}
}

@article{Ling2005OnTA,
  author = {S. Ling and P. Sol{\'e}},
  title = {On the Algebraic Structure of Quasi-Cyclic Codes {III}: Generator Theory},
  journal = {IEEE Trans. Inf. Theory},
  volume = {51},
  number = {8},
  year = {2005},
  pages = {2692--2700}
}

@article{Barbier,
  author = {M. Barbier and C. Chabot and G. Quintin},
  title = {On Quasi-Cyclic Codes as a Generalization of Cyclic Codes},
  journal = {Finite Fields Appl.},
  volume = {18},
  year = {2012},
  pages = {904--919}
}

@article{Aydin2001TheSO,
  author = {N. Aydin and I. Siap and D. K. Ray-Chaudhuri},
  title = {The Structure of 1-Generator Quasi-Twisted Codes and New Linear Codes},
  journal = {Des. Codes Cryptogr.},
  volume = {24},
  year = {2001},
  pages = {313--326}
}

@article{Grassl2020AlgebraicQC,
  author = {M. Grassl},
  title = {Algebraic Quantum Codes: Linking Quantum Mechanics and Discrete Mathematics},
  journal = {Int. J. Comput. Math. Comput. Syst. Theory},
  volume = {6},
  year = {2020},
  pages = {243--257}
}

@article{Lisonek2014,
  author = {P. Lisonek and V. Singh},
  title = {Quantum Codes From Nearly Self-Orthogonal Quaternary Linear Codes},
  journal = {Des. Codes Cryptogr.},
  volume = {73},
  year = {2014},
  pages = {417--424}
}

@incollection{Degwekar,
  author = {A. Degwekar and K. Guenda and T. A. Gulliver},
  title = {Extending Construction {X} for Quantum Error-Correcting Codes},
  booktitle = {Coding Theory and Applications},
  publisher = {Springer},
  year = {2015},
  pages = {141--152}
}

@inproceedings{ezerman2019good,
  author = {M. F. Ezerman and S. Ling and B. {\"O}zkaya and P. Sol{\'e}},
  title = {Good Stabilizer Codes From Quasi-Cyclic Codes Over {$F_4$} and {$F_9$}},
  booktitle = {Proc. IEEE Int. Symp. Inf. Theory},
  year = {2019},
  pages = {2898--2902}
}

@article{Ezerman,
  author = {M. F. Ezerman and M. Grassl and S. Ling and F. {\"O}zbudak and B. {\"O}zkaya},
  title = {Characterization of Nearly Self-Orthogonal Quasi-Twisted Codes and Related Quantum Codes},
  journal = {IEEE Trans. Inf. Theory},
  volume = {71},
  number = {1},
  year = {2025},
  pages = {499--517}
}

@article{Hu2023QuantumEC,
  author = {P. Hu and X. Liu},
  title = {Quantum Error-Correcting Codes From the Quantum Construction {X}},
  journal = {Quantum Inf. Process.},
  volume = {22},
  year = {2023},
  pages = {1--19}
}

@article{Wang2020,
  author = {J. Wang and R. Li and J. Lv and H. Song},
  title = {A New Method of Constructing Binary Quantum Codes From Arbitrary Quaternary Linear Codes},
  journal = {IEEE Commun. Lett.},
  volume = {24},
  number = {2},
  year = {2020},
  pages = {472--476}
}

@article{lv2020explicit,
  author = {J. Lv and R. Li and J. Wang},
  title = {An Explicit Construction of Quantum Stabilizer Codes From Quasi-Cyclic Codes},
  journal = {IEEE Commun. Lett.},
  volume = {24},
  number = {7},
  year = {2020},
  pages = {1067--1071}
}

@book{Horn2013,
  author = {R. A. Horn and C. R. Johnson},
  title = {Matrix Analysis},
  edition = {2nd},
  publisher = {Cambridge Univ. Press},
  address = {Cambridge, U.K.},
  year = {2013}
}

@book{Macwilliams,
  author = {F. J. MacWilliams and N. J. A. Sloane},
  title = {The Theory of Error-Correcting Codes},
  publisher = {North-Holland},
  address = {Amsterdam, The Netherlands},
  year = {1977}
}

@book{Huffman,
  author = {W. C. Huffman and V. Pless},
  title = {Fundamentals of Error-Correcting Codes},
  publisher = {Cambridge Univ. Press},
  address = {Cambridge, U.K.},
  year = {2003}
}

@article{RLi,
  author = {R. Li and G. Xu and L. Lu},
  title = {Decomposition of Defining Set of {BCH} Codes and Its Applications},
  journal = {J. Air Force Eng. Univ. (Nat. Sci. Ed.)},
  volume = {14},
  number = {2},
  year = {2013},
  pages = {86--89},
  note = {(in Chinese)}
}

@article{bosma1997magma,
  author = {W. Bosma and J. Cannon and C. Playoust},
  title = {The {M}agma Algebra System {I}: The User Language},
  journal = {J. Symb. Comput.},
  volume = {24},
  year = {1997},
  pages = {235--265}
}

@electronic{Grassl:codetables,
  author = {M. Grassl},
  title = {Bounds on the Minimum Distance of Linear Codes and Quantum Codes},
  url = {http://www.codetables.de},
  year = {2008}
}

%








\end{document}